\newtheorem{theorem}{Theorem}
\newtheorem{lemma}{Lemma}
\newtheorem{definition}{Definition}
\newtheorem{pruning}{Pruning Rule}
\def\header{\vspace{2mm} \noindent}
\def\tblcapup{\vspace{0mm}}
\def\tblcapdown{\vspace{2mm}}
\def\tbldown{\vspace{-0mm}}
\newcommand{\pushright}[1]{\ifmeasuring@#1\else\omit\hfill$\displaystyle#1$\fi\ignorespaces}
\newcommand{\pushleft}[1]{\ifmeasuring@#1\else\omit$\displaystyle#1$\hfill\fi\ignorespaces}
\def\done{\hspace*{\fill} {$\square$}}
\def\e{\varepsilon}
\def\s{\tilde{s}}
\def\O{O}
\def\E{\mathbb{E}}
\def\scw{\sqrt{c}}
\begin{document}
\begin{sloppy}

\title{ProbeSim: Scalable Single-Source and Top-{\LARGE $\boldsymbol
    k$} SimRank Computations on Dynamic Graphs}

\numberofauthors{1}
\author{\alignauthor
\makebox[6.5in][c] {Yu Liu$^{1}$, Bolong
Zheng$^{2}$,  Xiaodong He$^{1}$,
Zhewei Wei$^{1}$\titlenote{Corresponding author}, Xiaokui Xiao$^{3}$, Kai Zheng$^{4}$, Jiaheng Lu$^{5}$} \\
\affaddr{$^{1}$School of Information, Renmin University of China, China}\\
\affaddr{$^{2}$School of Data and Computer Science, Sun Yat-sen University}\\
\affaddr{$^{3}$School of Computer Science and Engineering, Nanyang
  Technological University, Singapore}\\
\affaddr{$^{4}$School of Computer Science and
Engineering and Big Data Research
Center, University of Electronic
Science and Technology of China}\\
\affaddr{$^5$ Department of Computer Science, University of Helsinki}
\email{$^{1}$\{yu.liu, hexiaodong\_1993, zhewei\}@ruc.edu.cn $\qquad$ $^{2}$zblchris@gmail.com}
\email{$^{3}$xkxiao@ntu.edu.sg $\qquad$
  $^{4}$zhengkai@uestc.edu.cn $\qquad$ $^{5}$jiahenglu@gmail.com}\\
}

\vldbDOI{https://doi.org/10.14778/3136610.3136612}


\maketitle

%

\begin{abstract}
Single-source and top-$k$ SimRank queries are two important types of similarity search in graphs with numerous applications in web mining, social network analysis, spam detection, etc. A plethora of techniques have been proposed for these two types of queries, but very few can efficiently support similarity search over large dynamic graphs, due to either significant preprocessing time or large space overheads. 

This paper presents {\em ProbeSim}, an  {\em index-free}  algorithm for single-source and top-$k$ SimRank queries that provides a non-trivial theoretical guarantee in the absolute error of query results. {\em ProbeSim} estimates SimRank similarities without precomputing any indexing structures, and thus can naturally support {\em real-time} SimRank queries on {\em dynamic} graphs. Besides the theoretical guarantee, {\em ProbeSim} also offers satisfying practical efficiency and effectiveness due to non-trivial optimizations. We conduct extensive experiments on a number of benchmark datasets, which demonstrate that our solutions outperform the existing methods in terms of efficiency and effectiveness. Notably, our experiments include the first empirical study that evaluates the effectiveness of SimRank algorithms on graphs with billion edges, using the idea of {\em pooling}.
\end{abstract}


\section{Introduction} \label{sec:intro}

{\em SimRank} \cite{JW02} is a classic measure of the similarities of graph nodes, and it has been adopted in numerous applications such as web mining \cite{Jin11}, social network analysis \cite{NK07}, and spam detection \cite{SH11}. The formulation of SimRank is based on two intuitive statements: (i) a node is most similar to itself, and (ii) two nodes are similar if their neighbors are similar. Specifically, given two nodes $u$ and $v$ in a graph $G$, the SimRank similarity of $u$ and $v$, denoted as $s(u, v)$, is defined as:
\begin{equation} \label{eqn:intro-simrank}
s(u, v) =
\begin{cases}
1, & \text{if $u = v$}\\
{\displaystyle \frac{c}{|I(u)| \cdot |I(v)|}\hspace{-1mm} \sum_{x \in I(u),y \in I(v)} \hspace{-2mm}{s(x, y)} }, & \text{otherwise.}
\end{cases}
\end{equation}
where $I(u)$ denotes the set of in-neighbors of $u$, and $c \in (0,1)$ is a decay factor typically set to 0.6 or 0.8 \cite{JW02,LVGT10}.

Computing SimRank efficiently is a non-trivial problem that has been studied extensively in the past decade. The early effort \cite{JW02} focuses on computing the SimRank similarities of all pairs of nodes in $G$, but the proposed {\em Power Method} algorithm incurs prohibitive overheads when the number $n$ of nodes in $G$ is large, as there exists $\O(n^2)$ node pairs in $G$. To avoid the inherent $\O(n^2)$ costs in all-pair SimRank computation, the majority of the subsequent work considers two types of SimRank queries instead 
\begin{itemize}[topsep = 6pt, parsep = 6pt, itemsep = 0pt, leftmargin=18pt]

\item Single-source SimRank query: given a query node $u$, return $s(u, v)$ for every node $v$ in $G$;

\item Top-$k$ SimRank query: given a query node $u$ and a parameter $k \ge 1$, return the $k$ nodes $v$ with the largest $s(u, v)$.
\end{itemize}

Existing techniques \cite{KMK14,MKK14,TX16,FRCS05,LeeLY12,SLX15,YuM15b,LiFL15} for these two types of queries, however, suffer from two major deficiencies. First, most methods \cite{MKK14,LeeLY12,LiFL15,SLX15,YuM15b} fail to provide any worst-case guarantee in terms of the accuracy of query results, as they either rely on heuristics or adopt an incorrect formulation of SimRank. Second, the existing solutions \cite{TX16,FRCS05,LiFL15} with theoretical accuracy guarantees, all require constructing index structures on the input graphs with a preprocessing phase, which incurs significant space and pre-computation overheads. The only exception is the Monte Carlo method in \cite{FRCS05} which provides an index-free solution with  theoretical accuracy guarantees. While pioneering, unfortunately, this method entails considerable query overheads, as shown in~\cite{SLX15}.




\vspace{-1mm}
\header
\noindent{\bf Motivations. } In this paper, we aim to develop an {\em index-free} solution for single-source and top-$k$ SimRank queries with provable accuracy guarantees. Our motivation for devising algorithms without preprocessing is two-fold. First, index-based SimRank methods often have difficulties handling {\em dynamic} graphs. For example,  {\em SLING}~\cite{TX16}, which is the state-of-art indexing-based SimRank algorithm for static graphs, requires its index structure to be rebuilt from scratch whenever the input graph is updated, and its index construction requires several hours even on medium-size graphs with 1 million nodes. This renders it infeasible for real-time queries on dynamic graphs. In contrast, index-free techniques can naturally support real-time SimRank queries on graphs with frequent updates. To the best of our knowledge, the {\em TSF} method \cite{SLX15} is the only indexing approach that allows efficient update.  However, {\em TSF} is unable to provide any worst-case guarantees in terms of the accuracy of the SimRank estimations, which leads to unsatisfying empirical effectiveness, as shown in~\cite{zhangexperimental} and in our experiments.


Our second motivation is that index-based SimRank methods often fail to scale to large graphs due to their space overheads. For example, {\em TSF} requires an index space that is two to three orders of magnitude larger than the input graph size, and our empirical study shows that it runs out of 64GB memory for graphs over 1GB in our experiments. This renders it only applicable on small- to medium-size datasets. Further, if one considers to move the large index to the external memory, this idea would incur expensive preprocessing and query costs, as shown in our empirical study. In contrast,  an index-free solution proposed in this paper does not increase the size of an original graph.

\vspace{-1mm}
\header
{\bf Our contributions.} This paper presents an in-depth study on single-source and top-$k$ SimRank queries, and makes the following contributions. First, for single-source and top-$k$ SimRank queries, we propose an algorithm with provable approximation guarantees. In particular, given two constants $\e_a$ and $\delta$, our algorithm ensures that, with at least $1-\delta$ probability, each SimRank similarity returned has at most $\e_a$ absolute error. The algorithm runs in $O(\frac{n}{{\e_a}^2} \log \frac{n}{\delta})$ expected time, and it does not require any index structure to be pre-computed on the input graph. Our algorithm matches the state-of-the-art index-free solution in terms of time complexity, but it offers much higher practical efficiency due to an improved algorithm design and several non-trivial optimizations.




Our second contribution is a large set of experiments that evaluate the proposed solutions with the state of the art on benchmark datasets. Most notably, we present the first empirical study that evaluates the effectiveness of SimRank algorithms on graphs with billion edges, using the idea of {\em pooling} borrowed from the information retrieval community. The results demonstrate that our solutions significantly outperform the existing methods in terms of both efficiency and effectiveness. In addition, our solutions are more scalable than the state-of-the-art index-based techniques, in that they can handle large graphs on which the existing solutions require excessive space and time costs in preprocessing.




\section{Preliminaries} \label{sec:prelim}

\subsection{Problem Definition} \label{sec:prelim-def}
Table~\ref{tbl:def-notation} shows the notations that are frequently used in the remainder of the paper.
Let $G = (V, E)$ be a directed simple graph with $|V| = n$ and $|E| = m$. We aim to answer {\em approximate} single-source and top-$k$ SimRank queries, defined as follows:
\begin{definition}[Approximate Single-Source Queries] \label{def:prelim-single-source}
Given a node $u$ in $G$, an absolute error threshold $\e_a$, and a failure probability $\delta$, an approximate single-source SimRank query returns an estimated value $\s(u, v)$ for each node $v$ in $G$, such that
$$\left\lvert \s(u, v) - s(u, v) \right\rvert \le \e_a$$
holds for any $v$ with at least $1 - \delta$ probability. \done
\end{definition}

\begin{definition}[Approximate Top-$k$ Queries] \label{def:prelim-topk}
Given a node $u$ in $G$, a positive integer $k < n$, an error threshold $\e_a$, and a failure probability $\delta$, an approximate top-$k$ SimRank query returns a sequence of $k$ nodes $v_1, v_2, \ldots, v_k$ and an estimated value $\s(u, v_i)$ for each $v_i$, such that the following equations hold with at least $1 - \delta$ probability for any $i \in [1, k]$:
\begin{align*}
s(u, v_i) \ge s(u, v'_i) - \e_a
\end{align*}
where $v'_i$ is the node in $G$ whose SimRank similarity to $u$ is the $i$-th largest. \done
\end{definition}
Essentially, the approximate top-$k$ query for node $u$ returns $k$ nodes $v_1, \ldots, v_k$ such that their {\em actual} SimRank similarities with respect to $u$ are $\e$-close to those of the actual top-$k$ nodes. It is easy to see that an approximate single-source algorithm can be extended to answer the approximate top-$k$ queries, by sorting the SimRank estimations $\{\s(u, v) \mid v \in V\}$ and output the top-$k$ results. Therefore, our main focus is on designing efficient and scalable algorithms that answer approximate single-source queries with $\e_a$ guarantee.


\subsection{SimRank with Random Walks} \label{sec:prelim-randomwalk}

In the seminal paper \cite{JW02} that proposes SimRank, Jeh and Widom show that there is an interesting connection between SimRank similarities and random walks. In particular, let $u$ and $v$ be two nodes in $G$, and $W(u)$ (resp.\ $W(v)$) be a random walk from $u$ that follows a randomly selected incoming edge at each step. Let $t$ be the smallest positive integer $i$ such that the $i$-th nodes of $W_u$ and $W_v$ are identical. Then, we have
\begin{equation} \label{eqn:prelim-rw1}
s(u, v) = \E[c^{t-1}],
\end{equation}
where $c$ is the decay factor in the definition of SimRank (see Equation~\ref{eqn:intro-simrank}).

Subsequently, it is shown in \cite{TX16} that Equation~\ref{eqn:prelim-rw1} can be simplified based on the concept of {\em $\scw$-walks}, defined as follows.
\begin{definition}[$\scw$-walks] \label{def:prelim-scw}
Given a node $u$ in $G$, an $\scw$-walk from $u$ is a random walk that follows the incoming edges of each node and stops at each step with $1-\scw$ probability. \done
\end{definition}
A $\scw$-walk from $u$ can be generated as follows. Starting from $v=u$, when
 visiting node $v$, we generate a random number $r$ in $[0,1]$ and check whether $r \le 1-  \sqrt{c}$. If so, we terminate the walk at $v$; otherwise, we select one of the in-neighbors of $v$ uniformly at random and proceed to that node.

Let $W'(u)$ and $W'(v)$ be two $\scw$-walks from two nodes $u$ and $v$, respectively. We say that two $\scw$-walks {\em meet}, if there exists a positive integer $i$ such that the $i$-th nodes of $W'(u)$ and $W'(v)$ are the same. Then, according to \cite{TX16},
\begin{equation} \label{eqn:prelim-rw2}
s(u, v) = \Pr\left[\textrm{$W'(u)$ and $W'(v)$ meet}\right].
\end{equation}
Based on Equation~\ref{eqn:prelim-rw2}, one may estimate $s(u, v)$ using a Monte Carlo approach \cite{FRCS05,TX16} as follows. First, we generate $r$ pairs of $\scw$-walks, such that the first and second walks in each pair are from $u$ and $v$, respectively. Let $r_{meet}$ be the number of $\scw$-walk pairs that meet. Then, we use $r_{meet}/r$ as an estimation of $s(u, v)$. By the Chernoff bound, it can be shown that when
$r \ge \frac{1}{2{\e_a}^2}\mathrm{log}\frac{1}{\delta},$
with at least $1 - \delta$ probability we have
$\left| \frac{r_{meet}}{r} - s(u, v) \right| \le \e_a.$
In addition, the expected time required to generate $r$ $\scw$-walks is $O(r)$, since each $\scw$-walk has $\frac{1}{1-\scw}$ nodes in expectation.

The above Monte Carlo approach can also be straightforwardly adopted to answer any approximate single-source SimRank query from a node $u$. In particular, we can generate $r$ $\scw$-walks from each node, and then use them to estimate $s(u, v)$ for every node $v$ in $G$. This approach is simple and does not require any pre-computation, but it incurs considerable query overheads, since it requires generating a large number of $\scw$-walks from each node.

\subsection{Competitors}
\label{sec:compare}
\vspace{2mm}
\noindent{\bf The TopSim based algorithms.} To address the drawbacks of the Monte Carlo approach, Lee et al.\ \cite{LeeLY12} propose {\em TopSim-SM}, an index-free algorithm that answers top-$k$ SimRank queries by enumerating all short random walks from the query node.  More precisely, given a query node $u$ and a number $T$, {\em TopSim-SM} enumerates all the vertices that reach $u$ by at most $T$ hops, and treat them as potential meeting points. Then, {\em TopSim-SM} enumerates, for each meeting point $w$, the vertices that are reachable from $w$ within $T$ hops. Lee et al.\  \cite{LeeLY12} also propose two variants of {\em TopSim-SM}, named {\em Trun-TopSim-SM} and {\em Prio-TopSim-SM}, which trade accuracy for efficiency. In particular,  {\em Trun-TopSim-SM} omits the meeting points with large degrees, while {\em Prio-TopSim-SM} prioritizes the meeting points in a more sophisticated manner and explore only the high-priority ones.


For each node $v$ returned, {\em TopSim-SM} provides an estimated SimRank $s_T(u, v)$ that equals the SimRank value approximated using the {\em Power Method} \cite{JW02} with $T$ iterations. When $T$ is sufficiently large, $s_T(u, v)$ can be an accurate approximation of $s(u, v)$. However, Lee et al.\ \cite{LeeLY12} show that the query complexity of {\em TopSim-SM} is $O(d ^{2T})$ time, where $d$ is the average in-degree of the graph. As a consequence, Lee et al.\ \cite{LeeLY12} suggests setting $T = 3$ to achieve reasonable efficiency, in which case the absolute error in each SimRank score can be as large as $c^3$, where $c$ is the decay factor in the definition of SimRank \ref{eqn:intro-simrank}. Meanwhile, {\em Trun-TopSim-SM} and {\em Prio-TopSim-SM} does not provide any approximation guarantees even if $T$ is set to a large value, due to the heuristics that they apply to reduce the number of meeting points explored.




\vspace{2mm}
\noindent{\bf The {\em TSF} algorithm.} Very recently, Shao et al.\ \cite{SLX15} propose a two-stage random-walk sampling framework ({\em TSF}) for top-$k$ SimRank queries on dynamic graphs. Given a
parameter $R_g$, {\em TSF} starts by building $R_g$ {\em one-way graphs} as an index structure.  Each one-way graph is constructed by uniformly sampling \emph{one} in-neighbor from each vertex's in-coming edges. The one-way graphs are then used to simulate random walks during query processing.

To achieve high efficiency, {\em TSF} approximates the SimRank score of two nodes $u$ and $v$ as
$$\sum_{i} \Pr[\textrm{two $\sqrt{c}$-walks from $u$ and $v$ meet at the $i$-th step}],$$
which is an over estimation of the actual SimRank. (See Section 3.3 in \cite{SLX15}.) Furthermore, {\em TSF} assumes that every random walk in a one-way graph would not contain any cycle, which does not always hold in practice, especially for undirected graphs. (See Section 3.2 in \cite{SLX15}.) As a consequence, the SimRank value returned by {\em TSF} does not provide any theoretical error assurance.


\begin{table} [t]
\centering
\renewcommand{\arraystretch}{1.3}
\begin{small}
\tblcapup
\caption{Table of notations.}\label{tbl:def-notation}
\tblcapdown
 \begin{tabular} {|l|p{2.5in}|} \hline
   {\bf Notation}       &   {\bf Description}                                       \\ \hline
   $G$         &   the input graph                                                  \\ \hline
   $n, m$      &   the numbers of nodes and edges in $G$                            \\ \hline
   $I(v)$       &   the set of in-neighbors of a node $v$ in $G$                           \\ \hline
   $s(u, v)$    &   the SimRank similarity of two nodes $u$ and $v$ in $G$     \\ \hline
   $\s(u, v)$    & an estimation of $s(u, v)$                                    \\ \hline
   $W(u)$    & a $\scw$-walk from a node $u$                                    \\ \hline
   $c$          &   the decay factor in the definition of SimRank                   \\ \hline
   $\e_a$         &   the maximum absolute error allowed in SimRank computation          \\ \hline
   $\delta$     &   the failure probability of a Monte Carlo algorithm              \\ \hline
 \end{tabular}
\end{small}
\end{table}

\section{ProbeSim Algorithm} \label{sec:single}
\renewcommand{\E}{\mathrm{E}}
In this section, we present {\em ProbeSim}, an index-free
algorithm for approximate single-source and top-$k$ SimRank queries on
large graphs. Recall that an approximate single-source algorithm can
be extended to answer the approximate top-$k$ queries, by sorting the
SimRank estimations $\{\s(u, v) \mid v \in V\}$ and output the top-$k$
results. Therefore, the {\em ProbeSim} algorithm described in this section focuses on approximate single-source queries with $\e_a$ guarantee.
Before diving into the details of the algorithm, we
first give some high-level ideas of the algorithm.
\subsection{Rationale} \label{sec:single-over}
Let $W(u)$ and $W(v)$ be two $\scw$-walks from two nodes $u$ and $v$, respectively. Let $u_i$ be the $i$-th node in $W(u)$. (Note that $u_1 = u$.) By Equation~\ref{eqn:prelim-rw2},
\begin{align} \label{eqn:single-simrank-meet}
s(u, v) & = \Pr\left[\textrm{$W(u)$ and $W(v)$ meet}\right] \nonumber \\
& = \sum_{i} \Pr\left[\textrm{$W(u)$ and $W(v)$ first meet at $u_i$}\right].
\end{align}
In other words, for a given $W(u)=(u_1, u_2, \ldots)$, if we can estimate the probability that an $\scw$-walk from $v$ first meets $W(u)$ at $u_i$, then we can take the sum of the estimated probabilities over all $u_i$ as an estimation of $s(u, v)$. Towards this end, a naive approach is to generate a large number of $\scw$-walks from $v$, and then check the fraction of walks that first meet $W(u)$ at $u_i$. However, if $s(u, v)$ is small, then most of the $\scw$-walks is ``wasted'' since they would not meet $W(u)$. To address this issue, our idea is as follows: instead of sampling $\scw$-walks from each $v$ to see if they can ``hit'' any $u_i$, we start a graph traversal from each $u_i$ to identify any node $v$ that has a non-negligible probability to ``walk'' to $u_i$. Intuitively, this significantly reduces the computation cost since it may enable us to omit the nodes whose SimRank similarities to $u$ are smaller than a given threshold $\e_a$.

In what follows, we first explain the details of the traversal-based algorithm mentioned above, and then analyze its approximation guarantee and time complexity. For convenience, we formalize the concept of {\em first-meeting probability} as follows.
\begin{definition}[first-meeting probability]
\label{def:first_meet}
Given a reverse path $\mathcal{P}= (u_1,\ldots, u_i)$ and a node $v
\in V$, $v \neq u_1$, the first-meeting probability of $v$ with respect
to $\mathcal{P}$ is defined to be
$$P(v, \mathcal{P}) = \Pr_{W(v)}[v_i = u_i, v_{i-1} \neq u_{i-1},
\ldots, v_1 \neq u_1],$$
where  $W(v) = (v_1, \ldots, v_i, \ldots)$ is a random $\sqrt{c}$-walk
that starts at $v_1 =v$.
\end{definition}

Here, the subscript in $\Pr_{W(v)}$ indicates that the randomness arises from the choices of $\sqrt{c}$-walk $W(v)$. In the remainder of the paper, we will omit this subscript when the context is clear. 

\subsection{Basic algorithm} \label{sec:single-basic}
We describe our basic algorithm for the {\em ProbeSim} algorithm.
Given a node $u \in V$, a {\em sampling
error} parameter $\varepsilon$ and a failure probability $\delta$, the algorithm returns
$\mathcal{R}$, an hash\_set of $n-1$ nodes in $V \setminus \{u\}$ and their SimRank
estimations.
 For {\em EVERY} node $v \in
V$, $v \neq u$, algorithm~\ref{alg:single_source} returns an
estimated SimRank $\s(u, v)$ to the actual SimRank $s(u, v)$  with guarantee
${\Pr[|\s(u,v) - s(u, v)|\le \varepsilon] \ge 1 - \delta}$. Note that
the basic algorithm uses unbiased sampling to produce the estimators,  thus we can set
$\varepsilon_a = \varepsilon$.

\begin{algorithm}[t]
\begin{small}
\caption{Basic {\em ProbeSim} algorithm\label{alg:single_source}}
\KwIn{Directed graph $G=(V,E)$; $u \in V$; Error $\varepsilon$ and failure probability $\delta$ \\}
\KwOut{$\mathcal{R}=\{(v, \s(u,v)) \mid v \in V\}$, a hash\_set of
  size $n$ that maintains the SimRank estimations for
  each node $v \in V$}
$n_r \leftarrow
\frac{3c}{{\varepsilon}^2}\mathrm{log}\frac{n}{\delta}$; \\
\For{ $k=1$ to $n_r$}
{
Generate $\sqrt{c}$-walk $W_k(u)= (u=u_1, \ldots, u_\ell)$; \\
Initialize hash\_set $\mathcal{H}$;\\
\For{ $i= 2, \ldots, \ell$}
{
Set hash\_set $\mathcal{S} \leftarrow \mathsf{PROBE}((u_1, \ldots, u_i))$;\\
\For{each $(v, Score(v)) \in \mathcal{S}$}
{
\If{$(v, \s_k(u,v)) \in \mathcal{H}$}
{$\s_k(u,v) \leftarrow \s_k(u,v) +
 Score(v)$;}
\Else
{Insert $(v,  Score(v))$ to $\mathcal{H}$;}

}
}
\For{ each $(v, \s(u, v)) \in \mathcal{R}$}
{
\If{$\exists (v, \s_k(u, v)) \in \mathcal{H}$}
{
 $\s(u,v) \leftarrow \s(u,v)\cdot {k-1 \over k} + \s_k(u,v) \cdot {1
   \over k}$;
}
\Else
{
 $\s(u,v) \leftarrow \s(u,v)\cdot {k-1 \over k}$;
}
}
}

\Return $\mathcal{R}$;
\end{small}
\end{algorithm}


 The pseudo-code for the basic {\em ProbeSim} algorithm is
 illustrated in
Algorithm~\ref{alg:single_source}. The algorithm runs $n_r =
\frac{3c}{{\varepsilon}^2}\mathrm{log}\frac{n}{\delta}$ independent
trials (Line 1). For the $k$-th trial, the algorithm generates a
$\sqrt{c}$-walk $W_k(u) = (u=u_1, \ldots, u_\ell)$ (Lines
2-3), and invokes
the \textsf{PROBE} algorithm on partial $\sqrt{c}$-walk $W_k(u, i) = (u_1,
\ldots, u_i)$ for $i = 2, \ldots, \ell$ (Lines 5-6). The \textsf{PROBE} algorithm
computes a $Score(v)$ for each node $v \in V$. As we shall
see later, $Score(v)$  is equal to $P(v, W_k(u,i))$, the
first-meeting probability of $v$ with
respect to partial walk $W_k(u,i)$. Let $Score_i(v)$ denote the score computed by the
\textsf{PROBE} algorithm on partial $\sqrt{c}$-walk $W_k(u, i) $, for $i = 2, \ldots, \ell$.
The algorithm sums up all scores to form the estimator $\s_k(u, v)
=\sum_{i=2}^\ell Score_i(v)$ (Lines 7-11).

\begin{algorithm}[h]
\begin{small}
\caption{Deterministic $\mathsf{PROBE}$ algorithm\label{alg:Probe}}
\KwIn{A partial $\sqrt{c}$-walk $(u=u_1, \ldots,
  u_i)$}
\KwOut{$\mathcal{S}=\{(v, Score(v)) \mid  v \in V\}$, a hash\_set of
  nodes and their scores w.r.t. partial walk $W(u,i)$}
Initialize hash\_set $\mathcal{H}_j$ for $j=0,\ldots, i-1$;\\
Insert $(u_i,1)$ to $\mathcal{H}_0$; \\
\For{ $j=0$ to $i-2$}
{
\For {each $(x,Score(x))\in \mathcal{H}_j$}
{
\For {each $v \in \mathcal{O}(x)$ and $v \neq u_{i-j-1}$}
{
\If{$(v, Score(v)) \in \mathcal{H}_{j+1}$}
{$Score(v)  \leftarrow Score(v) +
  \frac{\sqrt{c}}{|I(v)|}  \cdot Score(x)$;}
\Else
{Insert $(v,\frac{\sqrt{c}}{|I(v)|}  \cdot Score(x))$ to $\mathcal{H}_{j+1}$;}
}
}
}
\Return $\mathcal{S} = \mathcal{H}_{i-1}$\;
\end{small}
\end{algorithm}
Finally, for each node $v$, we take the average over the $n_r$ independent
estimators to form the final estimator $\s(u,v) = {1\over n_r}
\sum_{k=1}^{n_r}\s_k(u,v)$. Note that if we take the average after all
$n_r$ trials finish, it would require $\Omega(n_r \cdot n)$ space
to store all the $\s_k(u,v)$ values. Thus, we dynamically update the average estimator
$\s(u,v) \in \mathcal{R}$ for each trial (Lines 12-16). After all $n_r$ trials
finishes, we return $\mathcal{R}$ as the SimRank estimators for each node $v
\in V$ (Line 17).


\vspace{2mm}
\noindent{\bf Deterministic \textsf{PROBE} Algorithm.}
We now give a simple deterministic \textsf{PROBE} algorithm for
computing the scores in Algorithm~\ref{alg:single_source}.
Given a partial $\sqrt{c}$-walk $W(u, i) =(u_1, \ldots, u_i)$ that
starts at $u=u_1$, the \textsf{PROBE} algorithm
outputs $\mathcal{S}=\{(v, Score(v)) \mid  v \neq u \in V\}$, a hash\_set of
nodes and their first-meeting probability with respect to reverse
path $W(u, i)$.

The pseudo-code for the algorithm is shown in
Algorithm~\ref{alg:Probe}. The algorithm initializes $i-1$ hash
  tables $\mathcal{H}_0, \ldots, \mathcal{H}_{i-1}$ (Line 1) and adds
  $(u_i, 1)$ to $\mathcal{H}_0$ (Line 2). In the
  $j$-th iteration, for each node $x$ in $\mathcal{H}_{j}$, the algorithm
  finds each out-neighbour $v \in \mathcal{O}(x)$, and checks
  if $(v, Score(v)) \in \mathcal{H}_{j+1}$ (Lines 3-5). If so, the algorithm  adds ${\sqrt{c}
    \over |I(v)|}\cdot Score(x)$ to $Score(v)$ (Lines 6-7). Otherwise,
  it adds $(v, \frac{\sqrt{c}}{|I(v)|}  \cdot
  Score(x))$ to $\mathcal{H}_{j+1}$ (Lines 8-9). We note that in this
  iteration, no score is added to $u_{i-j-1}$, which ensures
  that the walk $W(v)$ avoids $u_{i-j-1}$ at $v_{i-j-1}$ (Line 5).

The intuition of the \textsf{PROBE} algorithm is as follows. For the
ease of presentation, we let $Score(v, j)$ denote the score computed on the $j$-th
  iteration for node $v \in V$. One can show that $Score(v, j)$ is in fact equal to $P(v, (u_{i-j-1}, \ldots, u_i
 ) )$, the first-meeting probability of each
 node $v$ with respect to reverse path $(u_{i-j-1}, \ldots, u_i
 )$ . Consequently, after the $(i-1)$-th iteration, the algorithm
 computes $Score(v, i-1) =P(v, (u_{1}, \ldots, u_i
 ) )$, the first-meeting probability of each
 node $v$ with respect to reverse path $W(u, i)$. We will make this
 argument rigorous in the analysis.


\noindent{\bf Running Example for Algorithm~\ref{alg:single_source} and~\ref{alg:Probe}.}
Throughout the paper, we will use a toy graph in Figure
\ref{fig:examplegraph} to illustrate our algorithms and pruning
rules. For ease of presentation, we set the decay factor $c' = 0.25$
so that $\sqrt{c'} =0.5$. The SimRank values of each node to $a$ is
listed in  Table \ref{tbl:simrankscoreab}, which are computed by the Power Method
within $10^{-5}$ error.
\begin{table}[t]
\centering
\linespread{1.1}
\begin{small}
\vspace{-2mm} \caption{SimRank similarities with respect to node $a$.} 
\label{tbl:simrankscoreab}
\begin{tabular}{|@{ }c@{ }|@{ }c@{ }|@{ }c@{ }|@{ }c@{ }|@{ }c@{ }|@{ }c@{ }|@{ }c@{ }|@{ }c@{ }|@{ }c@{ }|}
\hline
$ $ & a & b & c & d & e & f & g & h  \\
\hline
$s(a,*)$ & 1.0 & 0.0096 & 0.049 & 0.131 & 0.070 & 0.041 & 0.051 & 0.051  \\
\hline
\end{tabular}
\end{small}

\end{table}


Suppose at the $k$-th trial,  a random $\sqrt{c}$-walk
$W(a)=(a_1, a_2, a_3, a_4) =(a,b, a,b)$ is
generated. Figure~\ref{fig:probe} illustrates the traverse process of the
deterministic \textsf{PROBE} algorithm.
For simplicity,
we only demonstrate the traverse process for partial walk $W(a, 4) = (a_1, a_2, a_3,
a_4)= (a,b,a,b)$, which is represented by the  right-most tree in Figure~\ref{fig:probe}. The algorithm first inserts $(b, 1)$ to $\mathcal{H}_0$. Following the out-edges
of $a_4 = b$, the algorithm finds $a$ and omits it as
$a_3=a$. Next, the algorithm finds $c$,
computes $Score(c, 1) = Score(b, 0)\cdot {\sqrt{c'} \over |I(c)|} = 1 \cdot
{0.5 \over 3} =0.167$, and insert $(c, 0.167)$ to
$\mathcal{H}_1$. Similarly, the algorithm finds $d$ and $e$, and
inserts $Score(d, 1)=\frac{0.5}{1}=0.5$ and
$Score(e, 1)=\frac{0.5}{2}=0.25$ to $\mathcal{H}_1$. For the next
iteration, we find $a$, $f$, $g$ and $h$ from the out-neighbours of $c$,
$d$ and $e$. Note that $b$ is omitted due to the fact that $a_2 =
b$. The score of $f$ at this iteration is computed by
$Score(f, 2) = (Score(c, 1)
+Score(d, 1)+ Score(e, 1) )\cdot {\sqrt{c'} \over |I(f)|}
= (0.167+0.5+0.25) \cdot {0.5 \over4}= 0.115.$

Similarly, the algorithm computes $Score(a, 2) = 0.042$,
$Score(g, 2) = 0.153$ and $Score(h, 2) = 0.153$, and insert $(a, 0.042)$, $(f, 0.115)$, $(g,
0.153)$ and $(h,0.153)$ to $\mathcal{H}_2$. Finally, for the last
iteration, the algorithm computes $Score(b, 3)=0.011,
Score(c,3)=0.033, Score(e, 3)=0.038$ and $Score(f, 3)=0.019$, and returns
$\mathcal{H}_{3} = \{(b, 0.011), (c, 0.033), (e, 0.038), (f, 0.019)\}$
as the results.


\begin{figure}[t]
\hspace{-5mm}
\begin{minipage}[t]{0.40\linewidth}
\centering
\includegraphics[width=25mm]{./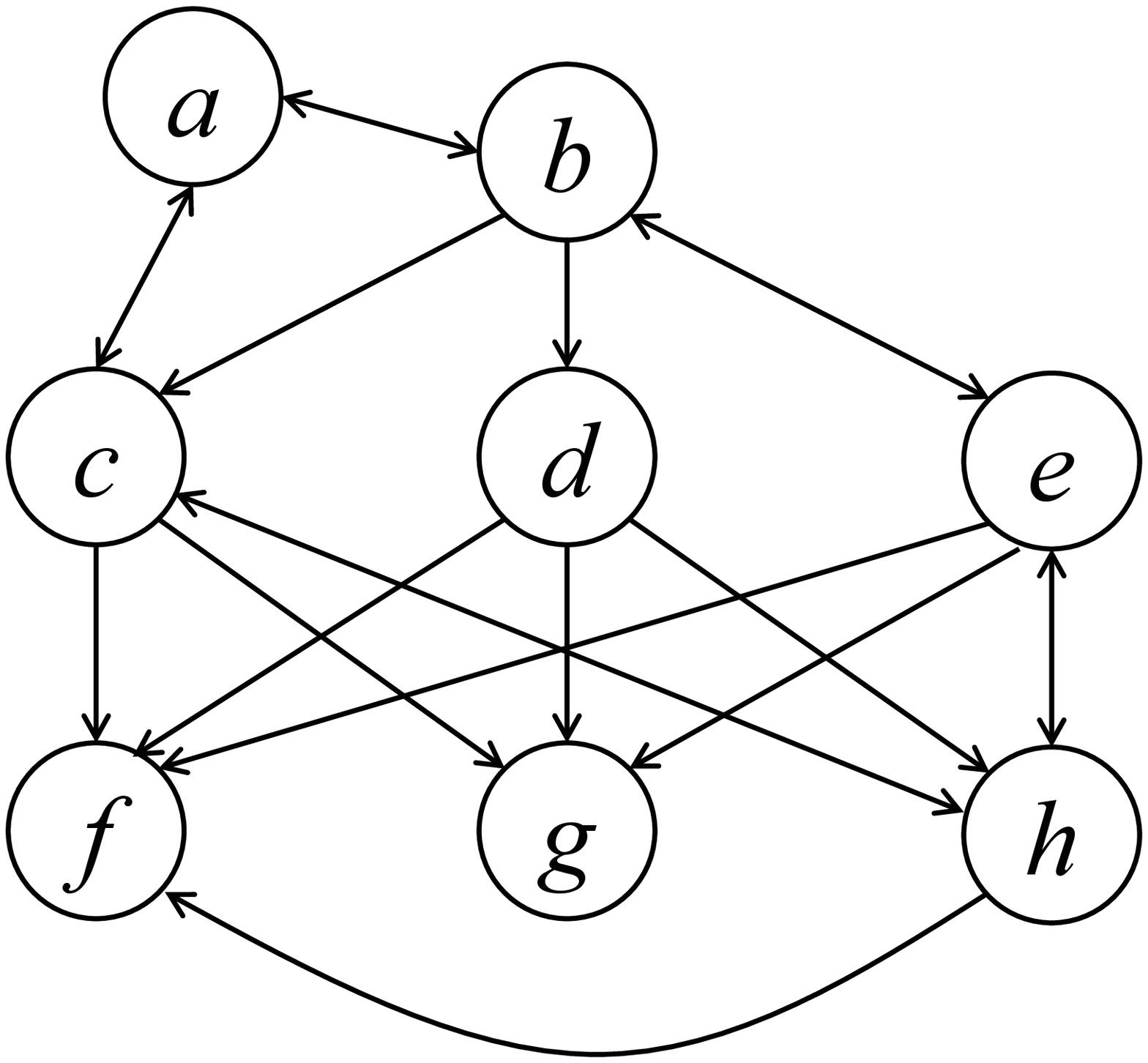}
\vspace{-2mm}
\caption{Toy graph.}
\label{fig:examplegraph}
\end{minipage}%
\hspace{-1mm}
\begin{minipage}[t]{0.65\linewidth}
\centering
\includegraphics[width=60mm]{./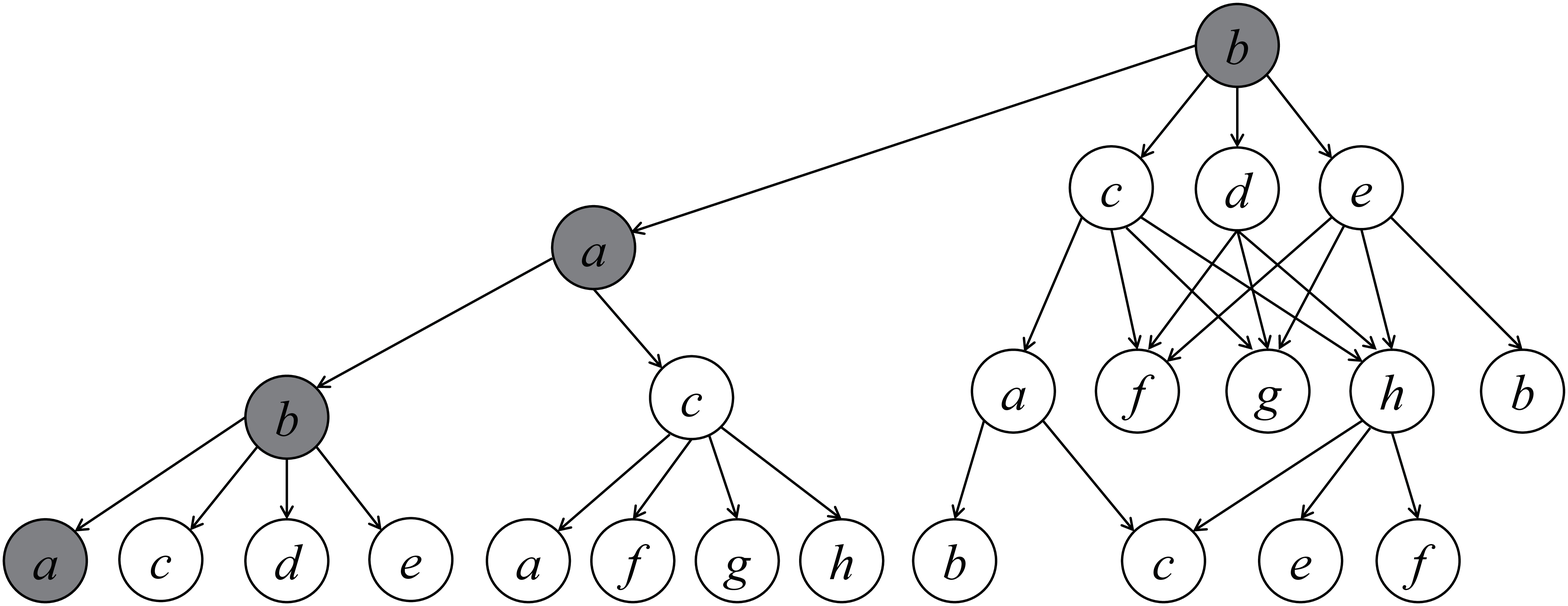}
\vspace{-5.5mm}
\caption{ The probing tree.}
\label{fig:probe}
\end{minipage}
\end{figure}

To get an estimation from $\sqrt{c}$-walk
$W(a)=(a_1, a_2, a_3, a_4) =(a,b,a,b)$,
Algorithm~\ref{alg:single_source} will invoke \textsf{PROBE} for
$W(u,2) = (a,b)$, $W(u,3) = (a,b,a)$ and $W(u,4) = (a,b,a,b)$. Each probe gives score set: $\mathcal{S}_2
= \{(c, 0.167), (d, 0.5), (e, 0.25))\}$, $\mathcal{S}_3=
\{(f, 0.021), (g, 0.028), (h, 0.028)\}$ and
$\mathcal{S}_4
=\{(b, 0.011), (c, 0.033), (e, 0.038), (f, 0.019)\}$. As an example,
the estimator $\s(a,c)$ is computed by summing up all scores for
$c$, which equals to $ 0.167 + 0.033 =0.2$.
By summing all scores up from different nodes
in $W(a)$, the returned estimation of SimRank scores are
$\tilde{s}(a,b)=0.011, \tilde{s}(a,c)=0.2, \tilde{s}(a,d)=0.5,
\tilde{s}(a,e)=0.2877, \tilde{s}(a,f)=0.04, \tilde{s}(a,g)=0.028$ and
$\tilde{s}(a, h)=0.028$. \\



\subsection{Analysis}
\label{sec:analysis_basic}
\vspace{2mm}
\noindent{\bf Time Complexity.}
We notice that in each iteration of the \textsf{PROBE} algorithm,
each edge in the graph is traversed at most once. Thus the time
complexity of the \textsf{PROBE} algorithm is $O(m \cdot  i)$, where $i$ is
the length of the partial $\sqrt{c}$-walk $W(u, i)$. Consequently, the
expected time
complexity of probing a single $\sqrt{c}$-walk  in
Algorithm~\ref{alg:single_source} is bounded by
$O(\sum_{i=1}^{\ell} i\cdot m ) = O(\ell^2 m),$
where $\ell$ is the length of the $\sqrt{c}$-walk $W(u)$.
We notice that each step in the $\sqrt{c}$-walk
terminates with probability at least $1-\sqrt{c}$, so $\ell$ is
bounded by a geometric distributed random variable $X$ with successful probability $p= 1-
\sqrt{c}$ (here ``success'' means the termination of the
$\sqrt{c}$-walk). It follows that
\begin{align*}
\E[\ell^2] &\le E[X^2] =\mathrm{Var}(X) +\E[X]^2 = {1-p \over p^2} +{ 1 \over p^2} \\
&= {2-p \over p^2} = {1+\sqrt{c} \over
  (1-\sqrt{c})^2}  =O(1).
\end{align*}
 Therefore, the
expected running time of Algorithm~\ref{alg:single_source} on a
single $\sqrt{c}$-walk is $O(m)$.
Summing up for $n_r$ walks follows that the expected running time of
Algorithm~\ref{alg:single_source}  is bounded by $O(m n_r) = O({m \over \varepsilon^2} \log {n \over \delta})$.

\vspace{2mm}
\noindent{\bf Correctness.}
We now show that Algorithm~\ref{alg:single_source} indeed gives an
good estimation to the SimRank values $s(u,v)$ for each $v \in V$,
$v\neq u$.
The following Lemma states that each trial
in Algorithm~
\ref{alg:single_source} gives an unbiased estimator for the SimRank value
$s(u ,v)$.


\begin{lemma}
For any $v \in V$ and $v \neq u$, Algorithm
\ref{alg:single_source} gives an estimator $\s(u,v)$ such that
$\E[\s(u, v)]=s(u,v)$.
\label{lem:unbiasness}
\end{lemma}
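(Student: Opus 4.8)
The plan is to exploit that the final estimator is an average of $n_r$ per-trial estimators, $\s(u,v)=\frac{1}{n_r}\sum_{k=1}^{n_r}\s_k(u,v)$. By linearity of expectation, $\E[\s(u,v)]=\frac{1}{n_r}\sum_{k=1}^{n_r}\E[\s_k(u,v)]$, so it suffices to show that a single trial is unbiased, i.e.\ $\E[\s_k(u,v)]=s(u,v)$ for each $k$.

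I would then invoke the correctness of \textsf{PROBE}: the score it returns on the partial walk $W_k(u,i)$ is exactly the first-meeting probability $P(v,W_k(u,i))$ of Definition~\ref{def:first_meet}. (This is the claim flagged in the algorithm description that $Score(v,j)=P(v,(u_{i-j-1},\ldots,u_i))$ after the $j$-th iteration, proved by induction on $j$: the factor $\scw/|I(v)|$ accounts for one reverse step of an $\scw$-walk, while skipping $u_{i-j-1}$ in Line~5 enforces the constraint $v_{i-j-1}\neq u_{i-j-1}$.) Granting this, the per-trial estimator is $\s_k(u,v)=\sum_{i=2}^{\ell}P(v,W_k(u,i))$, where $\ell$ is the random length of the $\scw$-walk $W_k(u)$.

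The core step is to take the expectation over $W_k(u)$. Rewriting the sum as $\sum_{i\ge 2}\mathbf{1}[\ell\ge i]\,P(v,W_k(u,i))$ exposes the dependence on the random length, and unfolding $P(v,\cdot)$ as a probability over an \emph{independent} $\scw$-walk $W(v)$ from $v$ lets the two sources of randomness combine into one joint probability over the pair $(W_k(u),W(v))$:
\begin{equation*}
\E[\s_k(u,v)]=\sum_{i\ge 2}\Pr\!\left[\,\ell\ge i,\; v_i=u_i,\; v_j\neq u_j\ \text{for all } j<i\,\right],
\end{equation*}
where the interchange of expectation and sum is justified since every term is nonnegative. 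The event inside demands that both walks survive to step $i$, coincide there ($v_i=u_i$), and occupy distinct nodes at every earlier step---exactly the event that $W_k(u)$ and $W(v)$ first meet at step $i$; the missing $i=1$ term vanishes because $v\neq u$. Summing over $i$ gives the total meeting probability, which is $s(u,v)$ by Equation~\ref{eqn:prelim-rw2} together with the decomposition in Equation~\ref{eqn:single-simrank-meet}.

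I expect the main difficulty to lie in the bookkeeping around random termination: one must verify that the indicator $\mathbf{1}[\ell\ge i]$ on the $u$-side and the implicit survival of $W(v)$ to step $i$ on the $v$-side jointly reproduce the ``first meet at step $i$'' events with no double counting across $i$ and no spurious contribution from walks that stop early. The other technical ingredient, logically separable from this expectation computation, is the rigorous induction establishing the \textsf{PROBE} correctness claim.
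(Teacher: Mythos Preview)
Your proposal is correct and follows essentially the same route as the paper: invoke the \textsf{PROBE} correctness lemma (Lemma~\ref{lem:Score}) to write $\s_k(u,v)=\sum_{i=2}^\ell P(v,W_k(u,i))$, then take the expectation over the random $\scw$-walk $W_k(u)$, unfold $P(v,\cdot)$ as a probability over an independent $\scw$-walk $W(v)$, and identify the resulting double sum as the total meeting probability $\Pr[W(u)\text{ and }W(v)\text{ meet}]=s(u,v)$. The only cosmetic difference is that the paper sums explicitly over the discrete set $\mathbf{W}(u)$ of possible walks weighted by $\Pr[W(u)]$, whereas you handle the random length via the indicator $\mathbf{1}[\ell\ge i]$; both encode the same joint distribution over $(W(u),W(v))$.
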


We need the following Lemma, which states that if we start a
$\sqrt{c}$-walk $W(v)= (v_1, v_2, \ldots)$, then the score computed by
the \textsf{PROBE} algorithm on partial walk $W(u, i) = (u_1,\ldots,
u_i)$ is exactly the probability that $W(u)$ and $W(v)$ first meet at $v_i =
u_i$.

\begin{lemma}
For any node $v \in V$, $v \neq u$, after the $(i-1)$-iteration,
$Score(v, i)$
is equal to $\Pr[v, W(u, i)]$, the first-meeting probability of $v$ with
respect to partial $\sqrt{c}$-walk $W(u, i)$.
\label{lem:Score}
\end{lemma}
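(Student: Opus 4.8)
The plan is to prove the claim by induction on $j$, establishing the invariant that the table $\mathcal{H}_j$ stores, for every node $x$, exactly the first-meeting probability $P\bigl(x,(u_{i-j},\ldots,u_i)\bigr)$ of $x$ with respect to the \emph{suffix} reverse path $(u_{i-j},\ldots,u_i)$, for each $j=0,1,\ldots,i-1$. The lemma is then the special case $j=i-1$, where the suffix path is the full path $(u_1,\ldots,u_i)=W(u,i)$ and $\mathcal{H}_{i-1}=\mathcal{S}$ is exactly what $\mathsf{PROBE}$ returns. The only probabilistic ingredient I would record up front is the one-step structure of a $\sqrt{c}$-walk $W(v)=(v_1=v,v_2,\ldots)$: it moves to each $x\in I(v)$ with probability $\frac{\sqrt{c}}{|I(v)|}$ and otherwise stops, and, by the Markov property, conditioned on $v_2=x$ the tail $(v_2,v_3,\ldots)$ is itself a fresh $\sqrt{c}$-walk started at $x$.

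For the base case $j=0$ the suffix path is the single node $(u_i)$, and $P\bigl(v,(u_i)\bigr)=\Pr[v_1=u_i]$ equals $1$ when $v=u_i$ and $0$ otherwise; this matches $\mathcal{H}_0=\{(u_i,1)\}$ set in Line~2. For the inductive step I would spell out the event defining $P\bigl(v,(u_{i-j-1},\ldots,u_i)\bigr)$, namely $v_{j+2}=u_i$ together with $v_k\neq u_{i-j-2+k}$ for every $k\le j+1$, and then condition on the first step $v_2=x\in I(v)$. Reindexing the tail as a fresh walk from $x$ turns the remaining constraints into precisely the event defining $P\bigl(x,(u_{i-j},\ldots,u_i)\bigr)$, while the $k=1$ constraint $v_1\neq u_{i-j-1}$ survives as an outer indicator. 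This yields the recurrence
\[
P\bigl(v,(u_{i-j-1},\ldots,u_i)\bigr)
=\mathbf{1}\!\left[v\neq u_{i-j-1}\right]\sum_{x\in I(v)}\frac{\sqrt{c}}{|I(v)|}\,P\bigl(x,(u_{i-j},\ldots,u_i)\bigr).
\]

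It then remains to check that the $j$-th iteration of $\mathsf{PROBE}$, which reads $\mathcal{H}_j$ and writes $\mathcal{H}_{j+1}$, evaluates exactly this recurrence. Using $x\in I(v)\iff v\in\mathcal{O}(x)$, the double loop over $x\in\mathcal{H}_j$ and $v\in\mathcal{O}(x)$ (Lines~4--5) visits every pair contributing to the sum, the update $Score(v)\leftarrow Score(v)+\frac{\sqrt{c}}{|I(v)|}Score(x)$ accumulates the summands, and the guard $v\neq u_{i-j-1}$ in Line~5 realizes the indicator by never writing a score into $u_{i-j-1}$. By the inductive hypothesis $Score(x)=P\bigl(x,(u_{i-j},\ldots,u_i)\bigr)$ for each $x\in\mathcal{H}_j$, so after the iteration $\mathcal{H}_{j+1}$ holds $P\bigl(v,(u_{i-j-1},\ldots,u_i)\bigr)$ for all $v$, closing the induction.

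I expect the main obstacle to be combinatorial bookkeeping rather than probability: keeping the three indices---the iteration counter, the table $\mathcal{H}_j$, and the suffix path $(u_{i-j},\ldots,u_i)$---synchronized, and in particular verifying that when the Markov decomposition shifts every avoidance condition down by one position it lands exactly on the hypothesis for the shorter path, and that the node excluded by the guard really is $u_{i-j-1}$. I would also reconcile the off-by-one between the indexing $Score(v,i)$ used in the statement and the suffix-path indexing used here, so that the returned table is unambiguously identified with $P\bigl(v,W(u,i)\bigr)$.
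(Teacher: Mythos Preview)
Your proposal is correct and follows essentially the same route as the paper: both argue by induction on the iteration counter $j$ with the invariant $Score(v,j)=P\bigl(v,(u_{i-j},\ldots,u_i)\bigr)$, establish the base case from $\mathcal{H}_0=\{(u_i,1)\}$, and close the step by conditioning the first-meeting probability on the first move of the $\sqrt{c}$-walk to obtain the recurrence that the inner loop of $\mathsf{PROBE}$ implements. Your write-up is in fact a bit tidier about the indicator $\mathbf{1}[v\neq u_{i-j-1}]$ and about the off-by-one between the statement's $Score(v,i)$ and the returned $Score(v,i-1)$, which the paper glosses over.
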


\begin{proof}
We prove the following claim: Let $Score(v, j)$ denote the score of
$v$ after the $j$-th iteration. Fix a node $v\in V$, $v \neq u$. After the $j$-th iteration in
Algorithm~\ref{alg:Probe}, we have $Score(v, j) = P(v, (u_{i-j}, \ldots,
u_i))$, the first-meeting probability of $v$ with respect to reverse
path $(u_{i-j}, \ldots, u_i)$. Recall that
$$P(v, (u_{i-j}, \ldots, u_i)) \hspace{-1mm}= \hspace{-1mm}\Pr_{W(v)}[v_{j+1} = u_i, \hspace{-1mm}v_{j} \neq
u_{i-1},
\ldots, v_1 \neq u_{i-j}],$$
where  $W(v) = (v_1, \ldots, v_{j+1}, \ldots)$ is a random $\sqrt{c}$-walk
that starts at $v_1 =v$.

Note that if above claim is true, then after the $(i-1)$-th iteration, we have $Score(v) =
Score(v, i-1) = P(v, (u_1, \ldots, u_i)) =P(v, W(u, i))$, and the Lemma
will follow.
We prove the claim by induction. After the $0$-th iteration, we have
$Score(u_i, 0) =1$ and $Score(v, 0) =0 $ for $v \neq u_i$, so the claim
holds. Assume the claim holds for the $(j-1)$-th iteration. After the
$j$-th iteration, for each $v \in V$, $v \neq u_{i-j-1}$, the
algorithm set $Score(v, j+1)$ by equation
\vspace{-1mm}
\begin{align}
Score(v, j+1) = \sum_{x \in I(v)}{\sqrt{c} \over |I|}
\cdot Score(x, j). \label{eqn:probe_Score}
\end{align}

By the induction hypothesis we have $Score(x, j) = P(x, (u_{i-j}, \ldots,
u_i))$, and thus
\vspace{-1mm}
\begin{align}
&  \quad \quad Score(v, j+1) = \sum_{\substack{x \in I(v)\\x\neq u_{i-j+1} }}{\sqrt{c} \over |I|} \cdot  P(x, (u_{i-j+1}, \ldots,
u_i)) \nonumber\\
&= \sum_{\substack{x \in I(v)\\x\neq u_{i-j+1} }} \Pr[v_2 =x ] \cdot P(x, (u_{i-j+1}, \ldots,
u_i)). \label{eqn:probe1}
\vspace{-3mm}
\end{align}
Here $\Pr[v_2=x]$ denotes the probability that $W(v)$ selects $x$ at the
  first step.
On the other hand, $P(v, (u_{i-j}, \ldots, u_i))$ can be expressed the
summation of probabilities that $W(v)$ first select a node $x \in
I(v)$ that is not $u_{i-j+1}$, and then select a reverse $\sqrt{c}$-walk from $x$ to
$u_i$ of length $j-2$ and avoid $u_{i-j+k}$ at $k$-th step. It follows
that
\vspace{-1mm}
\begin{align}
  & \quad \quad P(v, (u_{i-j}, \ldots, u_i))  \nonumber\\
&= \sum_{\substack{x \in I(v)\\x\neq u_{i-j+1} }} \vspace{-4mm}
  \Pr[v_2 =x] \cdot \Pr[v_{j+1} = u_i,
\ldots, v_2 \neq u_{i-j+1}] \nonumber \\
&= \sum_{\substack{x \in I(v)\\x\neq u_{i-j+1} }} \vspace{-4mm}\Pr[v_2 =x ] \cdot P(x, (u_{i-j+1}, \ldots,
u_i)).  \label{eqn:probe2}
\end{align}

Combining equations~\eqref{eqn:probe1} and \eqref{eqn:probe2} proves
the claim, and the Lemma
follows.
\end{proof}

With the help of Lemma~\ref{lem:Score}, we can prove Lemma~\ref{lem:unbiasness}:
\begin{proof}[of Lemma~\ref{lem:unbiasness}]
Let $\mathbf{W}(u)$ denote the set of all possible $\sqrt{c}$-walks
that starts at $u$. Fix a $\sqrt{c}$-walk $W(u) = (u_1,
\ldots, u_\ell)$.  By Lemma~\ref{lem:Score},
 the estimated SimRank can be expressed as $\s(u ,v) = \sum_{i=2}^\ell
P(v, W(u, i))$. Thus we can compute the expectation of this estimation
 by
\vspace{-3mm}
\begin{align}
\E[\s(u, v)] &= \sum_{W(u) \in \mathbf{W}(u)}\Pr[W(u)] \cdot
\sum_{i=2}^{\ell} P(v, W(u, i)) \nonumber \\
&= \sum_{W(u) \in \mathbf{W}(u)}\sum_{i=2}^\ell \Pr[W(u)] \cdot P(v, W(u,
  i)),\label{eqn:unbaisness1}
\end{align}
where $\Pr[W(u)]$ is the probability of walk $W(u)$.
Recall  that $P(v, W(u, i))$ is the probability that a random
$\sqrt{c}$-walk $W(v) = (v_1, \ldots, v_i,  \ldots)$ first meet $W(u)$
at $u_i = v_i$. For the ease of presentation, let $I(W(u), W(v), i)$
denote that indicator variable that two
$\sqrt{c}$-walk $W(v) = (v_1, \ldots, v_i,  \ldots)$ and $W(u) = (u_1, \ldots, u_i,  \ldots)$ first meet
at $u_i = v_i$. In other word, $I(W(u), W(v), i) =1$ if $W(u)$ and
$W(v)$ first meet at $u_i = v_i$, and  $I(W(u), W(v), i) =0$ if otherwise. We have
\begin{equation}
\label{eqn:unbaisness2}
P(v, W(u,
  i)) =\sum_{W(v)
  \in \mathbf{W}(v)} \Pr[W(v)]\cdot I (W(u), W(v), i).
\end{equation}
Combining equation~\eqref{eqn:unbaisness1}
and~\eqref{eqn:unbaisness2}, it follows that
\begin{align*}
\E[\s(u, v)] &= \hspace{-4mm} \sum_{\substack{W(u) \in \mathbf{W}(u)\\  W(v)
  \in \mathbf{W}(v)}} \hspace{-1mm} \sum_{i=2}^\ell \Pr[W(u)] \cdot \Pr[W(v)]\cdot
               I (W(u), W(v), i) \\
&= \sum_{i=2} \Pr[W(u) \textrm{ and } W(v) \textrm{ first meet at } i]
  \\
&=\Pr[W(u) \textrm{ and } W(v) \textrm{ meet}]
\end{align*}
Note that $s(u, v)$ is the probability that $W(u)$ and $W(v)$ meet,
and the Lemma follows.
\end{proof}



By Lemma~\ref{lem:unbiasness} and Chernoff bound, we have the following Theorem that states by performing $n_r
= {3c \over \varepsilon} \log {n\over \delta}$ independent trials, the
error of the estimator
$\s(u,v)$ provided by Algorithm~\ref{alg:single_source} can be bounded
with high probability.

\begin{theorem}
\label{thm:single-source}
For every node $v \in V$, $v \neq u$, Algorithm
\ref{alg:single_source} returns an estimation $\s(u,v)$ for $s(u,v)$
such that
$\Pr[ \forall v \in V, |\s(u,v) - s(u,v)| \le \varepsilon] \ge 1-
\delta. $
\label{thm:EScoreAbsError}
\end{theorem}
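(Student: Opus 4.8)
The plan is to combine the per-trial unbiasedness established in Lemma~\ref{lem:unbiasness} with a concentration argument over the $n_r$ independent trials, and then to close with a union bound over the $n-1$ target nodes. The overall structure mirrors a standard Monte-Carlo analysis, but the constant $3c$ in $n_r = \frac{3c}{\varepsilon^2}\log\frac{n}{\delta}$ signals that a variance-sensitive tail bound must be used rather than a crude Hoeffding estimate.

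First I would establish that each single-trial estimator is bounded in $[0,1]$. By Lemma~\ref{lem:Score}, for a fixed walk $W_k(u)=(u_1,\ldots,u_\ell)$ the trial estimator is $\s_k(u,v)=\sum_{i=2}^{\ell} P(v, W_k(u,i))$, and each summand $P(v, W_k(u,i))$ is the probability, over an independent $\scw$-walk $W(v)$, that $W(v)$ first meets $W_k(u)$ at step $i$. Since the events ``first meeting at step $i$'' are mutually exclusive across $i$, their sum equals the probability that $W(v)$ meets $W_k(u)$ at all, which is at most $1$. Hence $0 \le \s_k(u,v) \le 1$ for every realization of $W_k(u)$, and the $n_r$ trials are i.i.d.\ with common mean $s(u,v)$ by Lemma~\ref{lem:unbiasness}.

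The second ingredient is the bound $s(u,v) \le c$ for $v \neq u$, which I would derive either directly from Equation~\ref{eqn:intro-simrank} (each $s(x,y)\le 1$, so the normalized double sum is at most $c$) or from the random-walk view of Equation~\ref{eqn:prelim-rw2} (a meeting at step $i \ge 2$ forces both $\scw$-walks to survive past their first step, an event of probability $c$). This controls the variance of each trial, since $\mathrm{Var}(\s_k(u,v)) \le \E[\s_k(u,v)] = s(u,v) \le c$. With these pieces in hand, I would apply a variance-aware Chernoff/Bernstein bound to $\s(u,v) = \frac{1}{n_r}\sum_{k} \s_k(u,v)$; substituting $n_r = \frac{3c}{\varepsilon^2}\log\frac{n}{\delta}$ and using $s(u,v)\le c$ in the denominator of the exponent yields a per-node failure probability of at most $\delta/n$ for the event $|\s(u,v)-s(u,v)|>\varepsilon$. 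A union bound over the at most $n$ nodes $v \neq u$ then bounds the total failure probability by $\delta$, which is exactly the claim.

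The step I expect to be the main obstacle is pinning down the constant $3c$: a naive Hoeffding bound would use only the range $[0,1]$ and lose the factor $c$, giving $n_r = \Theta(\varepsilon^{-2}\log(n/\delta))$ without the $c$ improvement. Capturing the $c$ requires a variance-sensitive inequality together with the bound $s(u,v)\le c$, and one must verify that the resulting exponent is at least $\log(n/\delta)$ in the relevant regime of $\varepsilon$ (e.g.\ $\varepsilon \le c$), with the factor-of-two from the two-sided tail absorbed into the constant.
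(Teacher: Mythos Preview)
Your proposal is correct and matches the paper's proof essentially step for step: boundedness of $\s_k(u,v)$ in $[0,1]$ via disjointness of the first-meeting events, the variance-sensitive Chernoff bound (Lemma~\ref{lmm:chernoff}) combined with $s(u,v)\le c$ to obtain per-node failure probability $\delta/n$, and a union bound over $V$. The only cosmetic difference is that the paper invokes its stated Chernoff form directly and writes the exponent as $-\varepsilon^2 n_r/(3s(u,v))$ without dwelling on the constant, whereas you explicitly flag the $\varepsilon\le c$ regime needed to absorb the additive $\tfrac{2}{3}\varepsilon$ term in the denominator.
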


We need the following form of Chernoff bound:
\begin{lemma}[Chernoff Bound \cite{ChungL06}] \label{lmm:chernoff}
For any set $\{x_i\}$ ($i \in [1, n_x]$) of i.i.d.\ random variables with mean $\mu$ and $x_i \in [0, 1]$,
$$\Pr\left\{\left|\sum_{i=1}^{n_x} x_i - n_x \mu\right| \geq n_x \e\right\} \leq \exp\left(-\dfrac{n_x \cdot \e^2}{\frac{2}{3}\e + 2\mu}\right).$$
\end{lemma}

\begin{proof}[of Theorem~\ref{thm:single-source}]
We first note that in each trial $k$, the estimator $\s_k(u, v)$ is a
value in $[0,1]$. It is obvious that $\s_k(u, v) \ge 0$. To see that
$\s_k(u, v) \le 1$, notice that $\s_k(u, v) = \sum_{i=2}^\ell
P(v, W(u, i))$ is a probability. More precisely, it is  the probability that a $\sqrt{c}$-walk $W(v)$ meets
with $\sqrt{c}$-walk $W(u)$ using the same steps.

Thus, the final estimator $\s(u,v) = {1 \over n_r}\sum_{k=1}^{n_r}
\s_k(u, v)$ is the average of $n_r$ i.i.d. random variables whos
values lie in the range $[0,1]$. Thus, we can apply Chernoff bound:
$$\Pr[|\s(u,v) - s(u,v)| \ge \varepsilon] \le \exp(-\varepsilon^2
n_r/(3 s(u,v))).$$
Recall that  $n_r =
\frac{3c}{{\varepsilon}^2}\mathrm{log}\frac{n}{\delta}$, and notice
that   $s(u ,v) \le c$, it follows that
$$\Pr[|\s(u,v) - s(u,v)| \ge \varepsilon] \le \exp\left(-\log {n \over
  \delta}\right) = {\delta \over n}.$$
 Taking union
bound over all nodes $v \in V$ follows that
$$\Pr[ \forall v \in V, |\s(u,v) - s(u,v)| \ge \varepsilon] \le
\delta,$$
and the Theorem follows.
\end{proof}


\section{Optimizations} \label{sec:single-opt}
We present three different optimization techniques to speed up our
basic {\em ProbeSim} algorithm. The {\em pruning rules} eliminates unnecessary traversals in
the \textsf{PROBE} algorithm, so that a single trial can be performed
more efficiently. The {\em batch} algorithm builds a reachability tree
to maintain all $n_r$ $c$-walks, such that we do not have to perform duplicated \textsf{PROBE}
operations in multiple trials. The randomized \textsf{PROBE} algorithm
reduces the worst-case time complexity of our algorithm to $O({n \over
  \varepsilon^2} \log {n \over \delta})$ in expectation.

\subsection{Pruning} \label{sec:single-opt-prune}
Although the expected steps  of a $\sqrt{c}$-walks is
$O(1)$, we may still find some long walks during a
large number of trials.
To avoid this overhead, we add the following pruning rule:

\begin{pruning}
Let $\varepsilon_t$
be the termination parameter to be determined later. In
Algorithm~\ref{alg:single_source}, truncate all $\sqrt{c}$-walks at
step $\ell_t = \log \varepsilon_t / \log \sqrt{c} $.
\end{pruning}

We explain the intuition of this pruning rule as follows. Let $W(u) = (u_1, \ldots, u_\ell)$ denote
the $\sqrt{c}$ walk, and $u_i$ denote a node on the walk with
$i>\ell_t$. For each node $v \in V$, $v \neq u$, the probability
that a $\sqrt{c}$ walk $W(v)$ meets $W(u)$ at $u_i =v_i$ is at most
$(\sqrt{c})^i = (\sqrt{c})^{i- \ell_t-1} \cdot  (\sqrt{c})^{\ell_t+1} \le (\sqrt{c})^{i - \ell_t-1} \varepsilon_t$, which means
that $u_i$ will contribute at most $(\sqrt{c})^{i -
  \ell_t-1}\varepsilon_t$ to the SimRank $s(u, v)$. Summing up over
$i = \ell_t+1, \ldots, \ell$ results in an error of ${1 \over
1-\sqrt{c}} \cdot \varepsilon_t$.  As we shall see in Theorem~\ref{thm:pruning},  more elaborated
analysis would show that the error contributed by this pruning rule is
in fact bounded by $\varepsilon_t$. We further notice that it is one-sided error,
we can add $\varepsilon_t/2$ to each estimator, which will reduce the pruning error by a factor
of 2.

The next pruning rule is inspired by the fact that the \textsf{PROBE} algorithm may traverse many
vertice with small scores, which can be ignored for the
estimation:

\begin{pruning}
Let $\varepsilon_p
$ denote the
pruning parameter to be determined later. In Algorithm \ref{alg:Probe},
after computing all $(v, Score(v))$ in $\mathcal{H}_{j}$ and before descending
to $\mathcal{H}_{j+1}$, we remove $(x,
Score(x))$ from $\mathcal{H}_{j}$  if
$Score(x) \cdot (\sqrt{c})^{i-j-1} \leq \varepsilon_p$.
\end{pruning}

The intuition of pruning rule 2 is that after $i-j-1$ more iterations
in Algorithm~\ref{alg:Probe}, the
scores computed from $Score(x, j)$ will drop down to $Score(x) \cdot
{\sqrt{c}}^{i-j-1} \le \varepsilon_p$. One might think that a node $v$
may get multiple error contributions from different pruned nodes;
However, the key insight is that the probabilities of the walks from $v$
to these nodes sum up to at most $1$, which implies that the error
introduced by a single probe is at most $\varepsilon_p$. We will make
this argument rigorous in Theorem~\ref{thm:pruning}.

\vspace{2mm}
\noindent{\bf Running Example for the Pruning Rules.}
Consider $\sqrt{c}$-walk $W(a)=(a,b,a,b,e)$, and set the
termination and pruning parameters to be $\varepsilon_t
=\varepsilon_p=0.05$. We first note that the length of $W(a)$ is $\ell
=5$, and $({\sqrt{c}})^{\ell_t} < 0.05$, so we truncate $W(a)$ to
$(a,b,a,b)$.

Now consider the \textsf{PROBE} algorithm on $W(a,4) = (a,b,a,b)$. In
the second iteration in  Figure~\ref{fig:probe}, recall that we have
$Score(c, 1) =0.167$. There are still two more iterations to go,
and yet we have $Score(c, 1) \cdot (\sqrt{c})^2  = 0.042 <
\varepsilon_p$. Thus pruning rule 2 takes place, and the algorithm
does not  have to descend to the subtree of $c$.

\vspace{2mm}
\noindent{\bf Correctness.} For ease of presentation, we refer to the
sampling error parameter in algorithm~\ref{alg:single_source}
as $\varepsilon$ and the maximum allowed error as $\varepsilon_a$.  Recall that $\varepsilon_t$ and $\varepsilon_p$ denote the termination parameter and
pruning parameter, respectively. The following Theorem
shows how these
parameters affect the final error. Essentially, the error introduced
by the two  pruning rules
is roughly the same as the sampling error. The proof of the Theorem can be found in
the full version~\cite{fullversion} of the paper.
\begin{theorem}
\label{thm:pruning}
Assume $\varepsilon$,  $\varepsilon_t$ and $\varepsilon_p$ satisfies the
following inequality
$\varepsilon + {1 + \varepsilon \over 1-\sqrt{c}} \cdot
\varepsilon_p +  {1 \over 2} \cdot \varepsilon_t \le \varepsilon_a, $
then Algorithm~\ref{alg:single_source} achieves
$\Pr[ \forall v \in V, |\s(u,v) - s(u,v)| \le \varepsilon_a] \ge 1-
\delta. $
\end{theorem}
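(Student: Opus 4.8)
The plan is to bound $|\s(u,v)-s(u,v)|$ by passing through the mean of a single pruned trial and splitting the error into a \emph{sampling} part, controlled by a Chernoff bound, and a deterministic \emph{bias} part contributed by the two pruning rules. Let $\s_k(u,v)$ be the estimator of one trial of Algorithm~\ref{alg:single_source} \emph{after} both pruning rules and the one-sided correction $\varepsilon_t/2$ are applied, and let $\s(u,v)=\frac{1}{n_r}\sum_k \s_k(u,v)$. I would start from
$$|\s(u,v)-s(u,v)| \le \big|\s(u,v)-\E[\s(u,v)]\big| + \big|\E[\s(u,v)]-s(u,v)\big|,$$
show the first (sampling) term is $\le \varepsilon$ except with probability $\delta/n$, and show the second (bias) term is $\le \frac{1+\varepsilon}{1-\sqrt{c}}\varepsilon_p+\frac12\varepsilon_t$; a union bound over $v$ and the hypothesis then give the claim. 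The sampling term reuses Theorem~\ref{thm:single-source} almost verbatim: each $\s_k(u,v)$ still lies in a bounded range and the $n_r$ trials are i.i.d., so Lemma~\ref{lmm:chernoff} yields the deviation bound once we note that pruning and truncation only \emph{decrease} the estimator, so its mean is still $O(c)$ and the exponent is unchanged.

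The heart of the proof is the Pruning Rule~2 bias, which I would bound one probe at a time. Fix a probe on $W(u,i)$ and compare its output with the true first-meeting probability $P(v,W(u,i))$. Every meeting path from $v$ that is discarded is cut at its \emph{first} pruned ancestor $x$, and these first-cut events are disjoint, so the lost mass equals $\sum_x \Pr[\text{walk from }v\text{ reaches }x] \cdot Score(x)$, summed over the first-cut nodes $x$. For an $x$ removed from $\mathcal{H}_j$ the reach probability factorizes as a survival factor $(\sqrt{c})^{i-j-1}$ times a transition probability $\tau_x$, while the pruning predicate reads $Score(x)\cdot(\sqrt{c})^{i-j-1}\le\varepsilon_p$. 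The survival factor therefore cancels exactly against the $(\sqrt{c})^{i-j-1}$ in the predicate, leaving a lost mass of at most $\varepsilon_p\,\tau_x$; since the first-cut nodes are reached along disjoint walk realizations, $\sum_x \tau_x\le 1$, so the error of a single probe is at most $\varepsilon_p$. Summing over the $\ell-1$ probes of a walk and taking expectation over $W(u)$ (the number of contributing probes is $\E[\ell]\le\frac{1}{1-\sqrt{c}}$) gives the $\frac{1+\varepsilon}{1-\sqrt{c}}\varepsilon_p$ bound, with the mild $(1+\varepsilon)$ a bookkeeping slack from charging the error against the bounded-range estimator used in the Chernoff step.

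For Pruning Rule~1, truncating at $\ell_t=\log\varepsilon_t/\log\sqrt{c}$ discards exactly the probes with $i>\ell_t$, whose contribution to $\s(u,v)$ is at most $\sum_{i>\ell_t}P(v,W(u,i))$. Taking expectation over $W(u)$ (via Lemma~\ref{lem:unbiasness}) turns this into $\Pr[W(u)\text{ and }W(v)\text{ first meet at a step }>\ell_t]=\sum_{i>\ell_t}c^{\,i-1}p_i$, where $p_i$ is the first-meeting-step distribution of the underlying ordinary non-terminating walks; since $\sum_i p_i\le 1$ and $c^{\,i-1}\le c^{\ell_t}=\varepsilon_t^2\le\varepsilon_t$ for $i>\ell_t$, this bias is at most $\varepsilon_t$. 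Because truncation is one-sided (it only shrinks the estimate), adding $\varepsilon_t/2$ to every estimator centers the bias and leaves a two-sided error of at most $\varepsilon_t/2$. Adding the sampling term $\varepsilon$, the two bias terms, and invoking $\varepsilon+\frac{1+\varepsilon}{1-\sqrt{c}}\varepsilon_p+\frac12\varepsilon_t\le\varepsilon_a$ completes the argument.

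I expect the main obstacle to be the Pruning Rule~2 step: making rigorous that each discarded meeting path is charged exactly once (the first-cut disjointness) and that, after the survival factor cancels against the threshold, the residual transition probabilities of reaching distinct first-cut nodes sum to at most $1$. By contrast, the truncation estimate (a geometric tail using $c^{\,i-1}p_i$) and the Chernoff step are comparatively routine, and the only subtlety there is verifying that the shift by $\varepsilon_t/2$ keeps the variables in the range required by Lemma~\ref{lmm:chernoff}.
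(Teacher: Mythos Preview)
Your overall plan is sound, but it differs from the paper's route in one structural respect and you slightly misdiagnose where the $(1+\varepsilon)$ factor comes from.

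The paper does \emph{not} do a bias--variance split around the pruned mean. Instead it writes a pathwise triangle inequality
\[
|\s(u,v)-s(u,v)| \;\le\; |\s(u,v)-\s^{\text{no-prune}}(u,v)| \;+\; |\s^{\text{no-prune}}(u,v)-s(u,v)|,
\]
controls the second term by Theorem~\ref{thm:single-source} (sampling error $\varepsilon$), and then bounds the first term \emph{realization by realization}. The per-probe Pruning Rule~2 loss is shown to be at most $\varepsilon_p$ via a short induction on the level $j$ (proving $0\le Score_i(x,j)-Score_i(x,j,\varepsilon_p)\le \varepsilon_p/(\sqrt{c})^{i-j-1}$), so the $k$-th trial loses at most $\varepsilon_p\,\ell_k$. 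Crucially, the paper then applies a \emph{second} Chernoff bound to the i.i.d.\ walk lengths $\ell_1,\dots,\ell_{n_r}$ to conclude $\tfrac{1}{n_r}\sum_k\ell_k\le \tfrac{1+\varepsilon}{1-\sqrt{c}}$ with probability $\ge 1-\delta/n$; this is exactly the origin of the $(1+\varepsilon)$ in the statement. The truncation error is handled pathwise (each trial loses at most $(\sqrt{c})^{\ell_t}\le\varepsilon_t$), not only in expectation.

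Your approach---a single Chernoff bound around the biased mean, with the bias computed in expectation over $W(u)$---also works, and is in fact slightly simpler: the expected number of probes is $\E[\ell]-1\le \tfrac{1}{1-\sqrt{c}}$, so your Pruning Rule~2 bias would be at most $\tfrac{1}{1-\sqrt{c}}\,\varepsilon_p$ with \emph{no} $(1+\varepsilon)$ needed. Your description of that factor as ``bookkeeping slack'' from the Chernoff step is therefore off; in your decomposition it simply does not arise, and you would prove a marginally stronger statement. Your first-cut disjointness argument for the per-probe bound is a correct alternative to the paper's induction; just be careful that the transition probabilities $\tau_x$ you sum are the ordinary-walk probabilities of \emph{first} hitting a pruned node at the corresponding level (so that disjointness genuinely gives $\sum_x\tau_x\le 1$). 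The paper's level-by-level induction sidesteps this bookkeeping entirely.
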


\clearpage
\subsection{Batching Up $\sqrt{c}$-walks} \label{sec:single-opt-batch}
A simple observation is that if two  $\sqrt{c}$-walks $W_1(u)$ and $W_2(u)$ share the same
partial walk $W(u,i)=(u_1, u_2, \ldots, u_i)$, then we can perform a single
probe on $W(u,i)$ to return the scores for both
$\sqrt{c}$-walks. Meanwhile, we expect many $\sqrt{c}$-walks to share
a common partial walk when the number of trial $n_r$ is large. Based on this
observation, we propose to batch up $\sqrt{c}$-walks before we perform
the \textsf{PROBE} algorithm.

More precisely, we
use a {\em reverse reachability tree} $T$ to compactly store all $n_r$
$\sqrt{c}$-walks. Each tree node $r_q$ in $T$ maintains a graph node
$r_q.node$ and a weight integer $r_q.weight$. Let $(r = r_1, \ldots,
r_q)$ denote the tree
path from root $r$ to $r_q$, then $(r = r_1.node, \ldots,
r_q.node)$ corresponds to a partial $\sqrt{c}$-walk that starts at
$u$, and $r_q.weight$ is set to be the
number of $\sqrt{c}$-walks that shares this partial walk. In particular, the root $r$ maintains
$r.node =u$ and $r.weight=n_r$.


\begin{algorithm}[t]
\begin{small}
\caption{ Batch algorithm\label{alg:batch}}
\KwIn{Directed graph $G=(V,E)$; $u \in V$; Error $\varepsilon_a$ and failure probability $\delta$; \\}
\KwOut{$\mathcal{R}=\{(v, \s(u,v)) \mid v \in V\}$, a hash\_set of
  size $n$ that maintains the SimRank estimations for
  each node $v \in V$}
$n_r \leftarrow \frac{3c}{{\varepsilon}^2}\mathrm{log}\frac{n}{\delta}$; \\
Initialize reverse reachability tree $T$ rooted by $r$, with $r.node =u$ and
$r.weight =0$; \\
\For{ $k=1$ to $n_r$}
{
Generate $\sqrt{c}$-walk $W_k(u)=(u_1,u_2,...,u_l)$; \\
$r_1 \leftarrow r$;\\
\For {$i=2$ to $\ell$}
{
\If{$\exists r_i \in r_{i-1}.children$ and $r_i.node = u_{i}$}
{
$r_i.weight \leftarrow r_i.weight+ 1$;\\
}
\Else
{
Add $r_i$ as a child to $r_{i-1}$, with $r_i.node = u_i$ and
$r_i.weight =1$;\\
}
}
}
\For {each root-to-node path $(r=r_1, r_2, \ldots, r_q)$ in $T$ }
{
$\mathcal{S} \leftarrow \mathsf{PROBE}((r_1.node, \ldots,
r_q.node))$; \\
\For {each $(v, Score(v)) \in \mathcal{S}$}
{
$\s(u,v)\leftarrow \s(u,v) + \frac{r_q.weight}{n_r} \cdot
  Score(v)$;
}
}
\Return $\mathcal{R}$;
\end{small}
\end{algorithm}

Algorithm~\ref{alg:batch} illustrates the pseudo-code of the batch
method. The algorithm generates $n_r =
\frac{3c}{{\varepsilon}^2}\mathrm{log}\frac{n}{\delta}$ random
$\sqrt{c}$-walks, where $\varepsilon$ is a constant that satisfies
Theorem~\ref{thm:pruning}. To insert a $\sqrt{c}$-walk
$W_i(u)=(u_1,u_2,...,u_\ell)$ to $T$, the algorithm starts from the root $r_1= r$,
and recursively move down to the lower level. After $r_{i-1}$ is
processed,  it checks if there is a child of $r_{i-1}$,
denoted $r_i$,  such that
$r_i.node = u_i$ (Lines 5-7). If so, we know that partial $\sqrt{c}$-walk
$(u_1,u_2,...,u_i)$ is already recorded by $r_1, \ldots, r_i$, so the algorithm
increases the weight of $r_i$ by $1$ (Lines 8-9). Otherwise, it adds a new child
$r_i$ to $r_{i-1}$,  with $r_i.node = u_i$ and $r_i.weight = 1$ (Lines
10-11). After all $n_r$ walks are inserted to $T$, the algorithm starts the probe processes.
For each root-to-node
path $(r=r_1, r_2, \ldots, r_q)$ in $T$, the algorithm invokes the \textsf{PROBE}
algorithm on $(r_1.node, \ldots, r_q.node)$ to get the score set
$\mathcal{S}$ (Lines 11-12). Note that we can apply BFS traversal to $T$
to enumerate all root-to-node paths. Since the number of $\sqrt{c}$-walks that share
$(r_1.node, \ldots, r_q.node)$ as their partial $\sqrt{c}$-walk is
$r_q.weight$, the algorithm adds  $ \frac{r_q.weight}{n_r} \cdot Score(v) $ to the
estimation of SimRank $\s(u, v)$, for each $(v, Score(v)) \in
\mathcal{S}$ (Lines 13-14). Finally, the algorithm returns
$\mathcal{R}$ as the SimRank estimators (Line 15).

\vspace{2mm}
\noindent{\bf A running example for the batch algorithm.}
Suppose we have a reverse reachability tree $T$
shown
in~\ref{fig:reachability_tree_before},
which records two $\sqrt{c}$-walks
$(a, b, c)$ and $(a, c, a)$. To
insert $W=(a,b,a)$ to $T$, the
algorithm starts with the root $r_1$
and increase $r_1.weight$ by 1. Then
it finds the child $r_2$ such that
$r_2. node = b$, and increase
$r_2. weight$ by 1. Finally, since
there is no child node of $r_2$ with
node equal to $a$, the algorithm inserts a new
node $r_6$ as a child node of $r_6$,
with $r_6.node =a$, and
$r_6.weight =1$.

To estimate the SimRank values using
the reverse reachability tree $T$, we probe
partial walk represented by each tree node and sum
up the scores according to the
weights of nodes. For example, let
$Score(b, r_j)$ denote the score
computed by the \textsf{PROBE}
algorithm for node $b$ on the partial walk represented by $r_j$, for $j=2, 3, 4,
5,6$. Since the
weights of $r_2, r_3, r_4, r_5, r_6$ are
$2,1,1,1, 1$, the final estimator
$\s(a, b)$ will be set to ${1 \over 3} \cdot
Score(b, r_2)+ {1 \over 6} \cdot
Score(b, r_3) +{1 \over 6}  \cdot
Score(b, r_4) + {1 \over 6} \cdot
Score(b, r_5) +{1 \over 6} \cdot  Score(b, r_6)$.

\begin{figure}[t]
\label{fig:reachtree}
\begin{centering}
\vspace{-4mm}
\subfigure[]{\begin{centering}
\includegraphics[scale=0.17]{./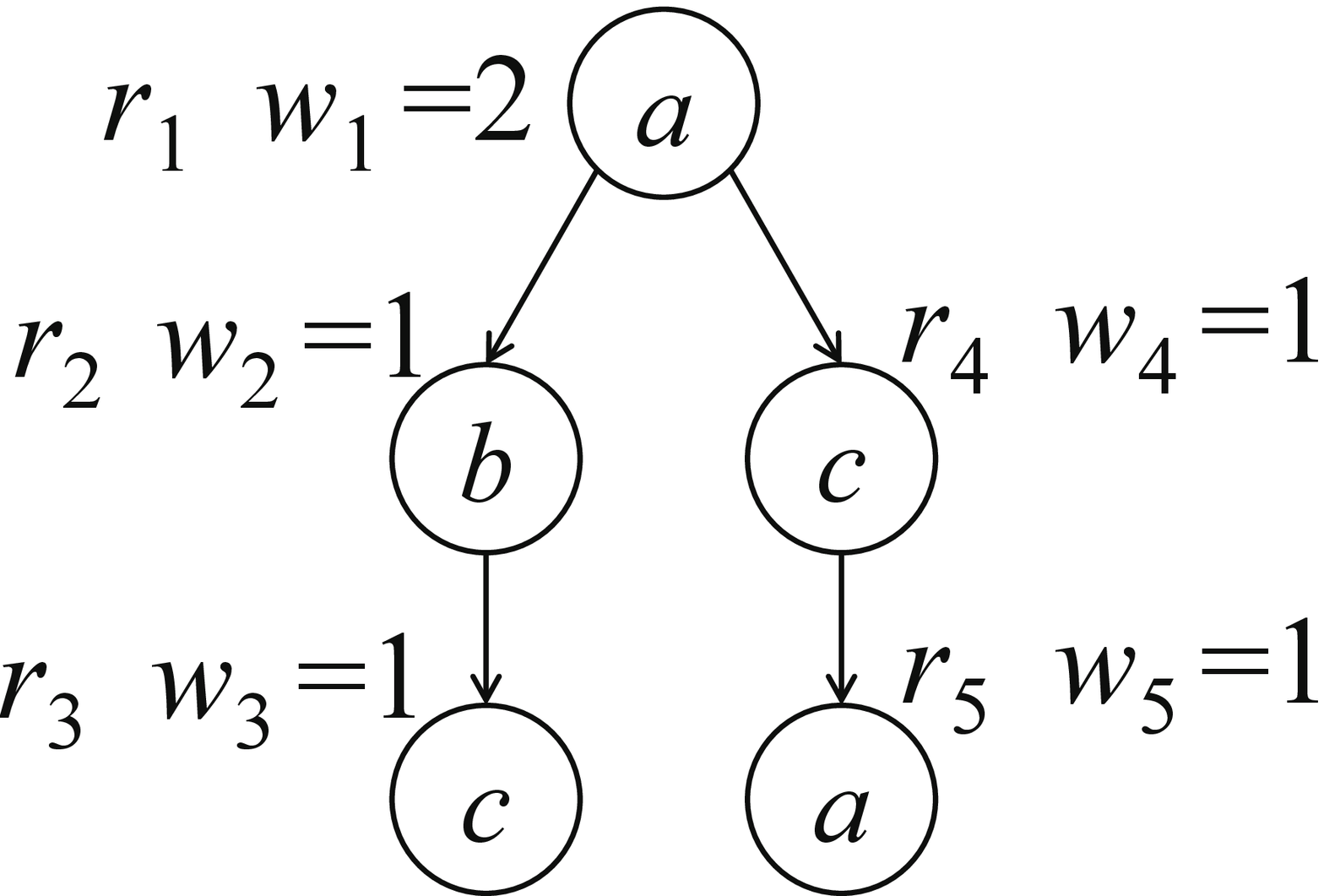}
\label{fig:reachability_tree_before}
\par\end{centering}
}
\subfigure[]{\begin{centering}
\includegraphics[scale=0.17]{./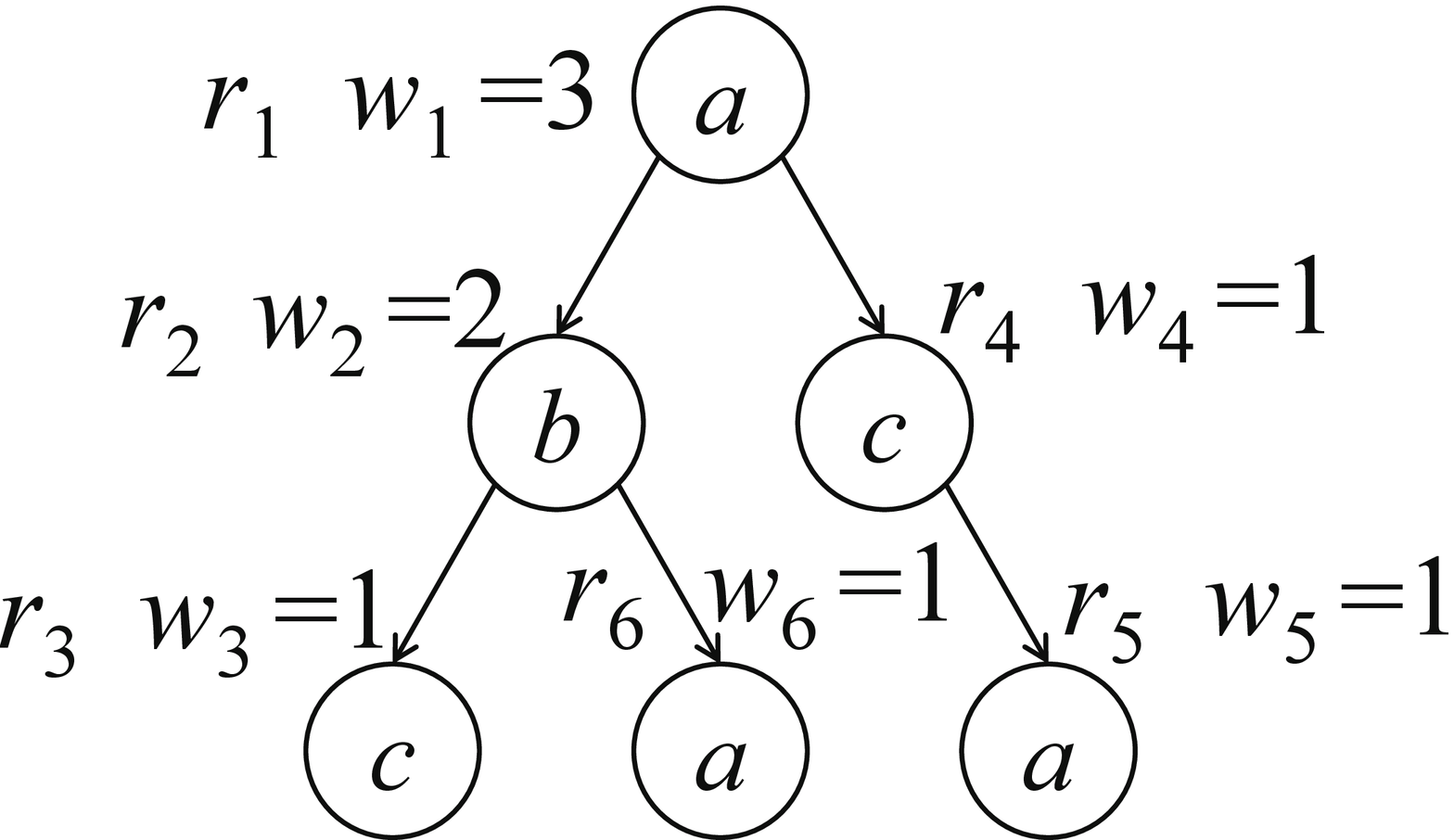}
\label{fig:reachability_tree_after}
\par\end{centering}
}
\par\end{centering}
\caption{An example of reverse reachability tree}
\end{figure}

\subsection{Randomized \textsf{PROBE} Algorithm} \label{sec:single-opt-probe}
Recall that the running time of Algorithm~\ref{alg:single_source} is $O({m \over
  \varepsilon_a^2} \log {n \over \delta})$. The $m$ factor comes from
the \textsf{PROBE} algorithm, which runs in $O(m)$ time.
To overcome this worst-case complexity, we present a randomized
version of the \textsf{PROBE} algorithm. This algorithm
 runs in $O(n)$ time in expectation. The intuition
is that for each iteration,  instead of deterministically probing each out-neighbours of
 the nodes and computing corresponding scores,  we simply sample the
 in-neighbours of EVERY nodes in the graph to determine if it
 should be put into the next iteration. We delicate the
 sampling process such that the probability that $v$
 gets selected by $j$-th iteration is exactly the score of $v$ computed by the
 deterministic \textsf{PROBE} algorithm at $j$-th iteration. Since
 an iteration touches each node $v\in V$ exactly once, and there are
 constant number of iterations in expectation, the
 expected running time is bounded by $O(n)$.

Algorithm~\ref{alg:RProbe} shows the pseudo-code of the randomized
\textsf{PROBE} algorithm. Given a partial $\sqrt{c}$-walk $W(u, i) =(u_1, \ldots, u_i)$ that
starts at $u=u_1$, the randomized \textsf{PROBE} algorithm
outputs $\mathcal{S}=\{(v, Score(v)) \mid  v \neq u \in V\}$, a hash\_set of
nodes and their first-meeting probability with respect to reverse
path $W(u, i)$.
Similar to its deterministic sibling, the algorithm initializes $i-1$ hash
  tables $\mathcal{H}_0, \ldots, \mathcal{H}_{i-1}$ (Line 1), and adds
  $u_i$ to $\mathcal{H}_0$ (Line 2). In the
  $j$-th iteration, the algorithm first checks if the sum of
  out-degrees of the nodes in $\mathcal{H}_j$ is below $n$ (Lines 3-4). If so, the
  algorithm sets the candidate set $U$ to the union of the
  out-neighbours (Line 5); otherwise, it simply sets $U$ as $V$
  (Lines 6-7). After that, the algorithm samples each node $v \in U$ with the following
procedure. First, it uniformly selects an incoming edge $(x, v)$ from the
in-neighbour set $I(v)$ of $v$ (Line 9). If $x \in \mathcal{H}_{j}$ (i.e., $x$ is
selected in iteration $j-1$), the algorithm selects $v$ into $\mathcal{H}_{j+1}$ with probability $\sqrt{c}$ (Lines 10-11).
 After $i$ iterations, the algorithm returns all nodes in $\mathcal{H}_{i-1}$ with their scores set to be $1$ (Line 12).


\begin{algorithm}[t]
\begin{small}
\caption{Randomized $\mathsf{PROBE}$ algorithm\label{alg:RProbe}}
\KwIn{A partial $\sqrt{c}$-walk $(u=u_1, \ldots,
  u_i)$}

\KwOut{$\mathcal{S}=\{(v, Score(v)) \mid  v \in V\}$, a hash\_set of
  nodes and their scores w.r.t. partial walk $W(u,i)$}
Initialize hash\_set $\mathcal{H}_j$ for $j=0,\ldots, i-1$;\\
Insert $(u_i,1)$ to $\mathcal{H}_0$; \\
\For{ $j=0$ to $i-2$}
{
\If{$\sum_{v
  \in \mathcal{H}_j} |\mathcal{O}(v)| \le n$}
{$U \leftarrow \bigcup_{v \in \mathcal{H}_j} \mathcal{O}(v)$;}
\Else
{$U \leftarrow V$;}
\For{each $x \in U$, $x \neq u_{i-j-1}$}
{
Uniformly sample an edge $(v, x)$ from $I(x)$;\\
\If {$v \in \mathcal{H}_j$}
{
Insert $x$ to $\mathcal{H}_{j+1}$ with probability $\sqrt{c}$;
}
}
}
\Return $\mathcal{S} = \{(v, 1) \mid  v \in \mathcal{H}_{i-1}\}$\;
\end{small}
\end{algorithm}



\vspace{2mm}
\noindent{\bf Time complexity.} Each iteration in
Algorithm~\ref{alg:RProbe} runs $O(n)$ time, so the randomized \textsf{PROBE}
algorithm for a partial $\sqrt{c}$-walk $W(u,i)$ of length $i-1$
runs in $O(i\cdot n)$
time. If we use randomized \textsf{PROBE} in
Algorithm~\ref{alg:single_source}, the running time for a single walk is bounded by
$O(\sum_{i=1}^\ell i\cdot n) = O(\ell^2 n).$
As proved in Section~\ref{sec:analysis_basic} the expectation of $\ell^2$ is a
constant, thus the expected running for a single walk with randomized
\textsf{PROBE} is bounded by $O(n)$.
By setting the number of $\sqrt{c}$-walks to be $n_r =O(
\frac{1}{{\varepsilon}^2}\log\frac{n}{\delta})=O(
\frac{1}{{\varepsilon_a}^2}\log\frac{n}{\delta})$,
the time complexity of our single source SimRank
algorithm with randomized \textsf{PROBE} is at bounded by
$O(\frac{n}{{\varepsilon_a}^2} \log
\frac{n}{\delta})$. We also note that in practice, the randomized  \textsf{PROBE}
algorithm tends to only visit the nodes that can be reached by $u_i$
with non-negligible probabilities, which is few in number for
real-world graphs that follow the power-law distribution.

\vspace{2mm}
\noindent{\bf Correctness.}
The following Theorem shows that the randomized \textsf{PROBE} algorithm gives an
unbiased Bernoulli estimators for the scores computed by the deterministic \textsf{PROBE}
algorithm.
The proof of the Theorem can be found in
the full version~\cite{fullversion} of the paper.

\vspace{-2mm}
\begin{theorem}
\label{thm:randomized_error}
For EVERY node $v \in V$, $v\neq u$, Algorithm~\ref{alg:single_source} with randomized
\textsf{PROBE} returns an estimation $\s(u,v)$ for $s(u,v)$
such that
$\Pr[ \forall v \in V, |\s(u,v) - s(u,v)| \le \varepsilon_a] \ge 1-
\delta. $
\end{theorem}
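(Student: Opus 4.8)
The plan is to show that swapping the deterministic \textsf{PROBE} for the randomized one of Algorithm~\ref{alg:RProbe} alters none of the relevant moments of the estimator, so that essentially the same Chernoff argument as in Theorem~\ref{thm:single-source} applies. Concretely, I would split the proof into two parts: (i) establishing that, for a fixed prefix $W(u,i)$, the membership indicator $\mathds{1}[v\in\mathcal{H}_{i-1}]$ produced by the randomized \textsf{PROBE} is a Bernoulli random variable whose mean equals the deterministic score $P(v,W(u,i))$ of Lemma~\ref{lem:Score}; and (ii) turning this distributional identity, together with independence across probes, into the stated concentration bound.

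For part (i), which I expect to be the technical heart, I would argue by induction on the layer index $j$ that $\Pr[w\in\mathcal{H}_j]=Score(w,j)$ for every $w\in V$, where $Score(w,j)$ is the quantity computed by the deterministic \textsf{PROBE}. The base case $j=0$ is immediate since $\mathcal{H}_0=\{u_i\}$. For the inductive step, the key observation is that the uniformly sampled in-edge $(y,w)$ drawn for a node $w$ in iteration $j$ uses fresh randomness, independent of the random construction of $\mathcal{H}_0,\ldots,\mathcal{H}_j$; hence $w$ (with $w\neq u_{i-j-1}$) is inserted into $\mathcal{H}_{j+1}$ with probability $\sqrt{c}\cdot\frac{1}{|I(w)|}\sum_{y\in I(w)}\Pr[y\in\mathcal{H}_j]$. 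By the induction hypothesis this equals $\sqrt{c}\cdot\frac{1}{|I(w)|}\sum_{y\in I(w)}Score(y,j)$, which is exactly the deterministic recurrence~\eqref{eqn:probe_Score}, with the forbidden node $u_{i-j-1}$ excluded on both sides. I would also verify the bookkeeping point that the candidate set $U$ (either $\bigcup_{x\in\mathcal{H}_j}\mathcal{O}(x)$ or all of $V$) contains every node that can possibly be inserted, so that no node is erroneously assigned probability $0$. Combining the induction with Lemma~\ref{lem:Score} then yields $\E[\mathds{1}[v\in\mathcal{H}_{i-1}]]=P(v,W(u,i))$.

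For part (ii), unbiasedness follows at once: replacing each deterministic score by a Bernoulli variable of the same mean and invoking linearity reproduces verbatim the expansion in the proof of Lemma~\ref{lem:unbiasness}, giving $\E[\s(u,v)]=s(u,v)$. The one genuinely delicate point is the range of the single-trial estimator. Writing $\s_k(u,v)=\sum_{i=2}^{\ell_k}X_{k,i}$ with $X_{k,i}=\mathds{1}[v\in\mathcal{H}^{(i)}_{i-1}]$, the indicators for different $i$ come from independent probe invocations, so $\s_k(u,v)$ is a sum of independent Bernoullis whose \emph{mean} is at most $1$ but whose realized value may exceed $1$; thus the ``$\s_k\in[0,1]$'' step used in Theorem~\ref{thm:single-source} does not transfer directly. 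I would repair this by conditioning on the sampled walks $\mathcal{W}=\{W_k(u)\}$: for a fixed $v$ the collection $\{X_{k,i}\}$ is then genuinely independent (distinct probes use disjoint randomness), so $n_r\,\s(u,v)=\sum_{k,i}X_{k,i}$ is a sum of independent $[0,1]$ variables with conditional mean $M(\mathcal{W})=\sum_{k,i}P(v,W_k(u,i))\le n_r$, to which Lemma~\ref{lmm:chernoff} (in its independent-but-not-identical form) applies. Separately, $M(\mathcal{W})/n_r=\frac{1}{n_r}\sum_k\mu_k$ is an average of i.i.d.\ quantities $\mu_k=\sum_i P(v,W_k(u,i))\in[0,1]$ with mean $s(u,v)\le c$, so a second application of Lemma~\ref{lmm:chernoff} controls its deviation from $s(u,v)$. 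Splitting the error budget $\varepsilon$ between these two bounds via the triangle inequality, and finally taking a union bound over all $v\in V$, yields $\Pr[\forall v,\ |\s(u,v)-s(u,v)|\le\varepsilon_a]\ge 1-\delta$ with $n_r=\Theta\!\big(\tfrac{c}{\varepsilon^2}\log\tfrac{n}{\delta}\big)$, matching the claimed complexity up to the constant.
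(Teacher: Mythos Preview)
Your part (i) is exactly the paper's approach: the appendix proves a lemma (Lemma~\ref{lem:unbiasness_random}) showing by induction on the layer index $j$ that a node $v$ is selected into $\mathcal{H}_j$ with probability $P(v,(u_{i-j},\ldots,u_i))$, via the same recurrence you write down; taking $j=i-1$ yields the first-meeting probability. Your remark about the candidate set $U$ is a small bookkeeping addition the paper leaves implicit.

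Your part (ii), however, goes beyond what the paper actually writes. After establishing the lemma, the paper simply asserts that one can ``use the randomized \textsf{PROBE} algorithm in Algorithm~\ref{alg:single_source}\ldots\ retaining the error guarantee in Theorem~\ref{thm:EScoreAbsError}'', without revisiting the range hypothesis $\s_k(u,v)\in[0,1]$ on which the Chernoff step of Theorem~\ref{thm:single-source} rests. You are right to flag this: under the randomized \textsf{PROBE}, the single-trial estimator is a sum $\sum_{i=2}^{\ell}X_{k,i}$ of independent Bernoulli indicators coming from separate probe calls, and this sum can exceed $1$, so the deterministic argument does not transfer verbatim. Your two-stage fix---first, conditionally on the walks $\mathcal{W}$, concentrate $\sum_{k,i}X_{k,i}$ around $M(\mathcal{W})$ (a sum of independent $[0,1]$ variables whose total mean is at most $n_r$, so the Chernoff exponent still scales with $n_r$); second, concentrate $M(\mathcal{W})/n_r=\frac{1}{n_r}\sum_k\mu_k$ around $s(u,v)$ exactly as in the deterministic case, using $\mu_k\in[0,1]$---is a clean way to close the gap while preserving $n_r=\Theta(\tfrac{c}{\varepsilon^2}\log\tfrac{n}{\delta})$. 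In short, your route coincides with the paper's in spirit (layer-wise induction for unbiasedness, then Chernoff plus a union bound over $v$), but is more scrupulous about the boundedness hypothesis than the paper's own write-up.
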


\subsection{Best of both worlds}
\label{sec:best}
Although the randomized \textsf{PROBE} algorithm achieves better
worst-case time complexity, it still suffers from one drawback: the
sampling processes cannot be batched up.
Recall that in the batch algorithm, each partial
$\sqrt{c}$-walk $W(u,i)$ with weight $w$ is probed with the deterministic \textsf{PROBE}
algorithm exactly once, regardless of what $w$ is. If we
switch to the randomized \textsf{PROBE} algorithm, however, we have to
perform $w$ independent probes and take the average to get an unbiased
estimator for each node. This means that batching up the $\sqrt{c}$-walks
does not reduce the running time of our single source SimRank
algorithm, if we use the randomized \textsf{PROBE} algorithm.

Now we have a deterministic
\textsf{PROBE} algorithm that can be batched up, and a randomized
\textsf{PROBE} algorithm that achieves $O(n)$ time complexity. To get
the best of both world,
we can combine the two algorithms to cope with the batch
algorithm. The idea is very simple: Let $r_q$ be a tree node  with
weight $w$, and consider a probe for the partial $\sqrt{c}$-walk for
$r_1.node, \ldots, r_q.node$. After each iteration (say $j$-th) in the
deterministic \textsf{PROBE} algorithm, we check
if the summation of the out-degrees exceeds $c_0 w n
$ for some constant $c_0$.
If so, we know that the deterministic \textsf{PROBE} algorithm is going
to incur a time complexity of at least $c_0w n$, and
is no longer suitable for this partial path. Thus,  we will  switch to the
randomized \textsf{PROBE} algorithm,  which runs in $O(w n)$ time for a single probe.

The intuition of this combination can be explained as follow. If the
partial $\sqrt{c}$-walk $W(u)= (u_1, \ldots, u_i)$ is short, then it
is likely that the weight $w$ of this partial walk is large. In the mean time, the number of
nodes that are  can reversely reach $u_i$ using $\le i-1$ steps is likely to be small,
so we can afford to use the deterministic \textsf{PROBE}  algorithm to calculate
the scores exactly and save a factor of $w$. On the other hand, if the
partial  $\sqrt{c}$-walk is very long, then it is likely that the
weight $w$ is small, and thus performing the randomized \textsf{PROBE}
algorithm independently $w$ times is affordable.

\section{Related Work} \label{sec:related}

The first method for SimRank computation \cite{JW02}, referred to as the {\em Power Method}, is designed for deriving the SimRank similarities of all node pairs in the input graph $G$. It utilizes the following matrix formulation of SimRank \cite{KMK14}:
\begin{equation} \label{eqn:related-simrank}
S = (c P^\top S P) \vee I,
\end{equation}
where $S$ is an $n\times n$ matrix such that $S(i, j)$ equals the SimRank similarities between the $i$-th and $j$-th nodes, $I$ is an $n \times n$ identity matrix, $c$ is the decay factor in the definition of SimRank, $P$ is a {\em transition matrix} defined by the edges in $G$, and $\vee$ is the element-wise maximum operator. The power method starts by setting $S=I$, and then it iteratively updates all elements in $S$ based on Equation~\ref{eqn:related-simrank}, until the values of all elements converge. This method is subsequently improved in \cite{LVGT10,YZL12,YuJulie15gauging} in terms of either efficiency or accuracy. However, all methods proposed in \cite{JW02,LVGT10,YZL12,YuJulie15gauging} incur $O(n^2)$ space overheads, which is prohibitively expensive for large graphs.

To mitigate the inefficiency of the power method, a line of research \cite{FNSO13,He10,Yu13,Li10,Yu14,YuM15b,KMK14} has proposed to utilize an alternative formulation of SimRank that makes it easier to compute:
\begin{equation} \label{eqn:related-simrank-wrong}
S = c P^\top S P +  (1-c)\cdot I.
\end{equation}
This formulation is claimed to be equivalent to that in Equation~\ref{eqn:related-simrank} \cite{FNSO13,He10,Yu13,Li10,Yu14,YuM15b}. As pointed out in \cite{KMK14}, however, the two formulations are rather different, due to which the techniques in \cite{FNSO13,He10,Yu13,Li10,Yu14,YuM15b} do not always return the correct SimRank similarities of node pairs.

Among the existing solutions that adopt the correct formulation of SimRank (in Equations \ref{eqn:intro-simrank} and \ref{eqn:related-simrank}), the ones most related to ours are proposed in \cite{LeeLY12, TX16,FRCS05}, since they can answer approximate single-source and top-$k$ SimRank queries with non-trivial absolute error guarantees. In particular, Fogaras and R{\'{a}}cz \cite{FRCS05} propose a Monte Carlo approach that is similar to the method discussed in Section~\ref{sec:prelim-randomwalk}, except that it uses conventional random walks instead of $\scw$-walks. They also present an index structure that stores pre-computed random walks to accelerate query processing. As shown in \cite{KMK14,TX16}, however, the index structure incurs tremendous space and preprocessing overheads, which makes it inapplicable on sizable graphs. Lee et al.\ \cite{LeeLY12} propose an index-free algorithm for top-$k$ SimRank queries that is claimed to provide absolute error guarantees. Nevertheless, Lee et al.'s algorithm may return erroneous query results due to limited steps of random walks, as we discuss in Section~\ref{sec:intro} and ~\ref{sec:exp}. Tian and Xiao \cite{TX16} present {\em SLING}, an index structure that answers any single-source SimRank query in $O(m \log\frac{1}{\e_a})$ time and requires $O(n/\e_a)$ space. {\em SLING} is shown to outperform the state of the art in terms of both query efficiency and accuracy, but its space consumption is more than an order of magnitude larger the size of $G$. Furthermore, it cannot handle updates to the input graph, and its absolute error guarantee cannot be changed after the preprocessing procedure.

In addition, there exist two other index structures \cite{MKK14,SLX15} for top-$k$ SimRank queries, but neither of them is able to provide any worst-case error guarantee, since they reply on heuristic assumptions about $G$ that do not always hold. We discuss the technique in \cite{SLX15} in  Section~\ref{sec:intro}, and we refer interested readers to \cite{TX16} for explanations of the limitation of \cite{MKK14}.
Furthermore, Li et al.\ \cite{LiFL15} propose an distributed version of the Monte Carlo approach in \cite{FRCS05} and show that it can scale to a billion-node graph, albeit requiring 110 hours of preprocessing time and using 10 machines with 3.77TB total memory. Last but not least, there is existing work on {\em SimRank similarity join} \cite{TaoYL14,MKK15,ZhengZF0Z13} and variants of SimRank \cite{AMC08,FR05,Lin12,YuM15a,ZhaoHS09}, but the solutions therein cannot be applied to address top-$k$ and single-source SimRank queries.

\section{Experiments} \label{sec:exp}
This section experimentally evaluates the proposed solutions against
the state of the art. All experiments are conducted on a machine with
a Xeon(R) CPU E5-2620@2.10GHz  CPU and 96GB memory.
All algorithms are implemented in C++ and compiled by g++ 4.8.4 with the -O3 option.


\header
\noindent
{\bf Methods.} We evaluate six algorithms: {\em ProbeSim}, {\em MC} \cite{FR05},
{\em TSF} \cite{SLX15},  {\em TopSim} \cite{LeeLY12}, {\em Trun-TopSim}  \cite{LeeLY12} and {\em Prio-TopSim}  \cite{LeeLY12}.
As mentioned in Section~\ref{sec:related}, the three TopSim based algorithms
are the state-of-the-art index-free approaches for single-source and top-$k$ SimRank
queries, and {\em TSF} is the state-of-the-art index structure for
SimRank computations on dynamic graphs.

\header
{\bf Datasets.} We use $4$ small graphs and $4$ large graphs, as shown in
Table~\ref{tbl:datasets}. All datasets are obtained from public
sources \cite{SNAP,LWA}.

\begin{table}[t]
\centering
\tblcapup
\vspace{-3mm}
\caption{Datasets.}
\tblcapdown
\begin{small}
 \begin{tabular}{|l|l|r|r|} 
 \hline
 {\bf Dataset} & {\bf Type} & {\bf $\boldsymbol{n}$} & {\bf $\boldsymbol{m}$}	 \\ \hline
 Wiki-Vote	& 	directed &	7,155	&	103,689
   \\
 HepTh	    & 	undirected &	9,877	&	25,998		\\
AS	    &	directed &	26,475	&	106,762		\\
 HepPh	        &	directed &	34,546	&	421,578 \\
 LiveJournal &	directed	&	4,847,571	&	68,993,773		\\
 It-2004	&	directed & 41,291,594 & 1,150,725,436		\\
Twitter & directed & 41,652, 230 & 1 ,468, 365 ,182 \\
Friendster  & directed & 68,349,466 & 2,586,147,869 \\
 \hline
\end{tabular}
\end{small}
\label{tbl:datasets}
\vspace{-0mm}
\end{table}



\subsection{Experiments on Small Graphs} \label{sec:exp-results-small}

We first evaluate the algorithms on the four small graphs, where the ground
truth of the SimRank similarities can be obtained by the {\em Power
  Method}. On each dataset, we select $100$ nodes
uniformly at random from those with nonzero in-degrees. We
generate single-source and top-$k$ SimRank queries from each node to
evaluate the algorithms.


\begin{figure*}[!t]
\begin{small}
 \centering
    \begin{tabular}{cccc}
  \multicolumn{4}{c}{\hspace{-4mm} \includegraphics[height=4.5mm]{./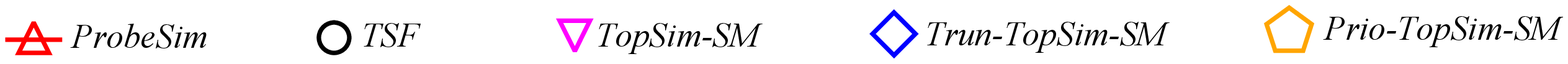}}
      \vspace{-1mm} \\
        \hspace{-8mm} \includegraphics[height=33mm]{./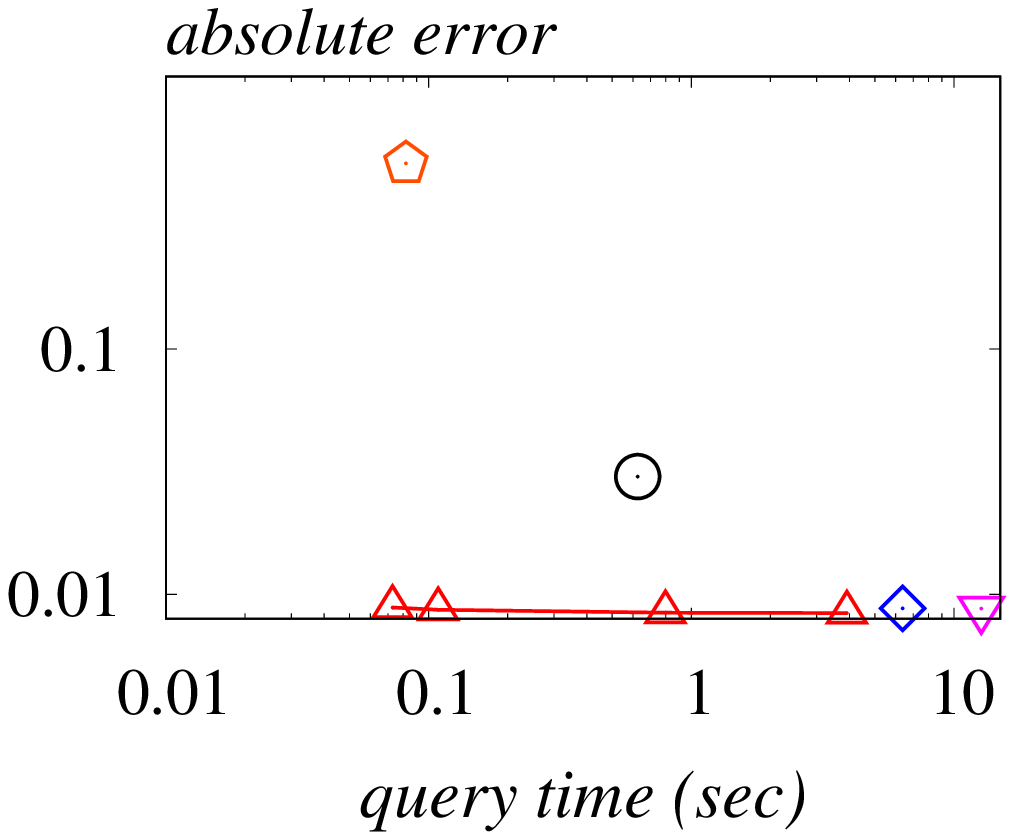}
&
        \hspace{-8mm} \includegraphics[height=33mm]{./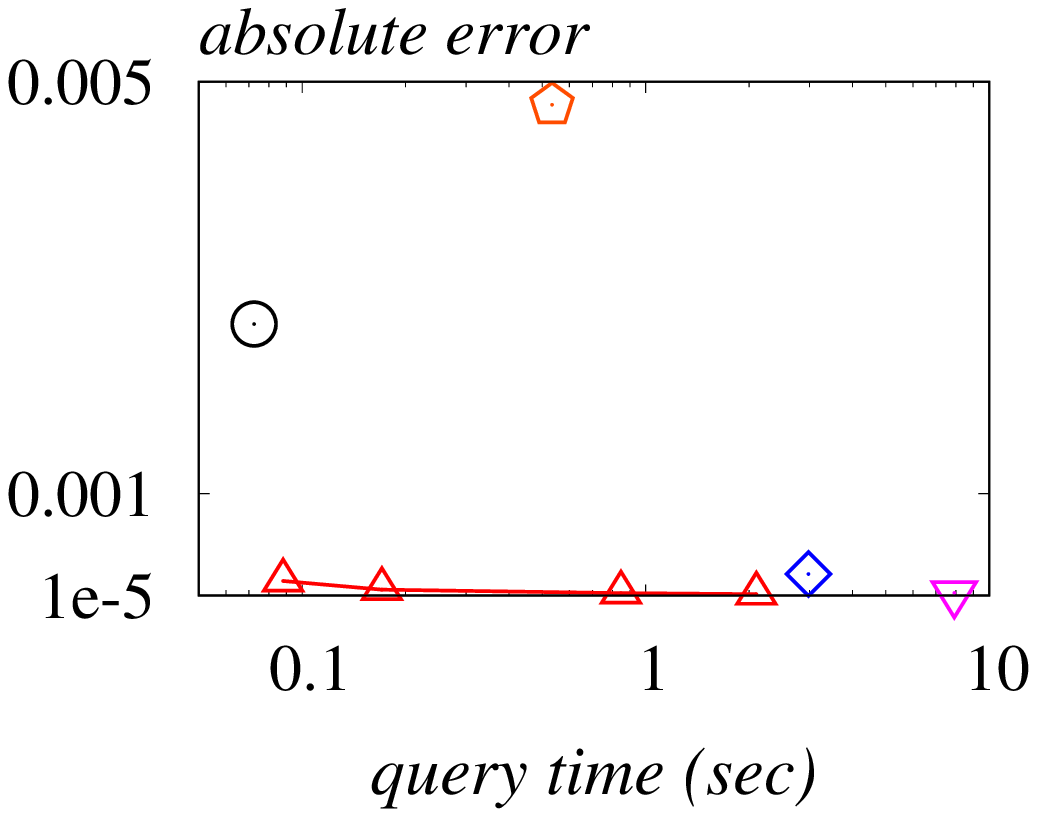}
&
        \hspace{-8mm} \includegraphics[height=33mm]{./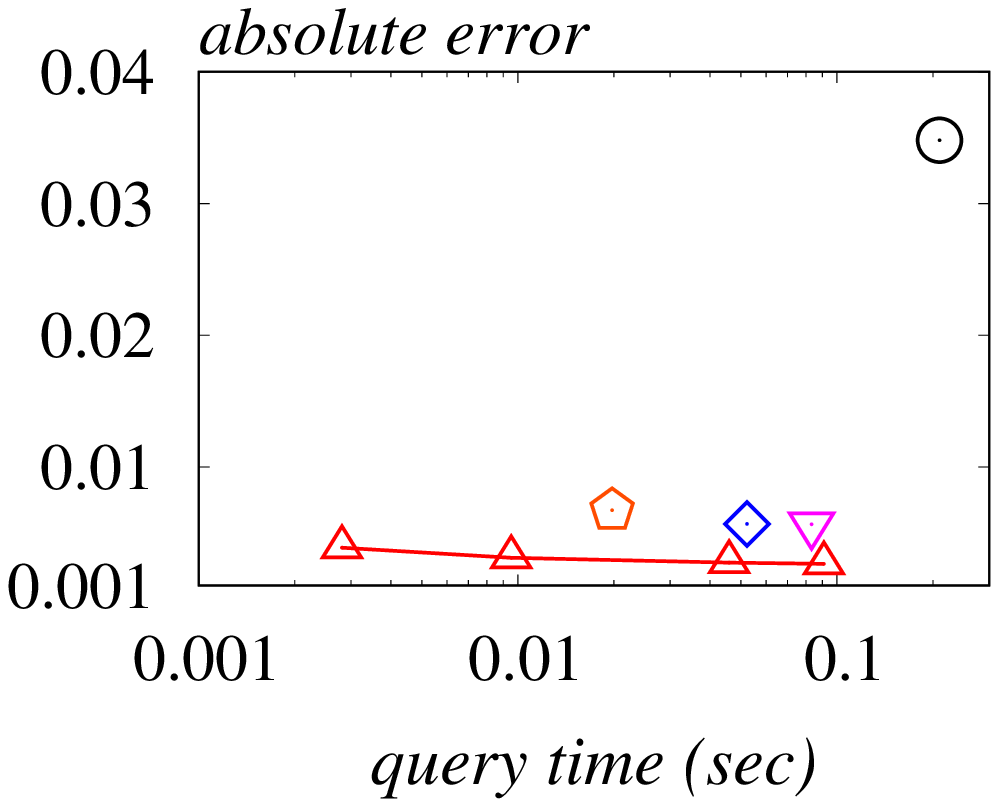}
&
        \hspace{-4mm} \includegraphics[height=33mm]{./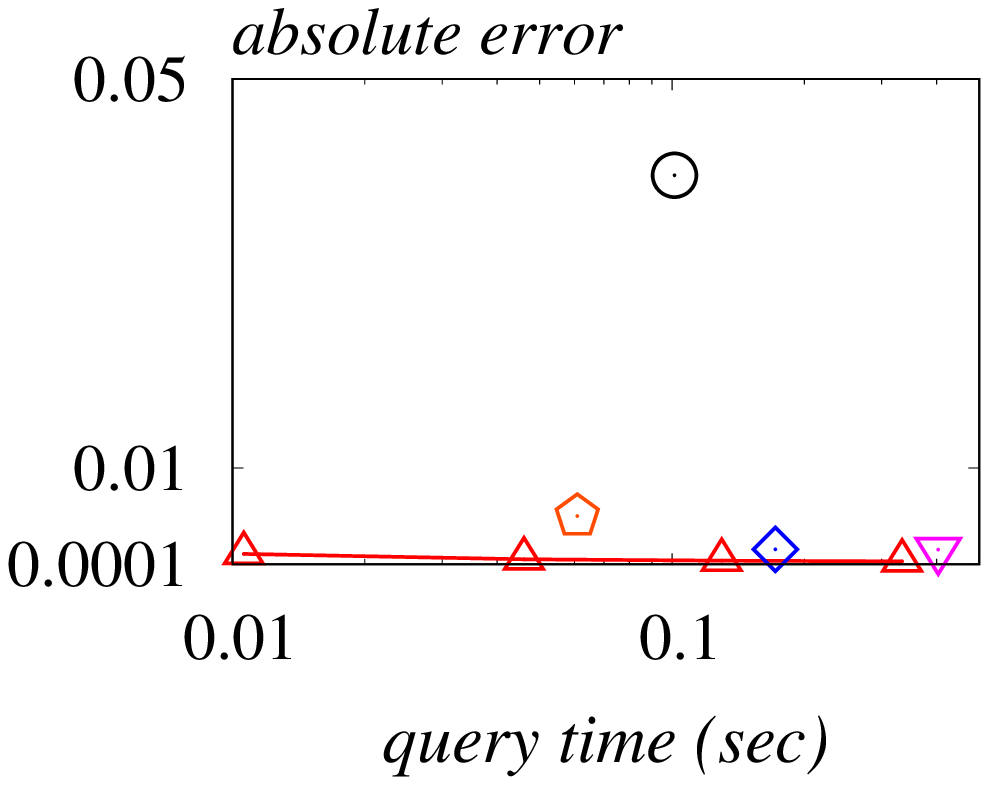}
   \vspace{-1mm} \\
       \hspace{-4mm} (a) {AS}  &
       \hspace{-8mm} (b) {Wiki-Vote} &
       \hspace{-8mm} (c) {HepTh} &
       \hspace{-8mm} (d) {HepPh} \\
 \end{tabular}
\vspace{-3mm}
 \caption{Absolute error in answering single-source SimRank queries on
 small graphs} \label{fig:exp-max-error}
\vspace{-0mm}
\end{small}
\end{figure*}

\begin{figure*}[!t]
\begin{small}
 \centering
    \begin{tabular}{cccc}
  \multicolumn{4}{c}{\hspace{-4mm} \includegraphics[height=4.5mm]{./Figs/legend_small.eps}}
      \vspace{-1mm} \\
        \hspace{-8mm} \includegraphics[height=33mm]{./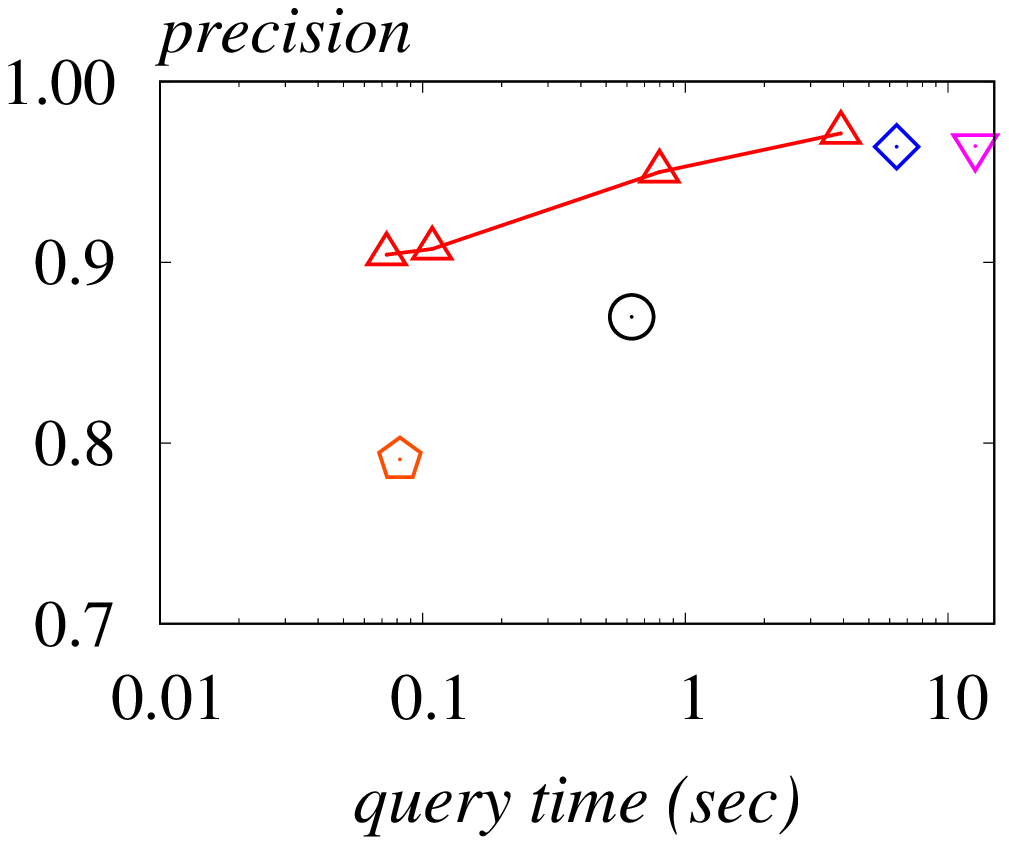}
   &
        \hspace{-8mm} \includegraphics[height=33mm]{./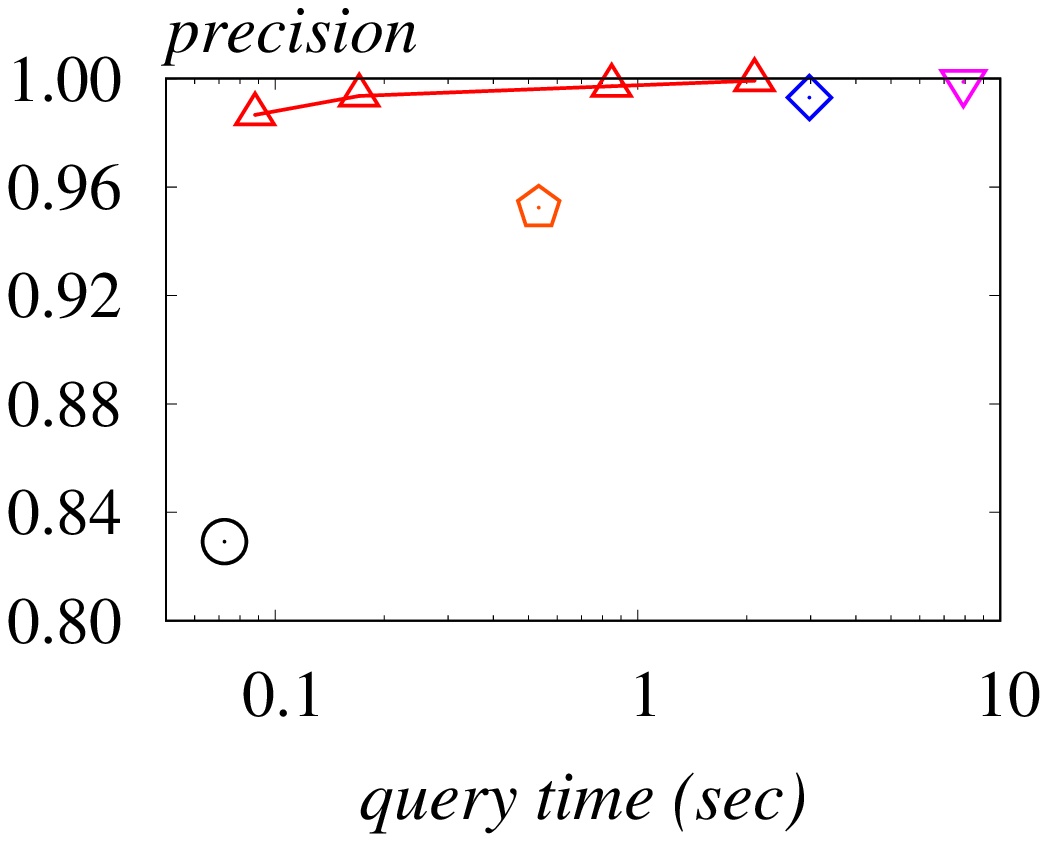}&
        \hspace{-8mm} \includegraphics[height=33mm]{./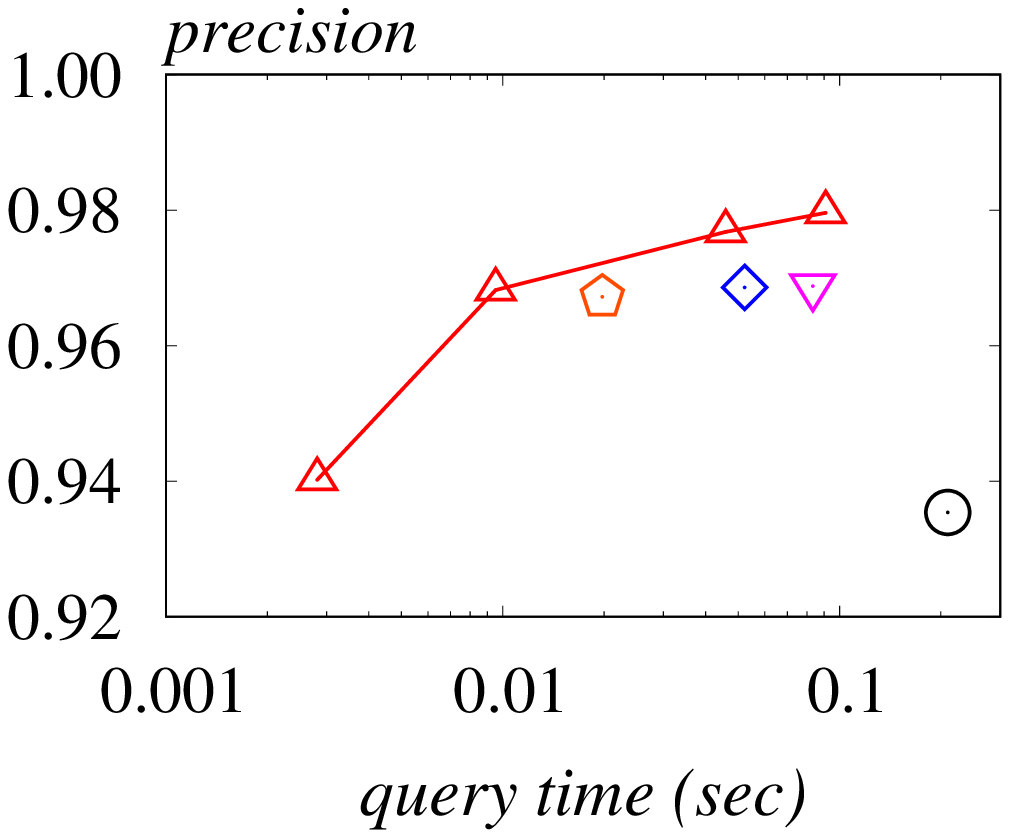} &
        \hspace{-4mm} \includegraphics[height=33mm]{./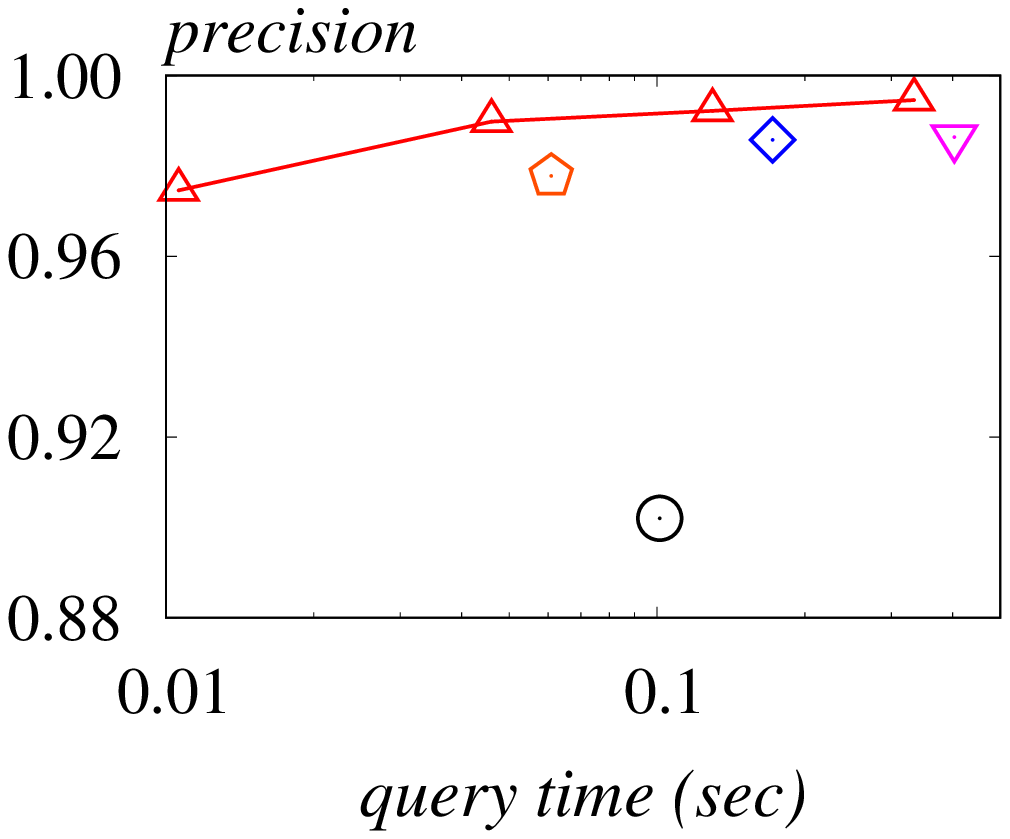}
   \vspace{-1mm} \\
       \hspace{-4mm} (a) {AS}  &
       \hspace{-8mm} (b) {Wiki-Vote} &
       \hspace{-8mm} (c) {HepTh} &
       \hspace{-8mm} (d) {HepPh} \\
 \end{tabular}
\vspace{-3mm}
 \caption{ {\em Precision@k} vs.\ query time for top-$\boldsymbol{k}$
   SimRank queries on small graphs} \label{fig:exp-precision-small}
\vspace{-0mm}
\end{small}
\end{figure*}

\begin{figure*}[!t]
\begin{small}
 \centering
    \begin{tabular}{cccc}
  \multicolumn{4}{c}{\hspace{-4mm} \includegraphics[height=4.5mm]{./Figs/legend_small.eps}}
      \vspace{-1mm} \\
        \hspace{-8mm} \includegraphics[height=33mm]{./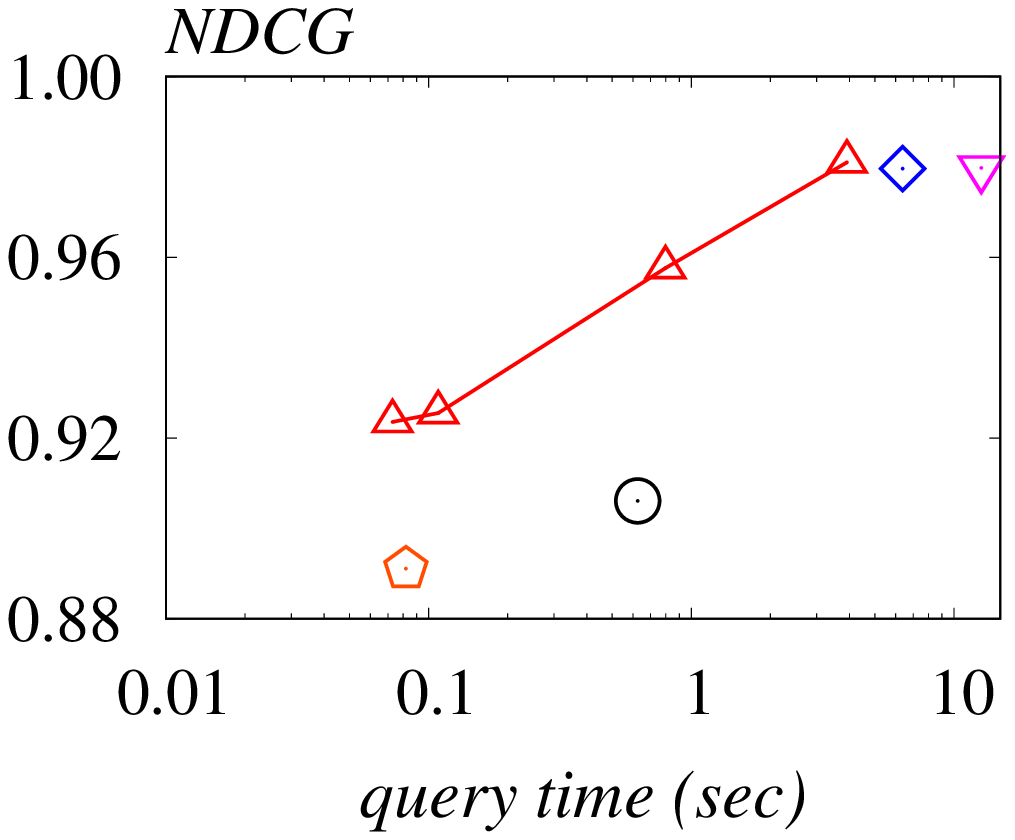}
   &
        \hspace{-8mm} \includegraphics[height=33mm]{./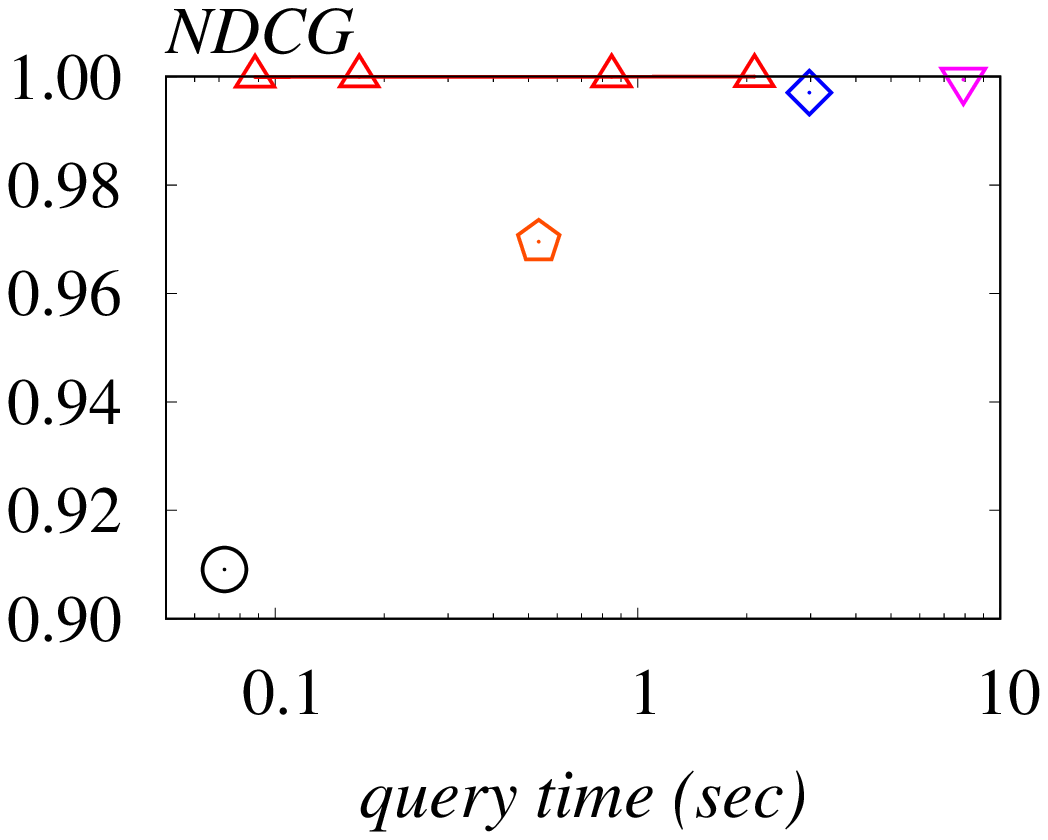}&
        \hspace{-8mm} \includegraphics[height=33mm]{./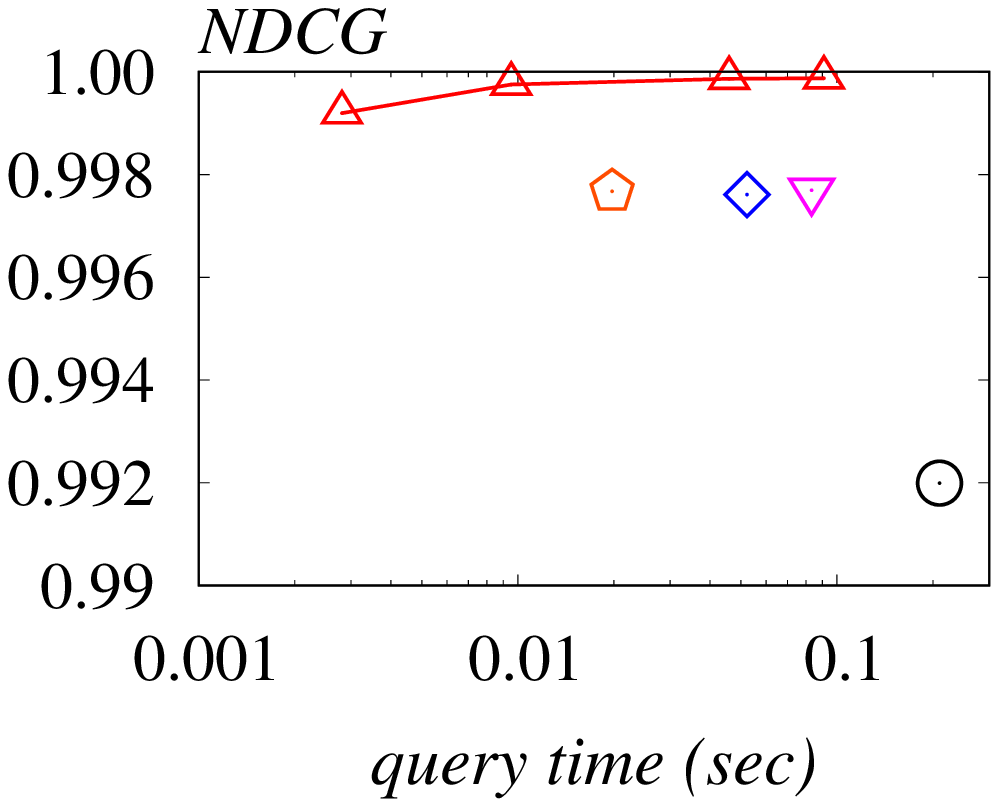} &
        \hspace{-4mm} \includegraphics[height=33mm]{./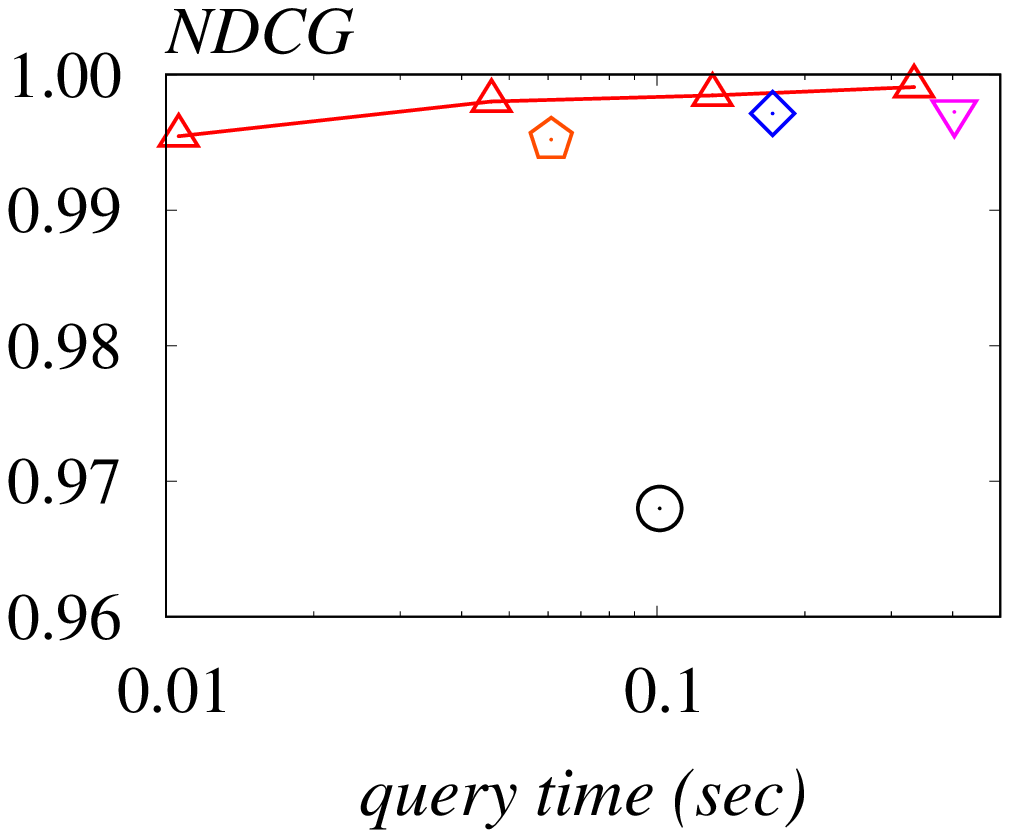}
   \vspace{-1mm} \\
       \hspace{-4mm} (a) {AS}  &
       \hspace{-8mm} (b) {Wiki-Vote} &
       \hspace{-8mm} (c) {HepTh} &
       \hspace{-8mm} (d) {HepPh} \\
 \end{tabular}
\vspace{-3mm}
 \caption{ {\em NDCG@k} vs.\ query time for top-$\boldsymbol{k}$
   SimRank queries on small graphs} \label{fig:exp-ndcg-small}
\vspace{-3mm}
\end{small}
\end{figure*}

 \begin{figure*}[!t]
 \begin{small}
  \centering
      \begin{tabular}{cccc}
   \multicolumn{4}{c}{\hspace{-4mm} \includegraphics[height=4.5mm]{./Figs/legend_small.eps}}
       \vspace{-1mm} \\
         \hspace{-8mm} \includegraphics[height=33mm]{./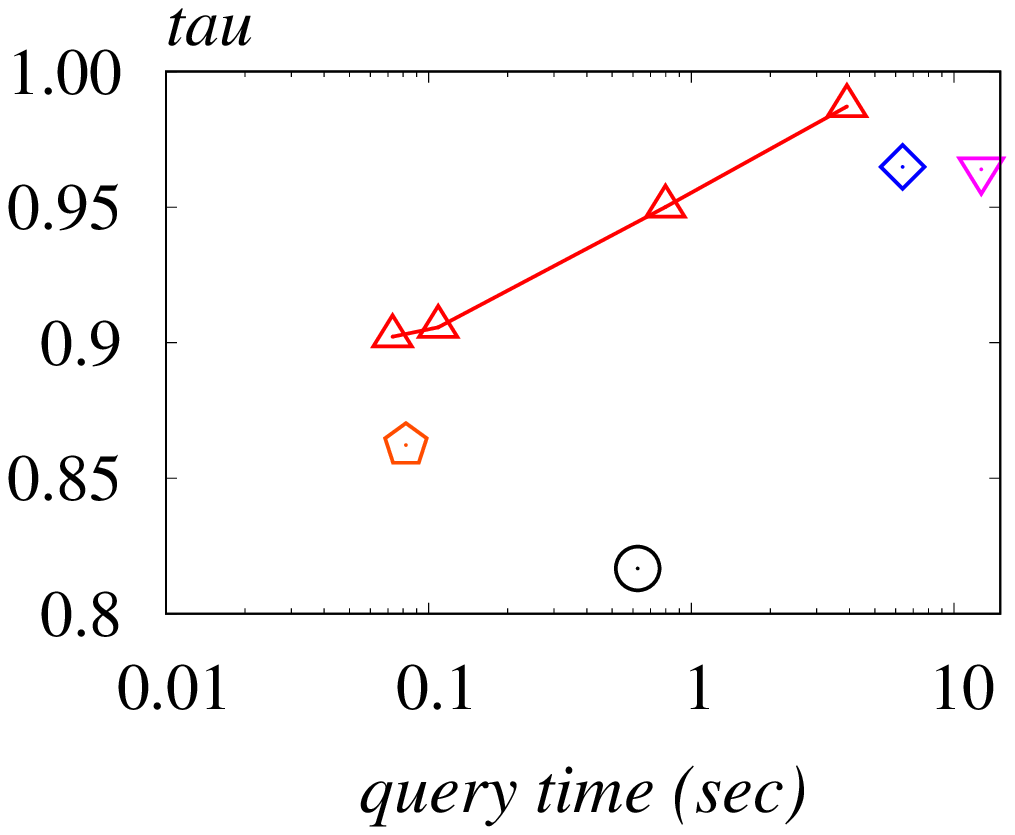}
    &
         \hspace{-8mm} \includegraphics[height=33mm]{./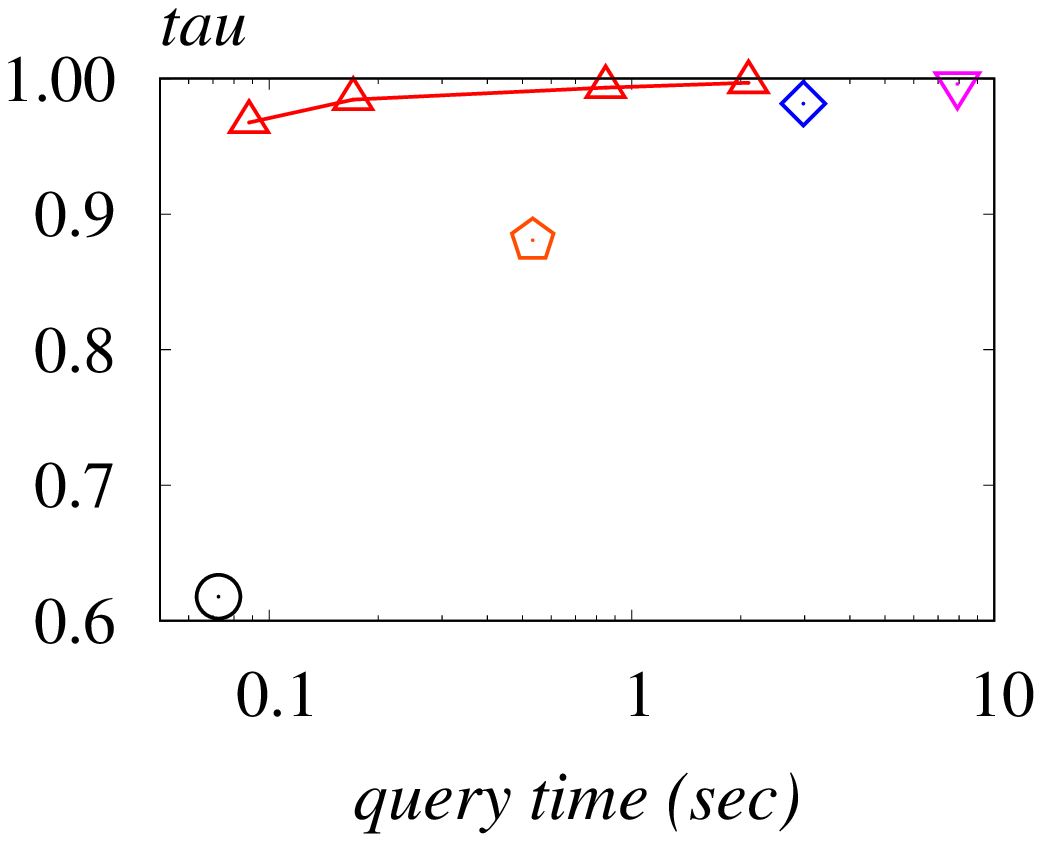}&
         \hspace{-8mm} \includegraphics[height=33mm]{./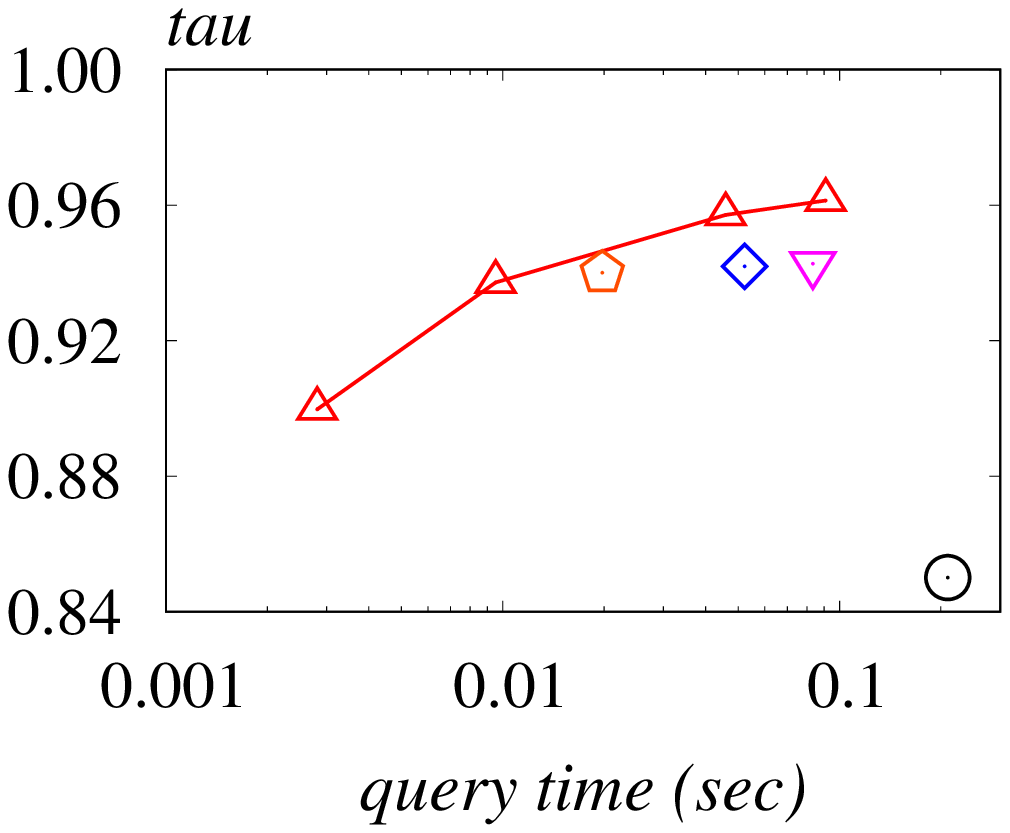} &
         \hspace{-4mm} \includegraphics[height=33mm]{./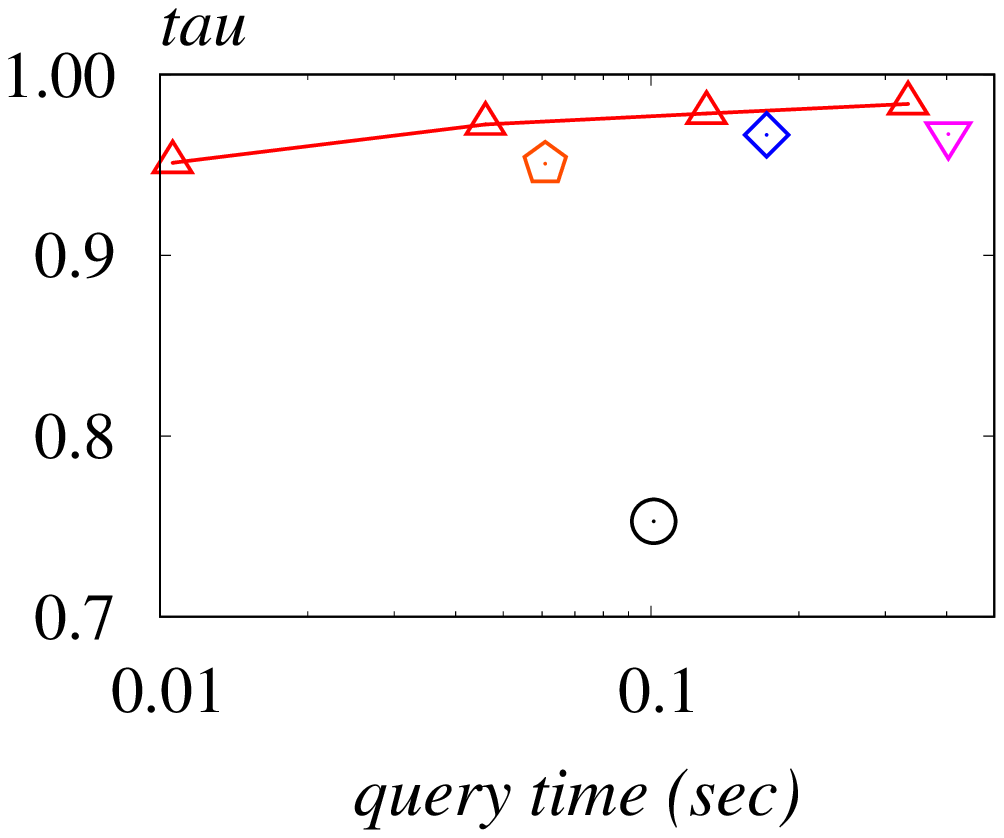}
    \vspace{-1mm} \\
        \hspace{-4mm} (a) {AS}  &
        \hspace{-8mm} (b) {Wiki-Vote} &
        \hspace{-8mm} (c) {HepTh} &
        \hspace{-8mm} (d) {HepPh} \\
  \end{tabular}
 \vspace{-3mm}
  \caption{$\tau_k$ vs. \ query time for top-$\boldsymbol{k}$ SimRank
    queries on small graphs} \label{fig:exp-tau-small}
 \vspace{-3mm}
 \end{small}
 \end{figure*}

\header
{\bf Parameters.} Following previous work
\cite{MKK14,Yu13,LVGT10,YuM15a,YuM15b}, we set the decay factor $c$ of
SimRank to 0.6.  {\em TSF} has
two internal parameters, $R_g$ and $R_q$, where $R_g$ is the number of one-way graphs stored in
the index of {\em TSF}, and $R_q$ is number of times each one-way
graph is reused in the query stage.
In accordance with the settings in~\cite{zhangexperimental} and~\cite{SLX15}, we set $R_g= 300$
and $R_q=40$. The TopSim based algorithms (i.e., {\em TopSim}, {\em Trunc-TopSim} and {\em Prio-TomSim}) has a common internal parameter $T$, which is the depth of the random walks.
{\em Trun-TopSim} has two additional parameters $h$ and $\eta$, where $1/h$
is the minimal degree threshold used to identify a high degree node
and $\eta$ is similarity threshold for trimming a random walk.
{\em Prio-TopSim} has an extra
parameter $H$, which is the number of random walks to be expanded at
each level. We set $T=3$,
$1/h = 100$, $\eta = 0.001$, and $H=100$, according to~\cite{zhangexperimental}
and~\cite{LeeLY12}. For {\em ProbeSim}, we apply all optimizations
presented in Sections \ref{sec:single-opt-prune} and
\ref{sec:single-opt-probe}. We vary the parameter $\e_a$ so that the
overall absolute error guarantee varies from $0.0125$ to $0.025$,
$0.05$, and $0.1$, so as to examine the tradeoff between the query efficiency and accuracy of {\em ProbeSim} in comparison to the other algorithms.

\header
\noindent{\bf Metrics.}
On each of the four small graphs, we use the power method
\cite{JW02} with $55$ iterations to compute the ground-truth SimRank
similarity of each node pair. This ensures that each ground-truth
value has at most $10^{-12}$ absolute error. Then, for each SimRank
similarity returned by a method, we compute its {\em absolute error (AbsError)} with
respect to the ground truth.

For each single-source SimRank query from a node $u$, we define the absolute error of the query as
$AbsError = \max_{v \in V, v \neq u} |s(u, v) - \s(u,v)|,$
which is the maximum absolute error incurred by the
method in computing the SimRank between $u$ and any other node. After that, we take the average of
the absolute error over 100 single-source SimRank queries and over 10 runs. Figure~\ref{fig:exp-max-error} shows the
average absolute error of each method as a function of its average query costs.

For Top-$k$ queries, we invoke the six algorithms to answer $100$ top-$k$ SimRank queries, with $k=50$.  We use {\em Precision@k}, the {\em Normalized Discounted Cumulative Gain (NDCG@k)}~\cite{jarvelin2000ir}, and the Kendall Tau difference $\tau_k$~\cite{nelsen2001kendall} to evaluate the accuracy of each algorithm. More precisely, given a query node $u$,
let $V_k = \{v_1, \ldots, v_k\} $ denote the top-$k$ node list
returned by the algorithm to be evaluated, and $V_k'=\{v_1', \ldots,
v_k'\}$ to be the ground truth of the top-$k$ results. {\em Precision@k}
measures  the fraction of answers that are among the ground-truth top-$k$ results, which is formally defined as
$Precision@k = {| V_k \cap V_k' | \over k}.$
{\em NDCG@k} measures the usefulness of a node based on its position in the result list, which is formally defined as
$NDCG@k = {1 \over Z_k} \sum_{i=1}^k {2^{s(u, v_i)}-1 \over \log(i+1)},$
where $Z_k = \sum_{i=1}^k {2^{s(u, v_i')-1} \over \log(i+1)}$ is the
discounted cumulative gain obtained by the ground truth of the top-$k$
results. Recall that $s(u, v_i)$ is the actual SimRank similarity between $u$ and $v_i$. Kendall Tau difference $\tau_k$ measures the accuracy of the ranking of the top-$k$ list, which
is defined as
$\tau_k = {\#(\textrm{concordant pairs})-\#(\textrm{discordant  pairs})\over k(k-1)/2}.$
Figures~\ref{fig:exp-precision-small},~\ref{fig:exp-ndcg-small}, and~\ref{fig:exp-tau-small} show the average {\em Precision@k}, {\em NDCG@k}, and $\tau_k$ of each method, respectively, as functions of its average query cost.

\header
{\bf Comparisons with TopSim based algorithms.}
Our first observation from Figure~\ref{fig:exp-max-error} is that {\em ProbSim} can achieve lower {\em AbsError} than the TopSim based algorithms, even when its query cost is much lower than those of the latter.
For example, {\em ProbeSim} yields an {\em AbsError} of 0.008 using 0.8 seconds on AS, while {\em Trun-TopSim} and {\em TopSim} achieve the same level of accuracy using 6.2 seconds and 13.5 seconds.  This is mainly due to the fact that {\em ProbeSim} is able to estimate the SimRank value up to any given precision, while the TopSim based algorithms have a level of accuracy that is equivalent to the {\em Power Method} with only $T=3$ iterations. Among the TopSim family,  the {\em AbsError} of  {\em Prio-TopSim} and {\em Trun-TopSim} are higher than that of {\em TopSim}, which concurs with the fact that the formal two algorithms use heuristics that trade accuracy for efficiency. 

For top-$k$ queries, Figures~\ref{fig:exp-precision-small} show that the query time for {\em ProbeSim} is 2 to 4 times smaller than those of the TopSim based algorithms, when providing a similar level of precision. Take the Wiki-Vote dataset for example.  {\em ProbeSim} takes less than 2 seconds to achieve a precision of $99.99\%$, while {\em TopSim} requires $8.76$ seconds.
In addition, {\em ProbeSim} achieves a precision of $99\%$ in less than 0.08 seconds, while {\em Prio-TopSim} only yields a precision of $95.5\%$ in 0.8 seconds. Meanwhile, Figure~\ref{fig:exp-ndcg-small} and~\ref{fig:exp-tau-small}  show that {\em ProbeSim} also achieves better {\em NDCG@k} and Kendall Tau difference than the TopSim based algorithms do, which suggests that the ranking of the top-$k$ results returns by {\em ProbeSim} is superior to those of the TopSim based algorithms.

\header
\noindent{\bf Comparisons with {\em TSF}.} To compare {\em ProbeSim} with {\em TSF},  we first observe from Figure~\ref{fig:exp-max-error} that the absolute error of  {\em ProbeSim} is significantly lower than that of {\em TSF}. There are two possible reasons for {\em TSF}'s relatively inferior performance. First, the number of one-way graphs $R_g$ used by {\em TSF} is only $300$, which means that the number of random walks used in the query stage is limited, leading to inaccurate SimRank estimations. In contrast, {\em ProbeSim} generates much more random walks, which enables it to achieve a much better precision. Second, as we discuss in Section~\ref{sec:compare}, {\em TSF} adopts two heuristics that make it unable to provide worst-case accuracy guarantee, which could contribute to its relatively large query error.

\begin{table*} [t]
\centering
\renewcommand{\arraystretch}{1.5}
\begin{small}
\tblcapup
\vspace{-5mm}
\caption{Space overheads and preprocessing costs comparison on large graphs.} \label{tbl:large_query}
\tblcapdown
 \begin{tabular} {|c|c|c|c|c|c||c|c|c|c|c||c|} \hline
   \multirow{2}{*}{\hspace{-2mm}{\bf Dataset}\hspace{-2mm}}  &  \multicolumn{5}{c||}{\bf
                                     Query Time (Seconds)}&
                                                            \multicolumn{5}{c||}{\bf Space Overhead (GBs)} & \multirow{2}{*}{\hspace{-1mm}{\bf Graph Size}\hspace{-1mm}} \\ \cline{2-11}
&\hspace{-1mm} {\em ProbeSim}\hspace{-1mm} & \hspace{-1mm}{\em TopSim}\hspace{-1mm} & \hspace{-1mm}{\em Trun-TopSim}\hspace{-1mm} &\hspace{-1mm} {\em Prio-TopSim}\hspace{-1mm} &\hspace{-2mm} {\em TSF} \hspace{-2mm}  &\hspace{-1mm} {\em ProbeSim} \hspace{-3mm}& \hspace{-1mm}{\em TopSim}\hspace{-1mm} &\hspace{-1mm} {\em Trun-TopSim}\hspace{-1mm} &\hspace{-1mm} {\em Prio-TopSim}\hspace{-1mm} &\hspace{-2mm} {\em TSF}\hspace{-2mm} & \\ \hline
LiveJournal & 0.4  & 554.73 & 13.31 & 1.26 & 0.52  & 0.05 & 17.2 &  10.1 & 0.09 & 10.8
                                                  & 0.88 GB \\ \hline
IT-2004& 0.006  & 4.91 & 2.34 & 0.21 & 0.93 & 0.4 & 1.7 & 0.9 & 1.0 &
                                                                      83.7
                                                                                                                                                                                                                                                                                                            & 10.9 GB\\ \hline
Twitter& 13.2 & N/A & N/A & 6220 & 175.34 & 0.6  & N/A & N/A & 1.2  &
                                                                      79.1  & 14.3 GB \\ \hline
Friendster&  3.2  & N/A
  & N/A & 58.4 &1036  &  0.9 & N/A &  N/A & 1.9 & 99.2 & 23.4 GB \\ \hline
 \end{tabular}
\end{small}
\tbldown
\vspace{-3mm}
\end{table*}

From Figures~\ref{fig:exp-precision-small}, we observe that {\em ProbeSim} dominates {\em TSF} for top-$k$ queries, as it is able to achieve higher {\em Precision@k}, {\em NDCG@k} and $\tau_k$ while incurring the same or a smaller computation overhead than {\em TSF} does. The only exception is on Wiki-Vote, where {\em ProbeSim} achieves a much higher precision ($99\%$ vs.\ $83\%$) than {\em TSF}
but incurs a slightly higher query cost (0.08 seconds vs.\ 0.06 seconds). Similar phenomenons can be observed for {\em NDCG@k} and $\tau_k$ on Wiki-Vote from Figure~\ref{fig:exp-ndcg-small}
and~\ref{fig:exp-tau-small}. This is due to the fact that the {\em AbsError} of {\em ProbeSim} is smaller than that of {\em TSF}.

An interesting observation for Wiki-Vote is that the {\em AbsError} of {\em TSF} is lower than that of {\em Prio-TopSim}, while the {\em Precision@k}, {\em NCDG@k} and $\tau_k$ of {\em Prio-TopSim} are
higher than those of {\em TSF}. One possible explanation is that Wiki-Vote is ``locally dense'' graph, in that more than $60\%$ of its nodes have zero in-degree, while the remaining ones form a dense
subgraph. Therefore, by setting  $H=100$, {\em Prio-TopSim} may omit some nodes with small SimRank values, which leads to large {\em AbsError}. However, {\em Prio-TopSim} may still examine most of the nodes with large SimRank values, thus achieving a relatively high precision.

\begin{figure*}[!t]
\begin{small}
 \centering
   \vspace{-2mm}
    \begin{tabular}{cccc}
\multicolumn{4}{c}{\hspace{-4mm} \includegraphics[height=5mm]{./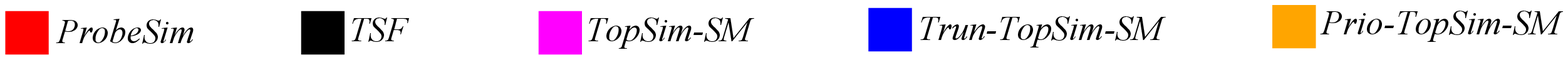}} \vspace{-1mm} \\
        \hspace{-4mm} \includegraphics[height=30mm]{./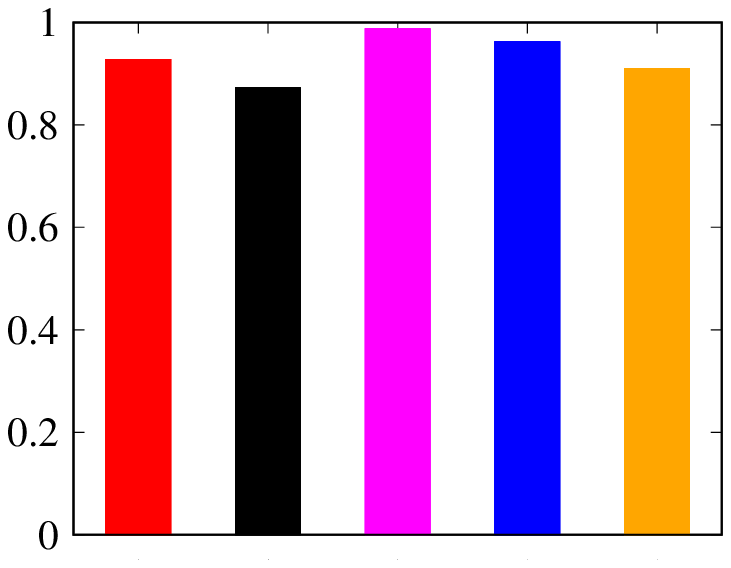} &
        \hspace{-4mm} \includegraphics[height=30mm]{./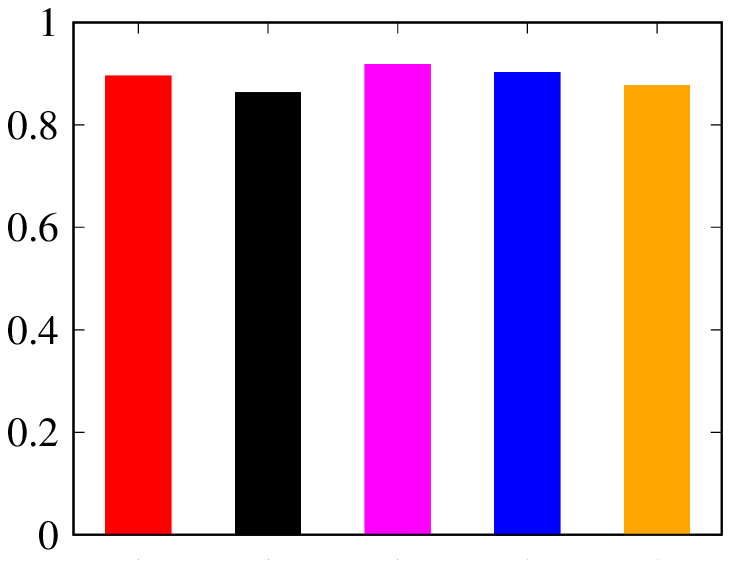} &
        \hspace{-4mm} \includegraphics[height=30mm]{./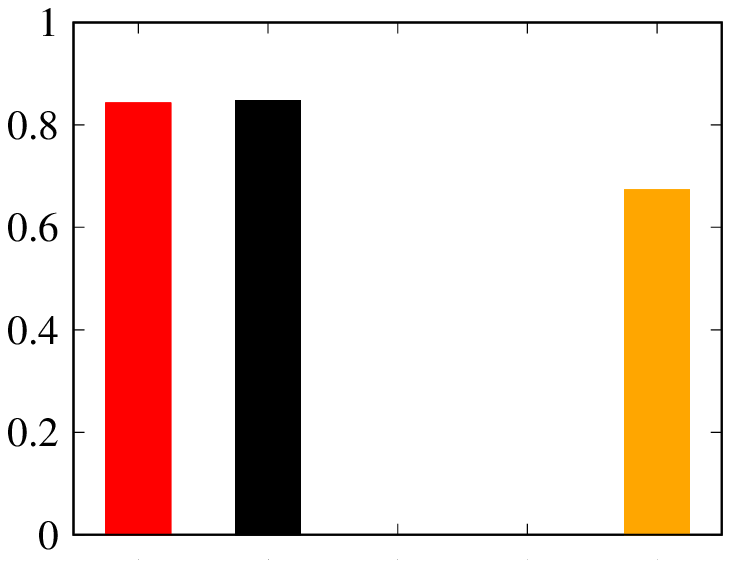} &
        \hspace{-4mm} \includegraphics[height=30mm]{./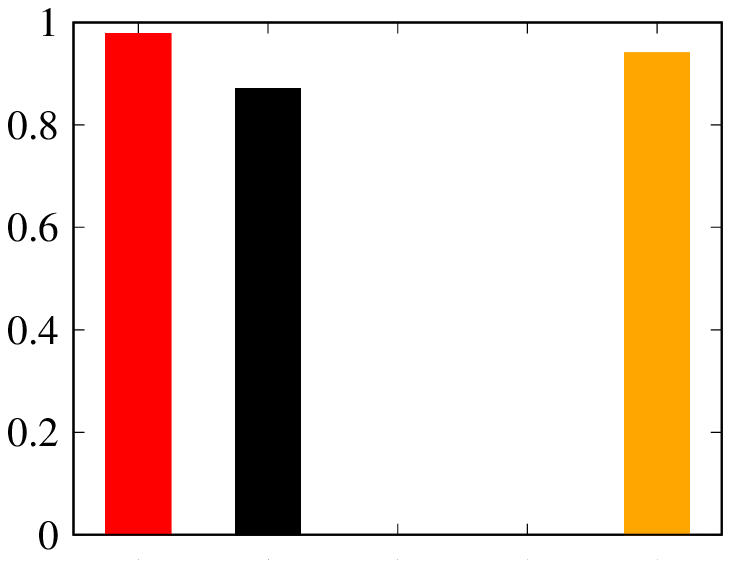}
   \vspace{-1mm} \\
       \hspace{-4mm} (a) {LiveJournal}  &
       \hspace{-8mm} (b) {IT-2004} &
       \hspace{-8mm} (c) {Twitter} &
       \hspace{-8mm} (d) {Friendster} \\
 \end{tabular}
\vspace{-3mm}
 \caption{ {\em Precision@k} for top-$\boldsymbol{k}$ SimRank queries on large graphs} \label{fig:exp-precision-large}
\vspace{-0mm}
\end{small}
\end{figure*}

\begin{figure*}[!t]
\begin{small}
 \centering
   \vspace{-2mm}
    \begin{tabular}{cccc}
\multicolumn{4}{c}{\hspace{-4mm} \includegraphics[height=5mm]{./Figs/legend_large.eps}} \vspace{-1mm} \\
        \hspace{-4mm} \includegraphics[height=30mm]{./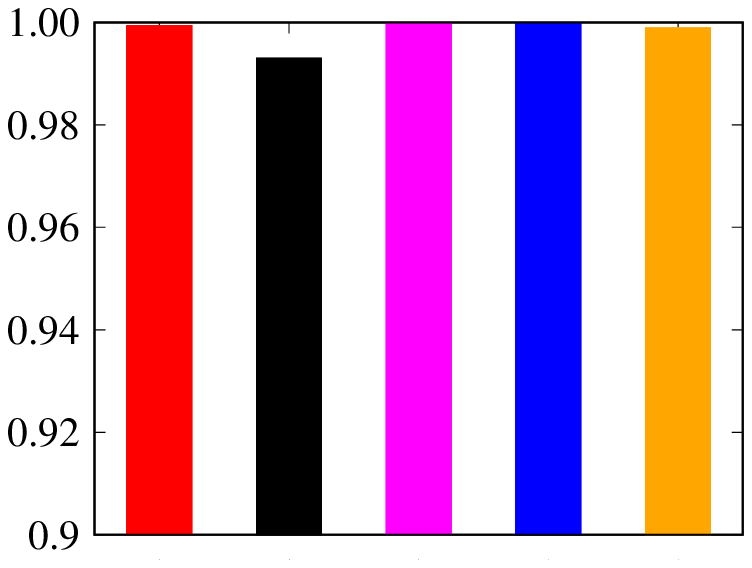} &
        \hspace{-4mm} \includegraphics[height=30mm]{./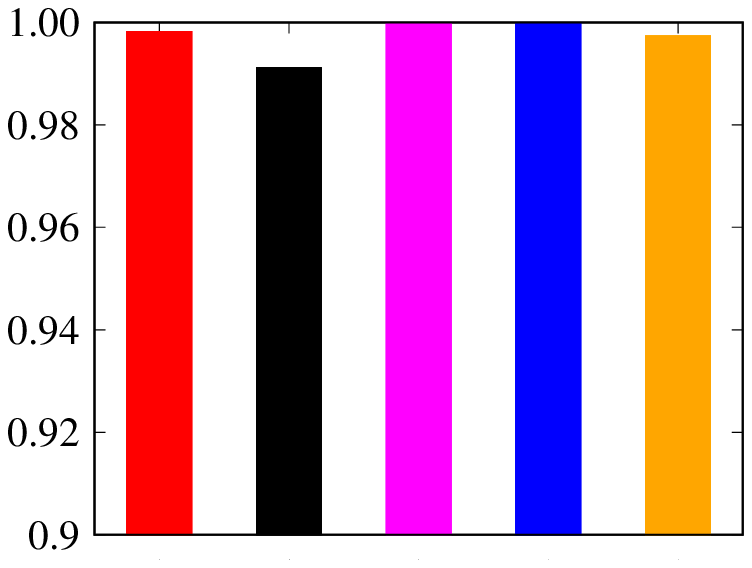} &
        \hspace{-4mm} \includegraphics[height=30mm]{./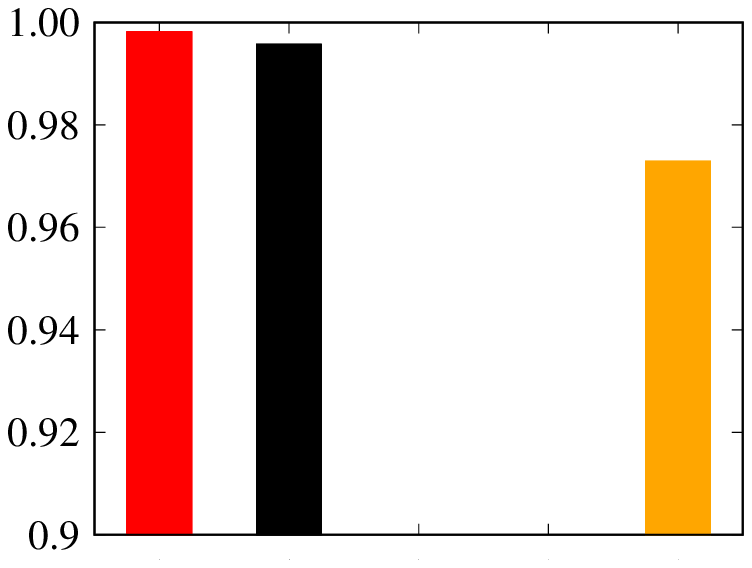} &
        \hspace{-4mm} \includegraphics[height=30mm]{./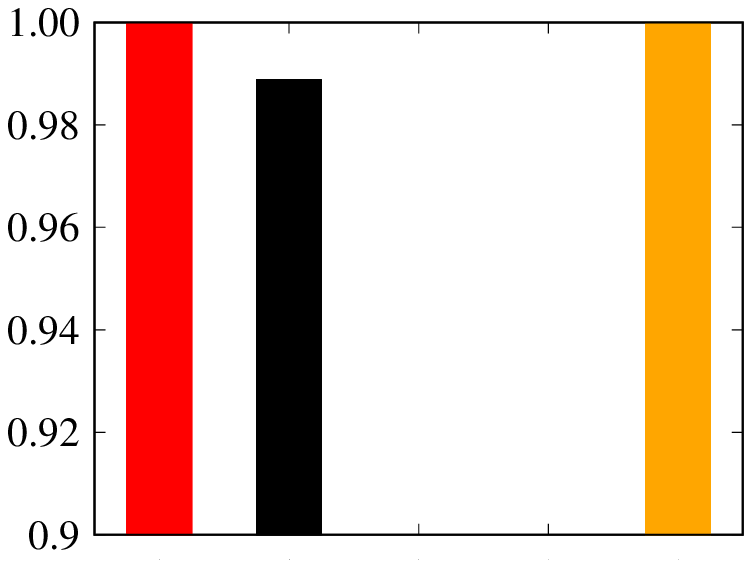}
   \vspace{-1mm} \\
       \hspace{-4mm} (a) {LiveJournal}  &
       \hspace{-8mm} (b) {IT-2004} &
       \hspace{-8mm} (c) {Twitter} &
       \hspace{-8mm} (d) {Friendster} \\
 \end{tabular}
\vspace{-3mm}
 \caption{ {\em NDCG@k} for top-$\boldsymbol{k}$ SimRank queries on
   large graphs} \label{fig:exp-ndcg-large}
\vspace{-3mm}
\end{small}
\end{figure*}

 \begin{figure*}[!t]
 \begin{small}
  \centering
     \begin{tabular}{cccc}
 \multicolumn{4}{c}{\hspace{-4mm} \includegraphics[height=5mm]{./Figs/legend_large.eps}} \vspace{-1mm} \\
         \hspace{-4mm} \includegraphics[height=30mm]{./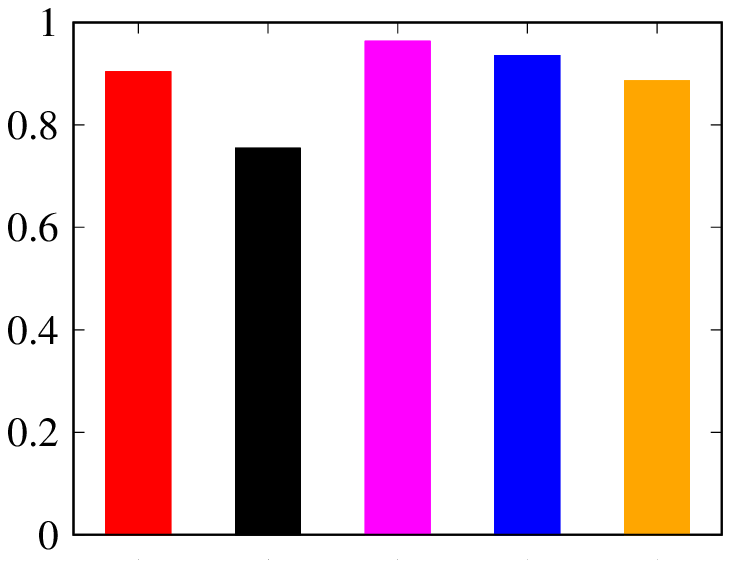} &
         \hspace{-4mm} \includegraphics[height=30mm]{./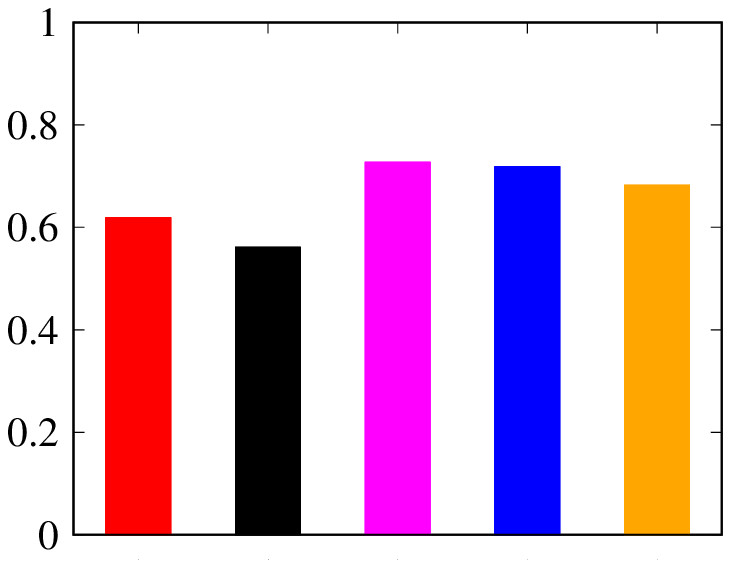} &
         \hspace{-4mm} \includegraphics[height=30mm]{./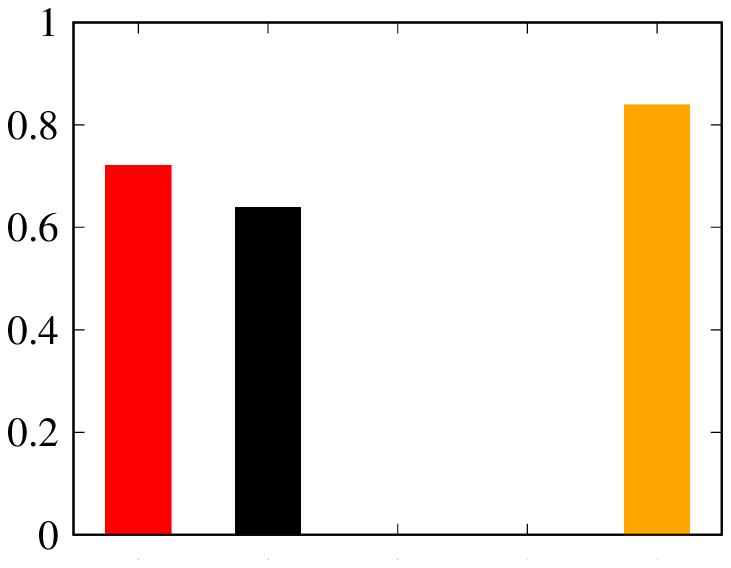} &
         \hspace{-4mm} \includegraphics[height=30mm]{./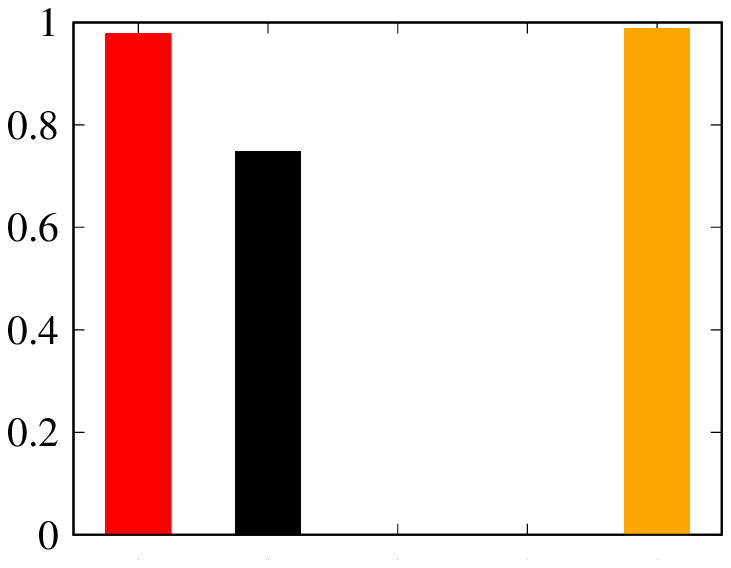}
    \vspace{-1mm} \\
        \hspace{-4mm} (a) {LiveJournal}  &
        \hspace{-8mm} (b) {IT-2004} &
        \hspace{-8mm} (c) {Twitter} &
        \hspace{-8mm} (d) {Friendster} \\
  \end{tabular}
 \vspace{-3mm}
  \caption{ $\tau_k$ for top-$\boldsymbol{k}$ SimRank queries on large graphs} \label{fig:exp-tau-large}
 \vspace{-3mm}
 \end{small}
 \end{figure*}

\vspace{-1mm}
\subsection{Experiments on Large Graphs}

Next, we evaluate the algorithms on the four large graphs with up to
$2.59$ billions of edges. Previous work ignores the accuracy
comparisons on such graphs, as the ground truth of top-$k$ results on these graphs are unavailable due to the high computational cost of the {\em Power Method}. To the best of our knowledge, we are the first to empirically evaluate both accuracy and efficiency
of SimRank algorithms on billion-edge graphs.


\vspace{-1mm}
\header
\noindent{\bf Pooling.} As {\em Power Method} only works for small graphs, we need an alternative approach to evaluate the accuracy of SimRank algorithms on large graphs.
Towards this end, we use {\em pooling}~\cite{manning2008introduction}, which is a standard approach for evaluating top-$k$ documents ranking quality in Information Retrieval (IR) systems when the ground-truth ranking scores of all documents are difficult to obtain.  The basic idea of pooling is as follows. Suppose that we are to evaluate $\ell$ IR systems, $A_1, \ldots, A_\ell$, each of which aims to return the top-$k$ documents that are most relevant to a certain query. We first take the top-$k$ documents returned by each system, and we merge them into a pool, with duplicates removed. 
Then, we present the results in the pool to experts for evaluation. Based on the relevance scores provided by the experts, we pick the best $k$ documents from the pool, and use them as the ground truth
for evaluating the top-$k$ results returned by $A_1, \ldots, A_\ell$.

In the scenario of evaluating SimRank algorithms, we use single-pair Monte Carlo algorithm as the ``expert'' for gauging the results in the pool. More precisely, for each query node $u$, we retrieve the top-$k$ nodes returned by each algorithm, remove the duplicates, and merge them into a pool. For each node $v$ in the pool, we estimate $s(u, v)$ using the Monte Carlo algorithm. We set the parameters of the Monte Carlo algorithms such that it incurs an error less than $0.0001$ with a confidence over $99.999\%$. Then we take the $k$ nodes with the highest estimated SimRank scores from the pool as the ground truth. Essentially, these $k$ results are the best possible $k$ nodes that can be obtained by any of the algorithms considered. 

\vspace{-1mm}
\header
\noindent{\bf Parameters and setups. }
We compare {\em ProbeSim}, {\em TopSim}, {\em Trun-TopSim}, {\em Prio-TopSim} and {\em TSF} using the pooling approach.  On each dataset, we select $20$ nodes
uniformly at random from the nodes with nonzero in-degrees, and we generate top-$k$ SimRank queries from each node.  We generate top-$k$ SimRank queries from each node to evaluate the algorithms.
For {\em TSF} and the TopSim based algorithms, we use the same parameters as in the experiments on small graphs. For {\em ProbeSim} , however, we can no longer vary the error parameter
$\e_a$, since changing its parameters may result in a different ``ground truth'' top-$k$ nodes in the pool, rendering it difficult to compare different algorithms. 
Therefore, we fix $\e_a=0.1$ for {\em ProbeSim} in this set of experiments.

Table~\ref{tbl:large_query} shows the average query time of each algorithm, while Figure~\ref{fig:exp-precision-large}, \ref{fig:exp-ndcg-large}, and~\ref{fig:exp-tau-large} show the {\em Precision@k}, {\em NDCG@k}, and $\tau_k$, respectively, of each algorithm. We exclude {\em TopSim} and {\em Trun-TopSim} from the experiments on Twitter and Friendster, because for some queries they either run out of memory or require more than 24 hours. In addition, on Friendster, the index size of {\em TSF} exceeds the size of the main memory (i.e., 96GB), due to which we move $100$
one-way graphs in {\em TSF} to the disk, as suggested in~\cite{SLX15}. 

\vspace{-1mm}
\header
\noindent{\bf Comparisons with TopSim based algorithms.}
Our first observation from Table~\ref{tbl:large_query} is that
 the query cost of {\em ProbeSim}
is significantly lower than those of the TopSim based
algorithms on all four graphs. In particular, on the Twitter dataset, {\em TopSim} and {\em
  Trun-TopSim} could not finish query processing in 24 hours, while {\em Prio-TopSim} takes 2 hours on average to answer a query. On the other hand, {\em ProbeSim} is able to answer a query within 13 seconds on average.

It has been observed in~\cite{zhangexperimental} that the running time of TopSim based algorithms is sensitive to the subgraph structure and density around the query node. For example, the running time of {\em Prio-TopSim} on ``locally dense'' graphs, such as
Twitter and Friendster, are higher than those on ``locally sparse''
graphs, such as IT-2004. Table~\ref{tbl:large_query} suggests that {\em
  ProbeSim} shares the same property, as its query cost on Twitter is significantly higher than those on other graphs. In contrast, the query cost of ProbeSim is less sensitive to the local subgraph structure and density. 

Figure~\ref{fig:exp-precision-large}, \ref{fig:exp-ndcg-large}, and
~\ref{fig:exp-tau-large} show  that in terms of accuracy, {\em
  ProbeSim} outperforms the TopSim based algorithms on Twitter and
Friendster. For example, {\em
  ProbeSim} achieves an average precision of $84\%$ on Twitter, while
{\em Prio-TopSim} only achieves an average precision of $67\%$.  On
LiveJournal and IT-2004,  {\em TopSim} and {\em Trun-TopSim} offer
a slightly better accuracy than {\em ProbeSim} does, at the cost of significantly higher cost. For example, {\em TopSim} yields a precision of $91.7\%$ on IT-2004, while {\em ProbeSim}
achieves a precision of $89.5\%$; however, the running times of {\em TopSim} is 600 times higher than that of {\em ProbeSim}
on IT-2004. Figures~\ref{fig:exp-ndcg-large} shows that {\em NDCG@k} of {\em TopSim} and {\em ProbeSim} are essentially the same on IT-2004, which indicates that the accuracy of the two methods are highly comparable.



\vspace{-1mm}
\header
\noindent{\bf Comparisons with {\em TSF}.}
 From Table~\ref{tbl:large_query}, we first observe that {\em ProbeSim} achieves better query time comparing to {\em TSF}. For example, it takes 180 seconds for {\em TSF} to answer a query on Twitter, while {\em ProbeSim} only requires $12$ seconds. Furthermore, {\em TSF} runs out of 96 GB memory on Friendster, even though the dataset itself is only 23 GB in size. Consequently, {\em TSF} has to store some of the one-way graphs on the disk, which leads to severe degradation of query efficiency. 
 On the other hand, {\em ProbeSim} is able to handle a query on Friendster within 3.2 seconds on average.

LiveJournal is the only dataset on which  {\em ProbeSim} and {\em TSF} use the same amount of time to answer a query. The main reason is that (i) LiveJournal (with only 4 million nodes) is a relatively small graph comparing to the other three large graphs, and (ii) the query cost of {\em TSF} tends to decrease when the graph size reduces. In contrast, {\em ProbeSim} is less sensitive to
the graph size.

In terms of query accuracy, Figure~\ref{fig:exp-precision-large}, ~\ref{fig:exp-ndcg-large}, and~\ref{fig:exp-tau-large} show that {\em ProbeSim} is able to
provide more accurate results than {\em TSF} does on LiveJournal,IT-2004, and Friendster. For example, on Friendster, {\em ProbeSim} achieves a {\em Precision@k} of $98\%$ and an {\em NDCG@k} of $0.9998$, while {\em TSF} yields a {\em Precision@k} of $87\%$ and an {\em NDCG@k} of $0.9914$. On Twitter, {\em ProbeSim} and {\em TSF} offer almost the same {\em Precision@k} and {\em NDCG@k}, but {\em ProbeSim} outperforms {\em TSF} on the Kendall Tau difference. This suggests
that the ranking accuracy of {\em ProbeSim} is better than that of {\em TSF}. We also observe that the precision of all three algorithms are relatively low on Twitter, which implies that SimRank on Twitter is still a difficult problem to solve.


An interesting observation is that {\em TSF} outperforms {\em Prio-TopSim} in terms of accuracy on Twitter, which contrasts the case on small graphs. As mentioned in~\cite{SLX15}, {\em Prio-TopSim} only expands $H$ random walks at each level, and hence, its performance heavily rely on the random walks chosen. Because Twitter has very denser local structures, $H=100$ may not be not sufficient for {\em Prio-TopSim} to explore all possible candidates. In contrast, the random sample framework of {\em TSF} treats each node equally, and thus, it
gives relatively stable performance across social graphs and web graphs.

Finally, Table~\ref{tbl:large_query} shows that {\em TSF} incurs significant overheads in terms of space. In particular, the index size of {\em TSF} is one to two orders of magnitude larger than the size of the input graph $G$. For example, the index of {\em TSF} for IT-2004 is 85 GB, while the graph size is only 8 GB. Furthermore, {\em TSF} runs out of memory when preprocessing Friendster, which leads to performance degradation. In contrast, {\em ProbeSim} is able to efficiently handle queries on all large graphs without any preprocessing.

\vspace{-2mm}
\section{Conclusions} \label{sec:conclusions}
\balance
This paper presents {\em ProbeSim}, an algorithm for single-source and top-$k$ SimRank computation without preprocessing. {\em ProbeSim} answers any single-source SimRank query in $O(\frac{n}{{\e_a}^2} \log \frac{n}{\delta})$ expected time, and it ensures that, with $1-\delta$ probability, all SimRank similarities returned have at most $\e_a$ absolute error. Our experiments show that the algorithm significantly outperforms the existing approaches in terms of query efficiency, and they are more scalable than the existing index-based methods, as they are able to handle graphs that are too large for the latter to preprocess. For future work, we plan to study lightweight indexing approaches for SimRank that provide higher effectiveness than our current algorithms on large graphs (such as Twitter) without incurring significant space and time in computation.



\vspace{-2mm}
\section{Acknowledgments}
This work was partly supported by the National Natural Science Foundation of China
      (No. 61502503, No. 61532018, No. 61502324 and No. 61472427), by the 
      National Key Basic Research Program (973 Program) of China (No.
      2014CB340403, No. 2012CB316205), by Academy of Finland (310321),
      by the DSAIR center at the Nanyang Technological University, by
      a gift grant from Microsoft Research Asia, and by Grant
      MOE2015-T2-2-069 from MOE, Singapore. 

%


\allowdisplaybreaks

\bibliographystyle{abbrv}
\bibliography{ref}

\appendix
\section{Chernoff Bound} \label{sec:chernoff}
\begin{lemma}[Chernoff Bound \cite{ChungL06}] \label{lmm:chernoff}
For any set $\{x_i\}$ ($i \in [1, n_x]$) of i.i.d.\ random variables with mean $\mu$ and $x_i \in [0, 1]$,
$$\Pr\left\{\left|\sum_{i=1}^{n_x} x_i - n_x \mu\right| \geq n_x \e\right\} \leq \exp\left(-\dfrac{n_x \cdot \e^2}{\frac{2}{3}\e + 2\mu}\right).$$
\end{lemma}

\section{Proofs} \label{sec:proofs}

\subsection{Proof of Theorem~\ref{thm:pruning}}
To prove Theorem~\ref{thm:pruning}, we independently bound the error
introduced by pruning rule 1 and pruning rule 2.
\begin{lemma}
Let $\s_k(u,v, \varepsilon_t)$ and $\s_k(u,v)$ denote
estimator with and without applying pruning rule 1, respectively. We have
$$0 \le \s_k(u,v) - \s_k(u,v, \varepsilon_t) \le \varepsilon_t.$$
\end{lemma}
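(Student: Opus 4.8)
The plan is to express both estimators explicitly via Lemma~\ref{lem:Score} and then bound their difference by the survival probability of a single $\sqrt{c}$-walk. By Lemma~\ref{lem:Score}, a single trial without pruning produces $\s_k(u,v) = \sum_{i=2}^{\ell} P(v, W(u,i))$, where $\ell$ is the length of the sampled walk $W_k(u)$. Pruning rule~1 discards exactly the nodes $u_i$ with $i > \ell_t$, so the pruned estimator is the same sum truncated at $\min(\ell,\ell_t)$. Hence the difference is $\s_k(u,v) - \s_k(u,v,\varepsilon_t) = \sum_{i=\ell_t+1}^{\ell} P(v, W(u,i))$, which is an empty sum equal to $0$ when $\ell \le \ell_t$. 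Since every first-meeting probability $P(v,W(u,i))$ is non-negative, the lower bound $0$ is immediate.

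For the upper bound, the key observation---and the step that yields the tight constant $\varepsilon_t$ rather than the cruder $\varepsilon_t/(1-\sqrt{c})$ mentioned in the intuition---is that the first-meeting events are mutually exclusive. Writing $E_i = \{v_i = u_i,\, v_{i-1}\neq u_{i-1}, \ldots, v_1 \neq u_1\}$ for the walk $W(v)$, we have $P(v,W(u,i)) = \Pr[E_i]$, and for $i<i'$ the events $E_i$ and $E_{i'}$ are disjoint because $E_i$ forces $v_i = u_i$ while $E_{i'}$ forces $v_i \neq u_i$. Therefore the difference equals $\Pr\big[\bigcup_{i=\ell_t+1}^{\ell} E_i\big]$, the probability that $W(v)$ first meets $W(u)$ at some step beyond $\ell_t$.

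I would then bound this union probability by a necessary condition: every $E_i$ with $i \ge \ell_t+1$ requires the node $v_i$ to exist, which forces $W(v)$ to survive at least to its $(\ell_t+1)$-th node. A $\sqrt{c}$-walk reaches node $v_{\ell_t+1}$ only if it fails to terminate at each of its first $\ell_t$ nodes, an event of probability $(\sqrt{c})^{\ell_t}$. Since $\ell_t = \log\varepsilon_t/\log\sqrt{c}$ gives $(\sqrt{c})^{\ell_t} = \varepsilon_t$, we conclude $\sum_{i=\ell_t+1}^{\ell} P(v,W(u,i)) \le (\sqrt{c})^{\ell_t} = \varepsilon_t$, completing the proof. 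The only minor care needed is the rounding of $\ell_t$ to an integer truncation step $L \ge \ell_t$; because $\sqrt{c}<1$ this only shrinks the survival probability to $(\sqrt{c})^{L} \le (\sqrt{c})^{\ell_t} = \varepsilon_t$, so the bound is unaffected. I expect the main (conceptual) obstacle to be recognizing that mutual exclusivity combined with a single survival-probability bound suffices; a naive term-by-term geometric summation of the $(\sqrt{c})^{i-1}$ bounds only delivers the weaker $\varepsilon_t/(1-\sqrt{c})$.
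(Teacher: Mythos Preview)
Your proof is correct and is considerably cleaner than the paper's. The paper argues by a two-stage induction on ``diagonal'' sums of intermediate \textsf{PROBE} scores: it tracks $\sum_{j=0}^{d} Score_{\ell-d+j}(x,j)$ versus the truncated version, first showing the gap stays $\le 1$ while the diagonal is above the truncation point, and then showing it decays geometrically in $\sqrt{c}$ once below. Your argument bypasses this bookkeeping entirely: you invoke Lemma~\ref{lem:Score} to write both estimators as sums of first-meeting probabilities, use the disjointness of the events $E_i$ to collapse the tail into the single probability $\Pr\bigl[\bigcup_{i>\ell_t} E_i\bigr]$, and bound that by the survival probability $(\sqrt{c})^{\ell_t}=\varepsilon_t$ of the walk $W(v)$.

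What you gain is transparency: the tight constant $\varepsilon_t$ (rather than $\varepsilon_t/(1-\sqrt{c})$) drops out immediately from disjointness plus one containment, with no induction at all. What the paper's approach buys is that it operates directly on the score recursion of Algorithm~\ref{alg:Probe} without appealing to the probabilistic semantics; this is more mechanical but is closer in spirit to the proof it later gives for Pruning Rule~2, where the score-level induction is the natural tool because probability-mass disjointness no longer applies cleanly. Your remark about integer rounding of $\ell_t$ is a nice touch the paper omits.
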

\begin{proof}
Let $W(u) = (u_1, \ldots, u_\ell)$ denote the
original $\sqrt{c}$-walk, and  $W(u, \ell_t) = (u_1, \ldots,
u_{\ell_t})$ denote the $\sqrt{c}$-walk after truncation.
For each vertex $x \in V$,  $ i = 2, \ldots, \ell$ and $j = 0,
\ldots,i-1$, let
$Score_i(x, j, \varepsilon_t)$ and $Score_i(x, j)$ denote the score computed
by \textsf{PROBE}($W(u,i)$) after the $j$-th iteration, with and
without truncation, respectively.
 We inductively prove that
1) For each $d = 0, \ldots, \ell-\ell_t$, and for any $x\in V$
$$0 \le \sum_{j=0}^d Score_{\ell-d+j}(x, j) - \sum_{j=0}^d
Score_{\ell-d+j}(x, j, \varepsilon_t) \le 1.$$
2) For each $d = \ell-\ell_t, \ldots, \ell$, and for any $x\in V$
$$0 \le \sum_{j=0}^d Score_{\ell-d+j}(x, j) - \sum_{j=0}^d
Score_{\ell-d+j}(x, j, \varepsilon_t) \le
(\sqrt{c})^{d-\ell+\ell_t}.$$

For the base case, notice that there is only one vertex $v_\ell$, and
$$0 \le Score_{\ell}(v_\ell, 0) -
Score_{\ell}(v_\ell, 0, \varepsilon_t) \le 1.$$
Suppose 1) is true for $d$. For $d+1 \le \ell-\ell_t$, each $x \in V$,
$x \neq u_{\ell-d-1}$ satisfies that
$\sum_{j=0}^d Score_{\ell-d+j}(x, j, \varepsilon_t)=0$ and each
$$\sum_{j=0}^{d+1} Score_{\ell-d+1+j}(x, j) = \sum_{y \in I(x)}
\sum_{j}^d Score_{\ell-d+j}(y, j) \cdot {\sqrt{c} \over |I(x)|} .$$
By the induction hypothesis, we have $\sum_{j}^d Score_{\ell-d+j}(y,
j) \le 1$, and thus
$$\sum_{j=0}^{d+1} Score_{\ell-d+1+j}(x, j) \le \sqrt{c}|I(x)| /
|I(x)| = \sqrt{c} \le 1.$$
We also note $\sum_{j=0}^d Score_{\ell-d+j}(u_{\ell-d-1}, j,
\varepsilon_t)=0$ since $u_{\ell-d-1}$ is truncated,  and $\sum_{j=0}^d Score_{\ell-d+j}(u_{\ell-d-1},
j)=1$ since we do not add scores to $u_{\ell-d-1}$ at this step. Thus
claim 1 follows.

Similarly, for claim 2, we use induction proof.  The base case $d =
\ell-\ell_t$ follows from claim 1.  Assume the claim holds for $d$,
and consider $d+1$. For each $x \in V$,
$x \neq u_{\ell-d-1}$ satisfies that
$\sum_{j=0}^d Score_{\ell-d+j}(x, j, \varepsilon_t)=0$ and each
$$\sum_{j=0}^{d+1} Score_{\ell-d+1+j}(x, j) = \sum_{y \in I(x)}
\sum_{j}^d Score_{\ell-d+j}(y, j) \cdot {\sqrt{c} \over |I(x)|} .$$
By the induction hypothesis, we have $\sum_{j}^d Score_{\ell-d+j}(y,
j) \le (\sqrt{c})^{d-\ell+\ell_t}$, and thus
$$\sum_{j=0}^{d+1} Score_{\ell-d+1+j}(x, j) \le \sqrt{c} \cdot
(\sqrt{c})^{d-\ell+\ell_t}{ |I(x)|  \over
|I(x)| }= (\sqrt{c})^{d+1-\ell+\ell_t} .$$

By  setting $d=\ell_t$ in claim 2, we have
$$0 \le \sum_{j=1}^\ell Score_{j}(x, j) - \sum_{j=1}^\ell
Score_{j}(x, j, \varepsilon_t)  \le (\sqrt{c})^{\ell_t} \le
\varepsilon_t,$$
and the Lemma follows.
\end{proof}

\subsection{Proof of Theorem~\ref{thm:randomized_error}}
\begin{lemma}
\label{lem:unbiasness_random}
Fix a reverse path $W(u, i) = ( u_1, \ldots, u_i)$ and an
node $v \in V, v \neq u$. The randomized \textsf{PROBE} algorithm
on $(u_1, \ldots, u_i)$  set $Score(v) =1$ with probability $P(v,
W(u, i))$, the first-meeting probability of $v$ with
respect to reverse path $W(u, i)$.
\end{lemma}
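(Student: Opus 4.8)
The plan is to prove by induction on the iteration index $j$ that, in the randomized \textsf{PROBE} algorithm, the marginal probability $\Pr[x \in \mathcal{H}_j]$ equals the deterministic score $Score(x,j) = P(x,(u_{i-j},\ldots,u_i))$ computed by the deterministic \textsf{PROBE} algorithm and characterized in Lemma~\ref{lem:Score}. Once this is established for $j=i-1$, the Lemma follows at once: the algorithm returns every $v\in\mathcal{H}_{i-1}$ with $Score(v)=1$, so $\Pr[Score(v)=1]=\Pr[v\in\mathcal{H}_{i-1}]=Score(v,i-1)=P(v,W(u,i))$.

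For the base case $j=0$, only $u_i$ is inserted into $\mathcal{H}_0$, deterministically, which matches $Score(u_i,0)=1$ and $Score(x,0)=0$ for $x\neq u_i$. For the inductive step I would condition on the (random) contents of $\mathcal{H}_j$ and compute the conditional probability that a node $x\neq u_{i-j-1}$ enters $\mathcal{H}_{j+1}$. The crucial point is that the two fresh random choices used in iteration $j$ --- the uniform in-neighbour $v$ drawn from $I(x)$ and the independent $\sqrt{c}$-coin --- are independent of $\mathcal{H}_j$, so
$$\Pr[x\in\mathcal{H}_{j+1}\mid\mathcal{H}_j]=\sum_{v\in I(x)}\frac{1}{|I(x)|}\,\mathbf{1}[v\in\mathcal{H}_j]\cdot\sqrt{c}.$$
Taking expectation over $\mathcal{H}_j$ and applying linearity together with the induction hypothesis $\Pr[v\in\mathcal{H}_j]=Score(v,j)$ yields $\Pr[x\in\mathcal{H}_{j+1}]=\sum_{v\in I(x)}\frac{\sqrt{c}}{|I(x)|}\,Score(v,j)$, which is precisely the deterministic recurrence of Algorithm~\ref{alg:Probe} (cf.\ Equation~\eqref{eqn:probe_Score}, up to renaming of node labels); for $x=u_{i-j-1}$ both sides vanish, since this node is explicitly skipped in both algorithms. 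This closes the induction. Note that I only ever need per-node marginal probabilities, so by linearity of expectation no joint independence among the events $\{x\in\mathcal{H}_{j+1}\}$ across distinct $x$ is required.

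The main obstacle I anticipate is justifying that the candidate-set optimization --- setting $U=\bigcup_{v\in\mathcal{H}_j}\mathcal{O}(v)$ when the out-degree sum is small, and $U=V$ otherwise --- does not alter these marginal probabilities. I would dispatch this by observing that a node $x$ can enter $\mathcal{H}_{j+1}$ only if its sampled in-neighbour lies in $\mathcal{H}_j$, which in particular forces $I(x)\cap\mathcal{H}_j\neq\emptyset$; but this is exactly the condition $x\in\bigcup_{v\in\mathcal{H}_j}\mathcal{O}(v)$. Hence any node excluded from $U$ in the restricted branch has conditional insertion probability $0$ anyway, and the displayed conditional probability holds identically in both branches, leaving the induction unaffected.
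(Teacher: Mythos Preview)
Your proposal is correct and follows essentially the same inductive argument as the paper: both prove that $\Pr[x\in\mathcal{H}_j]=P(x,(u_{i-j},\ldots,u_i))$ by induction on $j$, using the freshness of the in-neighbour sample and the $\sqrt{c}$-coin in the step. Your version is in fact slightly more careful than the paper's, since you explicitly justify why the candidate-set branching on $U$ does not affect the marginals and why only per-node marginal probabilities (not joint independence) are needed---points the paper's proof leaves implicit.
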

\begin{proof}
Recall that $P(v, W(u, i))$ is the probability that a random $\sqrt{c}$-walk $W(v) = (v_1, \ldots, v_i,
  \ldots)$  and the reverse path $W(u, i) = (u_1, \ldots, u_i)$ first
meet at $u_i = v_i$.

Similar to the proof of Lemma~\ref{lem:Score}
We prove the following claims: for $j = 0 ,\ldots, i-1$,  after the
$j-1$-th iteration in Algorithm~\ref{alg:RProbe}, a vertex
$v \in V$  is selected into $\mathcal{H}_{j}$ with probability $P(v, (u_{i-j},
\ldots, u_i))$. Note the this claim implies that after $i$ iterations, a vertex
$v \in V$  is selected into $\mathcal{H}_{i-1}$ with probability $P(v, (u_{1},
\ldots, u_i)) = P(v, W(u, i))$, and the Lemma will follow.

The proof is done by induction. Level $0$ contains a single vertex
$u_i$, and we have $u_i \in \mathcal{H}_0$ with
probability $1$. Since $P(v,(u_i)) = \Pr[v_i= u_i]=1$, the claim holds. Assume that
the claim holds for all vertices after the $(j-1)$-th iteration. After
the $j$-th iteration, consider a vertex
$v \in V$. By Algorithm~\ref{alg:RProbe}, $v$ is selected into $\mathcal{H}_{j+1}$ if
and only if there
exists an $x \in V$, such that 1) $x \in \mathcal{H}_j$; 2) $(x, v) \in I(v)$ is
selected in $I(v)$; 3) $v$ is chosen with probability
$\sqrt{c}$.
By the induction hypothesis, we
know each $x \in I(v)$ is selected into $\mathcal{H}_j$
with probability $P(x, (u_{i-j}, \ldots, u_i))$. It follows that $v$ is selected
into $\mathcal{H}_{j+1}$ with probability
\begin{align}
\sum_{x \in I(v)}{\sqrt{c} \over |I|} \cdot  P(x, (u_{i-j}, \ldots,
u_i)) = P(x, (u_{i-j-1}, \ldots, u_i)). \label{eqn:score_random}
\end{align}
where equation~\eqref{eqn:score_random} follows from the proof of
Lemma~\ref{lem:Score}. Therefore, the claim is true,
and  after $i$ iterations, a vertex
$v \in V$  is selected into $\mathcal{H}_{i-1}$ with probability $P(v, (u_{1},
\ldots, u_i)) = P(v, W(u, i))$.
\end{proof}

By Lemma~\ref{lem:unbiasness_random}, we can use the randomized \textsf{PROBE} algorithm in
Algorithm~\ref{alg:single_source}, which would give us an algorithm
that runs in $O(\frac{n}{{\varepsilon_a}^2} \log
\frac{n}{\delta})$ time while retaining the error guarantee in
Theorem~\ref{thm:EScoreAbsError}. However, as we shall see, we can
combine the randomized and deterministic \textsf{PROBE} algorithms to
achieve both worst-case and real-world efficiency.
Next we show that the error introduced by pruning rule 2 is bounded
by $\varepsilon_p \ell_t$.

\begin{lemma}
For each vertex $x \in V$,  $ i = 2, \ldots, \ell$, let
$Score_i(x, \varepsilon_p)$ and $Score_i(x)$ denote the score computed
by \textsf{PROBE}($W(u,i)$), with and
without pruning, respectively. We have
$$0 \le Score_i(x)  - Score_i(x, \varepsilon_p) \le
\varepsilon_p .$$
\end{lemma}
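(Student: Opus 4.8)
The plan is to compare the pruned and unpruned runs of \textsf{PROBE}$(W(u,i))$ level by level, and to account for the total error at the final node as a sum of ``leaked'' contributions, one per pruning event. Write $Score_i(v,j)$ and $Score_i(v,j,\varepsilon_p)$ for the scores of $v$ in $\mathcal{H}_j$ computed without and with pruning rule~2, so that the two quantities in the statement are the values at $j=i-1$. The nonnegativity $Score_i(v)-Score_i(v,\varepsilon_p)\ge 0$ I would obtain immediately by induction on $j$: the pruned recursion feeds each level from a \emph{subset} of the (nonnegative) terms $\tfrac{\sqrt{c}}{|I(v)|}Score(x,j)$ used by the unpruned recursion, so $0\le Score_i(v,j,\varepsilon_p)\le Score_i(v,j)$ is preserved at every level.

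For the upper bound I would first fix the bookkeeping. By the duality established in Lemma~\ref{lem:Score}, each unpruned root-to-$v$ path in the probing tree corresponds to a $\sqrt{c}$-walk from $v$ that reaches $u_i$ in $i-1$ steps, and the pruned run discards exactly those paths that pass through a node at the moment it is deleted. I would attribute each discarded path to the deleted node it meets \emph{closest to $u_i$} (equivalently, the first node deleted along the probe, at the smallest level $j$). This is a genuine partition of the lost paths, with the feature that the $u_i$-to-$x$ portion of the path avoids all earlier-deleted nodes, so its weight is exactly the surviving score $s_{x,j}:=Score_i(x,j,\varepsilon_p)$ at the moment $x$ was pruned. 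Hence the error decomposes as
\begin{equation*}
Score_i(v)-Score_i(v,\varepsilon_p)=\sum_{(x,j)}s_{x,j}\cdot\rho(v,x,i-1-j),
\end{equation*}
where $\rho(v,x,d)$ is the weight with which a unit mass at $x$ propagates to $v$ across $d$ further levels, i.e.\ the probability that a $\sqrt{c}$-walk from $v$ sits at $x$ after $d$ steps.

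The next step is to insert the pruning threshold. Writing $\rho(v,x,d)=(\sqrt{c})^{d}\,\hat\rho(v,x,d)$, where $\hat\rho$ is the transition probability of the \emph{non-stopping} uniform backward walk, the deletion test $s_{x,j}\,(\sqrt{c})^{\,i-j-1}\le\varepsilon_p$ gives, for each pruning event,
\begin{equation*}
s_{x,j}\cdot\rho(v,x,i-1-j)=\bigl(s_{x,j}(\sqrt{c})^{\,i-1-j}\bigr)\,\hat\rho(v,x,i-1-j)\le\varepsilon_p\,\hat\rho(v,x,i-1-j),
\end{equation*}
so that summing over all pruning events reduces the claim to showing that the reach-probabilities $\hat\rho(v,x,i-1-j)$ of the plain walk from $v$ to the deleted nodes sum to at most $1$.

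Controlling $\sum_{(x,j)}\hat\rho(v,x,i-1-j)$ is the part I expect to be the main obstacle, and it is precisely the ``probabilities of the walks from $v$ sum to at most $1$'' insight flagged in the informal discussion. Within a single level $j$ the deleted nodes are distinct, so $\sum_{x}\hat\rho(v,x,i-1-j)\le 1$ because $\hat\rho(v,\cdot,d)$ is a probability distribution over the position of the non-stopping walk after $d$ steps. The delicate point is to combine the levels \emph{without} accumulating a factor equal to the number of levels: here I would lean on the fact that every lost walk is charged to a single (closest-to-$u_i$) deleted node, so the charging events are disjoint, and argue that the associated reach-masses therefore still sum to at most the total walk mass $1$. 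Making this disjointness match the per-event bound $\varepsilon_p\,\hat\rho$, rather than losing a factor of $i-1$ to a crude level-by-level union, is the crux; once it is in place the bound $Score_i(v)-Score_i(v,\varepsilon_p)\le\varepsilon_p$ follows, and summing over the at most $\ell_t$ probes of a single trial yields the claimed $\varepsilon_p\ell_t$ error for pruning rule~2.
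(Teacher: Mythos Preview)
Your decomposition of the error as $\sum_{(x,j)} s_{x,j}\,\rho(v,x,i-1-j)$ is correct, and so is the per-term bound $s_{x,j}\,\rho(v,x,i-1-j)\le\varepsilon_p\,\hat\rho(v,x,i-1-j)$. The gap is exactly where you flag it: the inequality $\sum_{(x,j)}\hat\rho(v,x,i-1-j)\le 1$ is \emph{false} in general, and the disjointness of the \emph{charging} does not transfer to disjointness of the \emph{reach events}. Concretely, take $|I(v)|=|I(v_2)|=1$ so that the non-stopping backward walk from $v$ is deterministic, $v\to v_2\to v_3\to\cdots$. If both $(v_2,i-2)$ and $(v_3,i-3)$ are pruned, then $\hat\rho(v,v_2,1)=\hat\rho(v,v_3,2)=1$ and the sum is already $2$. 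What saves the \emph{actual} error in such situations is that $s_{v_2,i-2}$ is computed \emph{after} $v_3$ was removed and is therefore much smaller than the uniform threshold you substitute; once you replace every $s_{x,j}$ by $\varepsilon_p/(\sqrt c)^{\,i-j-1}$ you have discarded precisely the coupling between the events that made the charges ``disjoint''. I do not see how to recover the bound $1$ on $\sum\hat\rho$ from your attribution scheme, and without it the argument only yields $(i-1)\varepsilon_p$.

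The paper's route avoids paths and charging entirely. It carries a single inductive invariant across the levels of the probe: for every $j$ and every node $x$,
\[
0 \;\le\; Score_i(x,j) - Score_i(x,j,\varepsilon_p) \;\le\; \frac{\varepsilon_p}{(\sqrt{c})^{\,i-j-1}}\,.
\]
The point is that one step of the recursion $Score_i(\cdot,j+1)=\sum_{y\in I(\cdot)}\frac{\sqrt c}{|I(\cdot)|}\,Score_i(y,j)$ is a convex average followed by multiplication by $\sqrt c$, so it contracts any uniform level-$j$ error bound by a factor $\sqrt c$; meanwhile the pruning threshold at level $j$ was \emph{chosen} to be $\varepsilon_p/(\sqrt c)^{\,i-j-1}$, exactly matching the invariant. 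Setting $j=i-1$ gives the claimed $\varepsilon_p$ directly. This is short and completely sidesteps the summability obstacle you hit; I would abandon the per-event charging and run this level-wise induction instead.
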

\begin{proof}
For each vertex $x \in V$,  $ i = 2, \ldots, \ell$ and $j = 0,
\ldots,i-1$, let
$Score_i(x, j, \varepsilon_p)$ and $Score_i(x, j)$ denote the score computed
by \textsf{PROBE}($W(u,i)$) after the $j$-th iteration, with and
without pruning, respectively. We inductively prove that for each $j = 0,
\ldots,i-1$,
$$0 \le Score_i(x, j)  - Score_i(x, j, \varepsilon_p) \le
\varepsilon_p / (\sqrt{c})^{i-j-1}.$$
More precisely, for the base case $j=0$, $\mathcal{H}_0$ contains a
single vertex $u_i$, and $Score_i(u_i, 0) =  Score_i(u_i, 0,
\varepsilon_p) =1$. Assuming the claim holds for $j$. For $j+1$,
notice that for each vertex $x \in V$,
$$Score_i(x, j+1)  = \sum_{y \in
  I(x)}Score_i(y, j) \cdot {\sqrt{c} \over |I|},$$
and by the induction hypothesis, we have
$$0 \le Score_i(y, j) - Score_i(y,
j, \varepsilon_p) \le \varepsilon_p /(\sqrt{c})^{i-j-1}.$$
It follows that
\begin{align*}
Score_i(x, j+1) - Score_i(x, j+1, \varepsilon_p) &\le {
\varepsilon_p \over (\sqrt{c})^{i-j-1}} \cdot \sqrt{c} \\
&= { \varepsilon_p \over (\sqrt{c})^{i-j-2}}.
\end{align*}
By setting $j=i-1$, we have
$$0 \le Score_i(y,i-1) - Score_i(y,
i-1, \varepsilon_p) \le \varepsilon_p.$$
Since $Score_i(y) =Score_i(y,i-1)$ and  $Score_i(y) = Score_i(y,
i-1, \varepsilon_p)$, the Lemma follows.
\end{proof}

\begin{proof}[of Theorem\ref{thm:pruning}]
We only need to show that with probability $1-\delta/n$, the error
contributed by pruning rule 2 is at most ${1+\varepsilon \over 1-\sqrt{c}}
  \cdot \varepsilon_p$.
Consider the $k$-th $\sqrt{c}$-walk $W_k(u) = (u_1, \ldots, u_{\ell_k})$ of length
$\ell_k-1$, summing up all $i=2, \ldots,\ell$ follows that
$$0 \le \sum_{i=1}^{\ell_k}Score_i(y,i-1) - \sum_{i=1}^{\ell_k} Score_i(y,
i-1, \varepsilon_p) \le \varepsilon_p \ell,$$
which indicates that $0 \le \s_k(u,v) -\s_k(u,v, \varepsilon_p) \le
\varepsilon_p \ell_k$. Recall that let $\s_k(u,v,
\varepsilon_p) $ and $\s(u, v)$ denote the final estimators with and without pruning,
respectively. We have $\s(u,v,
\varepsilon_p)  = {1\over n_r} \sum_{k=1}^{n_r} \s_k(u,v,
\varepsilon_p) $, and $\s(u,v)  = {1\over n_r} \sum_{k=1}^{n_r} \s_k(u,v) $.
It follows that
\begin{align*}
\s(u,v) - \s(u,v, \varepsilon_p)  &= {1\over n_r} \sum_{k=1}^{n_r} \s_k(u,v) - {1\over n_r} \sum_{k=1}^{n_r} \s_k(u,v,
\varepsilon_p) \\
&= {\varepsilon_p\over n_r} \sum_{k=1}^{n_r} \ell_k.
\end{align*}
We note that the $\ell_k$'s are i.i.d. random variable with
expectation at most $\mu = {1\over 1-\sqrt{c}}$. By Chernoff bound, we
have
\begin{equation*}
\begin{aligned}
\Pr[|\sum_{k=1}^{n_r} \ell_k -\mu n_r | \ge \varepsilon_s \mu  n_r]
&\le \exp\left(-  {\varepsilon_s^2 \mu n_r   \over 3} \right)\\
& \le
\exp\left(-  c \mu \log {n \over \delta}  \right).
\end{aligned}
\end{equation*}
Note that $\mu = {1\over 1-\sqrt{c}} \ge { 1 \over c}$ for $c \ge 0.5$, we have
$$\Pr\left [{1\over n_r }\sum_{k1}^{n_r} \ell_k \le \mu  +
  \varepsilon_s \mu = {1+\varepsilon \over 1-\sqrt{c}} \right]  \ge 1- {\delta \over n},$$
and thus
$$\Pr\left [ \forall v \in V, \s(u,v) - \s(u,v, \varepsilon_p) \le  {1+\varepsilon \over 1-\sqrt{c}}
  \cdot \varepsilon_p\right]  \ge 1- {\delta \over n},$$
and the Theorem follows.
\end{proof}

\end{sloppy}
\end{document}